\newif\ifdraft \draftfalse 
\definecolor{darkblue}{rgb}{0,0,.5}
\DeclareMathOperator{\polylog}{polylog}
\newcommand{\reals}{\mathbb{R}} 
\newcommand{\R}{\mathbb{R}}
\newcommand{\N}{\mathbb{N}}
\newtheorem{theorem}{Theorem}
\newtheorem{lemma}{Lemma}[section]
\newtheorem{definition}{Definition}[section]
\newtheorem{assumption}{Assumption}
\begin{document}

\title{Differentially Private Histograms under Continual Observation: Streaming Selection into the Unknown}
\author{Adrian Rivera Cardoso and Ryan Rogers}
\affil{Data Science Applied Research, LinkedIn}

\maketitle 

\begin{abstract}
We generalize the continuous observation privacy setting from \citet{DworkNaPiRo10} and \citet{ChanShSo11} by allowing each event in a stream to be a subset of some (possibly unknown) universe of items. We design differentially private (DP) algorithms for histograms in several settings, including top-$k$ selection, with privacy loss that scales with $\polylog(T)$, where $T$ is the maximum length of the input stream.  We present a meta-algorithm that can use existing one-shot top-$k$ private algorithms as a subroutine to continuously release DP histograms from a stream. Further, we present more practical DP algorithms for two settings: 1) continuously releasing the top-$k$ counts from a histogram over a known domain when an event can consist of an arbitrary number of items, and 2) continuously releasing histograms over an unknown domain when an event has a limited number of items. 
\end{abstract}

\section{Introduction \label{sect:intro}} 

Providing real-time statistics on streaming data is a common task in data analytics. For example, one may want to provide a running count on the number of people that have purchased a particular drug at a pharmacy.  This data can be very useful for tracking and identifying local epidemics in a given region. However, this particular data is very sensitive so privacy techniques should be applied to protect those who are purchasing medications.  Differential privacy (DP) has emerged as the go to method in industry to provide privacy for aggregate results. In this work, we study the problem of continually releasing aggregate counts over a stream of incoming data subject to DP.

Let $\cU$  be a set of items and $\omega_{1:T} = \omega_1, \cdots, \omega_T$ be a stream of $T$ events,\footnote{In fact, $T$ need not be the actual length of the stream, $T$ could be an upper bound. This assumption is common in the streaming DP literature, see \cite{ChanShSo11,SmithTh13}. }  e.g. pharmacy purchases, where $\omega_t \subseteq \cU$.\footnote{Our setting easily extends to each event consisting of items in $\cU$ and counts of each item from that event.  We can accommodate for this more general setting by scaling the noise by the maximum amount any item can change in an event, i.e. the $\ell_\infty$-sensitivity} Our goal is to release, at every time $t$, the counts of all items in the substream $\omega_{1:t}$, or the most frequent counts, subject to DP. This setting is referred to as the continual observation model of DP and originated in works from \citet{DworkNaPiRo10} and \citet{ChanShSo11} where it is assumed that $\cU$ is known and $\vert \omega_t \vert \leq 1$. In this paper we study settings where $\cU$ is either known (Known Domain) or unknown (Unknown Domain), and where a bound $\Delta_0$ on $|\omega_t|$ is known (Restricted $\ell_0$-sensitivity) or where it can be as large as $\vert \cU \vert = d$ (Unrestricted $\ell_0$-sensitivity).  In the unrestricted $\ell_0$-sensitivity setting, we only want to return the top-$k$ counts, rather than the full set of counts and have privacy loss increase with $k$ or $\sqrt{k}$.  Simply applying restricted $\ell_0$-sensitivity algorithms in the unrestricted $\ell_0$-sensitivity setting would require setting $\Delta_0=d$, so that privacy loss increases with $d$ or $\sqrt{d}$.

The guarantee of a DP algorithm is that the output distributions for two similar input streams will be similar. As is common in the continual observation DP literature, we do not restrict the number of events $\omega_t$ that a user can impact, thus we provide \emph{event level} privacy guarantees, as opposed to \emph{user level} privacy. 
We refer the reader to \cite{KiferSoRoThZh20} for an excellent overview on the \emph{granularity} of privacy, which describes user and event level privacy as well as models of privacy between these extremes. 
In any of the settings we consider, we could apply the corresponding one-shot DP algorithms presented in Table~\ref{table:tasks} on the data available at time $t$. However, releasing a total of $T$ answers would cause the total privacy loss to scale as $O(\sqrt{T})$ (using advanced composition privacy loss bounds). The goal of this work is to design algorithms for all settings in Table~\ref{table:tasks} and have the total privacy loss scale as $O(\polylog(T))$, or equivalently have the noise that we include for DP scale with $O(\polylog(T))$ for a constant privacy loss.

\begin{table*}[htbp]
    \centering\setcellgapes{4pt}\makegapedcells
    \begin{tabular}{ |c|c|c| } 
     \hline
     & Restricted $\ell_0$-sensitivity & Unrestricted $\ell_0$-sensitivity \\ 
     \hline
     \shortstack{Known \\ Domain} & $\knownGauss{}$ \cite{DworkKeMcMiNa06} & $\knownGumb{}$ \cite{McSherryTa07} \\ 
     \hline
    \shortstack{Unknown \\ Domain} & $\texttt{LimitDom}_\lap$ \cite{DurfeeRo19} / $\unkGauss{}$  & $\unkGumb{}$ \cite{DurfeeRo19} \\ 
     \hline
    \end{tabular}
    \caption{DP algorithms for various data analytics tasks in the one-shot analytics setting.\label{table:tasks}}
\end{table*}

Existing DP algorithms for the continual observation setting include the celebrated Binary Mechanism \cite{ChanShSo11}, which can be applied to the known domain and restricted $\ell_0$-sensitivity setting.  To our knowledge, we are the first to consider the other three quadrants of Table~\ref{table:tasks} in the continual observation setting.  In particular, we are the first to study continually releasing the item with the maximum count and its count at each round subject to DP, despite the one-shot DP algorithm being the classical Exponential Mechanism \cite{McSherryTa07}.  Other works have considered the problem of continually returning the top-$k$ \cite{ChanShSo11} and heavy hitters in a stream \cite{ChanLiShiXu12}, \cite{MirMuNiWr11}.  The main difference in our setting is that a single event consists of multiple distinct items, while earlier work has events with at most one item, which falls under the restricted $\ell_0$-sensitivity with known domain setting.  Our setting provides stronger levels of privacy because a single event in a stream can affect the count of multiple items at once.  In the pharmacy example, an event would be a purchase occurring and the items would be the drugs that were purchased, which need not be a single drug.  Note that \citet{MirMuNiWr11} considers a more restrictive privacy model, referred to as \emph{pan-privacy} from \citet{DworkNaPiRoYe10}, that includes security considerations so that privacy is preserved even if an adversary can access internal states of the algorithm. 

We point out that \citet{DworkNaPiRo10} provides a general transformation from one-shot algorithms to those with privacy guarantees under continual observation.  However, this general transformation requires the one-shot algorithm to return a scalar, which is then compared with the algorithm's outcomes at later rounds and only displays the new outcome if it is significantly different than the previous result, otherwise it will show the old result.  Our one-shot algorithms return a histogram of counts with labels that can differ in each round, so it is not clear what scalar function to assign to determine when a new outcome should be used.  We will use a similar idea to this general transformation in Section~\ref{sect:sparseGumb} when continually returning the top-$k$ from a stream of events and only updating results if there is a count that should be in the top-$k$ but is not at a current round.  Our approach allows for the privacy loss to increase with the number of times the top-$k$ should be updated, rather than when the counts from the previous round's top-$k$ need to be updated due to counts increasing but the top-$k$ remaining unchanged as would be the case by using the approach in \cite{DworkNaPiRo10} without returning labels.   

We also design algorithms that can be used in scalable and distributed real-time analytics platforms where low latency is crucial, so retrieving and passing the algorithm a substream $\omega_{1:t}$ at each time step $t$ is not feasible. Instead, algorithms in this setting only have access to the histogram at time $t$. An example of such platform is described in detail in  \cite{RogersSuPeDuLeKaSaAh20}. The Binary Mechanism can be implemented in this setting, since we only need access to the true counts over all items at each round $t$, rather than the full sequence of events, as long as the algorithm knows the length of the stream $t$ and the noise it has used in previous rounds, which can be replicated via seeding. For the unrestricted $\ell_0$-sensitivity with known domain setting, we design an algorithm that combines the Binary Mechanism, the Exponential Mechanism, and the Sparse Vector technique \cite{DworkNaReRoVa09} to continually release the top-$k$.  We also show that the more practical version can closely match the error from the less practical version with access to the full event stream.  In the case when each event consists of at most $\Delta_0$ items from an unknown set (restricted $\ell_0$-sensitivity with unknown domain), we develop an algorithm that can be viewed as a combination of $\unkGauss{}$ (a variant of $\texttt{LimitDom}_\lap$ \cite{DurfeeRo19} with an improved privacy guarantee) for one-shot analytics and the Binary Mechanism \cite{ChanShSo11}.  

We now summarize our contributions. First, we develop a general way to apply existing one shot DP top-$k$ algorithms for the continual observation setting. Second, we design more practical continual observation DP algorithms for the restricted $\ell_0$-sensitivity with unknown domain ($\unkBase{}$) and for the unrestricted $\ell_0$-sensitivity with known domain ($\sparseGumb{}$), with utility results for both. Third, we present a unified argument for analyzing both $\unkBase{}$ and $\unkGauss{}$ that improves on prior analysis of $\texttt{LimitDom}_\lap$ from \cite{DurfeeRo19}, which might be of independent interest.

\section{Preliminaries} 

Since we will provide event level privacy guarantees we define neighboring histograms as follows. Two streams $\omega_{1:T}$ and $\omega_{1:T}'$ are neighboring if for some $t \in [T] := \{1,\cdots,T\}$, $\omega_t \neq \omega_{t}'$ where $\omega_t = \emptyset$ or  $\omega_t' = \emptyset$ but $\omega_{t'} = \omega_{t'}'$ for all $t' \in [T]$ such that $t\neq t'$. We will denote a histogram $\bbh = \{ (h^u, u) : u \in \cU, h^u \in \N \}$ to include counts and labels in $\cU$. Given stream $\omega_{1:t}$ we define its histogram over $\cU$ as $
\bbh(\omega_{1:t}; \cU) \defeq \left\{ \left( h_t^u \defeq \sum_{\ell = 1}^t \1{u \in \omega_\ell},u \right) : u \in \cU \right\}$.  We will refer to $\epsilon$ as the privacy loss parameter in the definition of DP.

\begin{definition}[\citet{DworkMcNiSm06}, \citet{DworkKeMcMiNa06}]
A randomized algorithm $M: \cX \to \cY$ that maps input set $\cX$ to some arbitrary outcome set $\cY$ is $(\epsilon,\delta)$-DP if for any neighboring datasets $x,x'$ and outcome sets $S \subseteq \cY$,
$
\Pr\left[ M(x) \in S\right] \leq e^\epsilon \Pr\left[ M(x') \in S\right] + \delta.
$
When $\delta = 0$, we typically say that $M$ is $\epsilon$-DP or \emph{pure} DP.  
\end{definition}

The analysis of our algorithms will typically use a variant of DP called \emph{zero-mean Concentrated DP} (zCDP) from \citet{BunSt16}, which provides tighter composition bounds than traditional DP analysis.
This variant of DP is based on the R\'{e}nyi divergence of order $\alpha >1$ between two distributions $P$ and $Q$ over the same domain, denoted as $D_\alpha(P || Q)$ where
\[
D_{\alpha}(P||Q) \defeq \frac{1}{\alpha - 1} \log \E_{z \sim P} \left[ \left( \frac{P(z)}{Q(z)} \right)^{\alpha - 1}\right].
\]

\begin{definition}[Zero-mean Concentrated Differential Privacy]
A randomized algorithm $M: \cX \to \cY$ is $\delta$-approximately $\rho$-zCDP if for any neighbors $x, x' \in \cX$, there exists events $E$ and $E'$, such that $\Pr[E], \Pr[E'] \geq 1-\delta$ and for every $\alpha > 1$ we have the following bound in terms of the R\'enyi divergence $D_\alpha(\cdot || \cdot)$ of order $\alpha$
\begin{align*}
&D_\alpha(M(x)|_E || M(x')|_{E'} ) \leq \alpha \rho\text{, and}\\
&D_\alpha(M(x')|_{E'} || M(x)|_{E} ) \leq \alpha \rho.
\end{align*}
where $M(x)|_E$ is the distribution of $M(x)$ conditioned on event $E$ and similarly for $M(x') |_{E'}$. If $\delta = 0$, then we say $M$ is $\rho$-zCDP.
\label{defn:zCDP}
\end{definition}

A useful property of zCDP is that composing multiple zCDP mechanisms results in another zCDP mechanism where the privacy parameters add up.

\begin{lemma}[\citet{BunSt16}]\label{lem:zCDP_Comp}
Let $M_1:\cX \to \cY$ be $\delta_1$-approximate $\rho_1$-zCDP and $M_2:\cX\times \cY \to \cY'$ be $\delta_2$-approximate $\rho_2$-zCDP in its first argument, i.e. $M_2(\cdot, y)$ is $\delta_2$-approximate $\rho_2$-zCDP for all $y \in \cY$.  Then the mechanism $M: \cX \to \cY'$ where $M(\cdot) = M_2(\cdot, M_1(\cdot))$ is $(\delta_1 + \delta_2)$-approximate $(\rho_1+\rho_2)$-zCDP.
\end{lemma}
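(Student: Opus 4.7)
The plan is to establish $(\delta_1+\delta_2)$-approximate $(\rho_1+\rho_2)$-zCDP for the \emph{joint} mechanism $\tilde M(x) \defeq (M_1(x), M_2(x, M_1(x)))$ with values in $\cY \times \cY'$ and then transfer the guarantee to $M$ by post-processing under the projection $(y,z) \mapsto z$. This is the standard adaptive-composition recipe for zCDP, lifted to the approximate setting.

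Fix neighbors $x, x'$ and a R\'enyi order $\alpha > 1$. Using the hypothesis on $M_1$, pick events $E_1, E_1' \subseteq \cY$ with $\Pr[M_1(x)\in E_1], \Pr[M_1(x')\in E_1'] \geq 1-\delta_1$ and both conditional R\'enyi divergences bounded by $\alpha\rho_1$. For each $y \in \cY$, using the hypothesis on $M_2(\cdot,y)$, pick events $E_{2,y}, E_{2,y}' \subseteq \cY'$ analogously at level $(\delta_2, \rho_2)$. Define the joint good sets
\[
F \defeq \{(y,z) : y \in E_1,\, z \in E_{2,y}\}, \qquad F' \defeq \{(y,z) : y \in E_1',\, z \in E_{2,y}'\}.
\]
By the tower rule and a union bound, $\Pr[\tilde M(x) \notin F] \leq \Pr[M_1(x) \notin E_1] + \E_{y \sim M_1(x)}[\Pr[M_2(x,y) \notin E_{2,y}]] \leq \delta_1 + \delta_2$, and symmetrically for $F'$.

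For the divergence, factor $\tilde M(x)|_F$ (similarly $\tilde M(x')|_{F'}$) as ``draw $y$ from the conditional $y$-marginal, then draw $z$ from $M_2(x,y)|_{E_{2,y}}$'' and invoke the R\'enyi chain rule
\[
D_\alpha(P_{Y,Z} || Q_{Y,Z}) \leq D_\alpha(P_Y || Q_Y) + \sup_{y} D_\alpha(P_{Z|Y=y} || Q_{Z|Y=y}).
\]
The per-$y$ conditional divergence is at most $\alpha\rho_2$ by construction of $E_{2,y}, E_{2,y}'$; the marginal divergence will be at most $\alpha\rho_1$ once we verify that the $y$-marginal of $\tilde M(x)|_F$ matches $M_1(x)|_{E_1}$. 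Summing yields $D_\alpha(\tilde M(x)|_F || \tilde M(x')|_{F'}) \leq \alpha(\rho_1+\rho_2)$, with the reverse direction symmetric, and post-processing closes the argument.

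The main obstacle is precisely that last sticking point: the $y$-marginal of $\tilde M(x)|_F$ is $M_1(x)|_{E_1}$ \emph{tilted} by $y \mapsto \Pr[M_2(x,y) \in E_{2,y}]$, so the chain rule does not literally produce the $\alpha\rho_1$ bound without additional care. Two routes resolve this. (i) Pass to the equivalent coupling formulation of approximate zCDP (there exist $\rho$-zCDP distributions at total variation distance at most $\delta$ from $M(x), M(x')$): maximally couple $M_1$ with its $\rho_1$-zCDP realization $N_1$, and for each $y$ maximally couple $M_2(\cdot,y)$ with its $\rho_2$-zCDP realization $N_{2,y}$; the composite is $(\rho_1+\rho_2)$-zCDP by the exact zCDP chain rule, at TV distance at most $\delta_1+\delta_2$ from $M$ by union bound. (ii) Remain in the event-based formulation but refine $E_1$ (and $E_1'$) by intersecting with $\{y : \Pr[M_2(x,y) \in E_{2,y}] \geq 1-\delta_2\}$ (and the analog for $x'$), which loses at most an extra $\delta_2$ by Markov's inequality and makes the tilt a bounded-away-from-zero factor, so the chain-rule step applies to the now cleanly factorized conditional.
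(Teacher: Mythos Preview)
The paper does not supply its own proof of this lemma; it is quoted from \citet{BunSt16} as a black-box composition result for approximate zCDP, so there is no in-paper argument to compare against.

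Your sketch follows the standard adaptive-composition template and correctly isolates the one nontrivial point: after conditioning the joint mechanism on $F$, the $y$-marginal is not $M_1(x)|_{E_1}$ but a version tilted by $y \mapsto \Pr[M_2(x,y)\in E_{2,y}]$, so the R\'enyi chain rule does not directly hand you the $\alpha\rho_1$ term. Neither of your two fixes closes this as written. In route~(ii), the set $\{y:\Pr[M_2(x,y)\in E_{2,y}]\geq 1-\delta_2\}$ is all of $\cY$ by hypothesis, so the ``refinement'' is vacuous; the tilt factors are indeed confined to $[1-\delta_2,1]$, but a bounded multiplicative tilt does not make the conditional ``cleanly factorize'' and leaves a residual additive term of order $\log\tfrac{1}{1-\delta_2}$ in the R\'enyi bound that is not of the form $\alpha\cdot(\text{const})$ and hence is not absorbed into $\rho_1+\rho_2$. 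In route~(i), the TV/coupling formulation you invoke is implied by the event-based Definition~\ref{defn:zCDP} (take $P=M(x)|_E$), but the converse is not immediate: a distribution within TV distance $\delta$ of $M(x)$ need not arise as a conditional $M(x)|_E$ for any event $E$, so composing in the TV picture and then returning to the event-based conclusion requires an argument you have not supplied. The overall architecture is right, and the obstacle you flag is genuine; what is missing is a clean way to absorb or avoid the tilt, for which one should consult the original proof in \citet{BunSt16}.
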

We will state our privacy guarantees in terms of zCDP or DP.  We can then convert zCDP to DP and back with the following result.
\begin{lemma}[\citet{BunSt16}]
  If $M$ is $(\epsilon,\delta)$-DP then it is $\delta$-approximate $\epsilon^2/2$-zCDP. If $M$ is $\delta$-approximate $\rho$-zCDP then $M$ is also $(\epsilon(\rho,\delta'),\delta+\delta')$-DP where 
\begin{equation}
\epsilon(\rho,\delta) \defeq \rho + 2\sqrt{\rho \ln(1/\delta')}.
\label{eq:epsilon}
\end{equation}
\end{lemma}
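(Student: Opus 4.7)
The plan is to handle the two implications separately, following the standard Bun--Steinke style of argument based on the privacy loss random variable
\[
Z(y) \defeq \log\!\left(\frac{\Pr[M(x)=y]}{\Pr[M(x')=y]}\right),
\]
together with the observation that $\delta$-approximate guarantees can be reduced to pure guarantees after conditioning on a pair of events of probability at least $1-\delta$.

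\textbf{From $(\epsilon,\delta)$-DP to $\delta$-approximate $(\epsilon^2/2)$-zCDP.} First I would invoke the standard decomposition of approximate DP: for neighbors $x,x'$ there exist events $E,E'$ with $\Pr[E],\Pr[E']\ge 1-\delta$ such that $M(x)\mid_E$ and $M(x')\mid_{E'}$ are $\epsilon$-indistinguishable pointwise, i.e.\ their density ratio lies in $[e^{-\epsilon},e^{\epsilon}]$. It then suffices to show that any two distributions $P,Q$ whose pointwise log-ratio is bounded in absolute value by $\epsilon$ satisfy $D_\alpha(P\|Q)\le \alpha\epsilon^2/2$ for every $\alpha>1$. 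This is a direct calculation: write $\log(P/Q)$ as a bounded random variable with mean $\ge 0$ under $P$, apply Hoeffding's lemma to its moment generating function under $Q$ (using $\E_Q[P/Q]=1$ to center), and rearrange to obtain the $\alpha\rho$ bound with $\rho=\epsilon^2/2$. Plugging this into Definition~\ref{defn:zCDP} yields the first claim.

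\textbf{From $\delta$-approximate $\rho$-zCDP to $(\epsilon(\rho,\delta'),\delta+\delta')$-DP.} Fix neighboring $x,x'$ and the events $E,E'$ from the zCDP definition. The main step is to convert the Rényi divergence bound into a tail bound on the privacy loss. By Markov's inequality applied to $e^{(\alpha-1)Z}$ under $M(x)\mid_E$,
\[
\Pr\!\left[Z > t\right] \;\le\; e^{-(\alpha-1)t}\,\E\!\left[e^{(\alpha-1)Z}\right] \;\le\; e^{(\alpha-1)(\alpha\rho - t)},
\]
using $\E[e^{(\alpha-1)Z}] = e^{(\alpha-1)D_\alpha(M(x)|_E\|M(x')|_{E'})}\le e^{(\alpha-1)\alpha\rho}$. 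Minimizing over $\alpha>1$ by setting $\alpha = (t+\rho)/(2\rho)$ gives the tail bound $\Pr[Z>t]\le \exp(-(t-\rho)^2/(4\rho))$, which equals $\delta'$ exactly when $t=\rho+2\sqrt{\rho\ln(1/\delta')}=\epsilon(\rho,\delta')$. A standard privacy-loss-to-DP lemma then upgrades a $\delta'$ tail bound on $Z$ to an $(\epsilon(\rho,\delta'),\delta')$-DP guarantee for the conditioned mechanisms; absorbing the at-most-$\delta$ probability mass outside $E\cup E'$ gives $(\epsilon(\rho,\delta'),\delta+\delta')$-DP in the unconditional sense.

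The only nontrivial step is the optimization over $\alpha$ and the reduction from an MGF/Rényi bound to a tail bound on $Z$; everything else is bookkeeping on the conditioning events and invocations of Hoeffding's lemma. I would expect the main obstacle, if any, to be carefully justifying that the approximate-zCDP events $E,E'$ can be folded into the final $\delta+\delta'$ slack without double counting, which is handled by the standard coupling argument between $M(x)$ and $M(x)\mid_E$ that loses exactly $\Pr[E^c]\le \delta$ in total variation.
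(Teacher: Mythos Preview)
The paper does not prove this lemma; it is stated with attribution to \citet{BunSt16} and used as a black box throughout. Your sketch is essentially the argument given in that reference: for the first implication, condition on the events $E,E'$ furnished by the approximate-DP decomposition and then run the bounded-privacy-loss $\Rightarrow$ subgaussian calculation; for the second, apply Markov's inequality to the R\'enyi moment, optimize over $\alpha$ to get the tail bound $\Pr[Z>t]\le\exp(-(t-\rho)^2/(4\rho))$, and absorb the conditioning events into the additive $\delta$. There is nothing in the present paper to compare against, and your outline matches the original source.
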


\section{ Binary Mechanism: Restricted $\ell_0$-sensitivity with Known Domain Setting \label{sect:knownBase}} 
We first discuss the classical Binary Mechanism from \citet{ChanShSo11} that provides a running count $y_t := \sum_{\tau=1}^t \sigma_\tau$, from a bit steam $\sigma_{1:T}$ where $\sigma_t \in \{ 0,1\}$. The Binary Mechanism works by maintaining a binary tree and adding the $t$-th event from the stream into the $t$-th leaf. As this is done, one has to make sure the sum at each node is equal to the sum of its children. To compute the private count at $t$ it suffices to add the (noisy) sums corresponding to step $t$. We map the tree of partial sums into a \emph{partial sum table} $\bbp$ with entries $p_{i,j}$ for $i\in[\log_2(T)],j\in[T/2^{i-1}]$. The Binary Mechanism has multiple applications, including private matchings \cite{HsuHuRoRoWu14}, congestion games \cite{RogersRo14}, and private online learning \cite{SmithTh13}, \cite{CardosoCu19}.

Due to recent work comparing the overall privacy loss for Laplace noise and Gaussian noise from \cite{CesarRo20} and \cite{CanonneKaSt20}, we will use Gaussian noise, rather than Laplace noise in the original algorithm.  Further, we note that there is nothing special with using a binary representation, so we will keep the base $r$ arbitrary and optimize the base for the lowest overall variance subject to a given privacy level.  Considering arbitrary bases for the Binary Mechanism was also considered in \cite{QardajiYaLi13}, although they optimize for the mean squared error and we consider the worst error on any count.  We show that although the optimal base depends on knowing the stream length $T$ in advance, we show that there are several choices of the base that will improve over base 2 for large ranges of $T$.
To help ease notation, we write
\begin{equation}\label{eq:baseL}
L_r \defeq \lfloor \log_r(T) \rfloor + 1.
\end{equation}

Let $s_j(t; r) \in \{ 0,1, \cdots r-1\}$ be the $j$th digit in the representation of $t$ with base $r$, i.e. $t = \sum_{j = 0}^{\lfloor \log_r(t) \rfloor} s_j(t;r) r^j$. Keeping the base $r$ arbitrary, we now present the generalized version of the Binary Mechanism in Algorithm~\ref{algo:BaseMechanism}, which we refer to as $\BinMech{}$.
\begin{algorithm}[h!]
	\caption{$\BinMech$; Return a running count}
	\begin{algorithmic}
		\State  \textbf{Input:} Stream $\sigma_{1:T} = \sigma_1, \cdots, \sigma_T$, where $\sigma_t\in \{ 0,1\}$, noise level $\tau$, and base $r \in \{2, \cdots, T \}$.
		\State  \textbf{Output:} Noisy counts $\hat{y}_{1:T} = \hat{y}_1, \cdots, \hat{y}_T$,$\hat{y}_t\in\reals$ for all $t\in[T]$
		\State  Sample $\{Z_{i,j} : i \in [L_r], \text{ and } j \in [T] \} \stackrel{i.i.d.}{\sim} \Normal{0}{ L_r \tau^2}$
		\For{$i \in [L_r]$} \algorithmiccomment{Populate the partial sum table}
			\State  $\texttt{START} = 1$
			\For{$j \in [\lfloor T/r^{i-1} \rfloor]$}
				\State  $\texttt{END} = \texttt{START}  + r^{i-1} - 1$
				\State  $p_{i,j} = \sum_{\ell = \texttt{START}}^{\texttt{END}} \sigma_\ell $
				\State  $\texttt{START}= \texttt{END} + 1$.
			\EndFor
		\EndFor
		\For{$t = 1, \cdots, T$}
			\State  Write $t = \sum_{j = 0}^{\lfloor \log_r(t)\rfloor}s_j(t;r) r^j$	
			\State  Let $t' = t$
			\State  $\hat{y}_t = 0$
			 \While{$ t' > 0$}\algorithmiccomment{Given $t$, fetch the corresponding partial sums}
				\State  $ i \gets \min\{j:s_j(t';r) \neq 0 \} $
				\State  $\hat{y}_t \gets \hat{y}_t  + \sum_{\ell = t'/r^i - s_j(t;r) + 1}^{t'/r^i } \left(p_{i,\ell} + Z_{i, \ell}\right)$
				\State  $t' \gets t' - s_i(t;r)\cdot r^{i} $
			\EndWhile
		\EndFor
		\State  Return $\hat{y}_{1:T} = \hat{y}_1, \cdots, \hat{y}_T$.	
	\end{algorithmic}\label{algo:BaseMechanism}
\end{algorithm}

We now state the privacy and utility guarantees of $\BinMech$. 
\begin{theorem}
For any base $r \in \{2, \cdots, T \}$, the $\BinMech(\sigma_{1:T}; \tau, r)$ is $\tfrac{1}{2\tau^2}$-zCDP.
\end{theorem}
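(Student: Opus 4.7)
The plan is to reduce the privacy analysis of $\BinMech$ to a standard application of the Gaussian mechanism composed across the $L_r$ levels of the partial sum table. First I would observe that the only randomness in the algorithm appears in the noise variables $\{Z_{i,j}\}$ that are added to the entries $p_{i,j}$ of the partial sum table, and that each output $\hat{y}_t$ is a deterministic (data-independent) linear combination of the noisy entries $\{p_{i,j} + Z_{i,j}\}$. Hence by post-processing of zCDP, it suffices to prove that the full collection of noisy partial sums is $\tfrac{1}{2\tau^2}$-zCDP.

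Next I would fix two neighboring streams $\sigma_{1:T}$ and $\sigma_{1:T}'$ that differ only in position $t^* \in [T]$, and analyze how the partial sum table changes. The key structural observation is that, at each fixed level $i \in [L_r]$, the algorithm partitions $[T]$ into contiguous blocks of length $r^{i-1}$, so the index $t^*$ is contained in exactly one block; consequently $p_{i,j}$ and $p_{i,j}'$ agree for all but a single value of $j$, and at that $j$ the two entries differ by $|\sigma_{t^*} - \sigma_{t^*}'| \le 1$. Summing across the $L_r$ levels, the vector of partial sums viewed as a function of the stream has $\ell_2$-sensitivity at most $\sqrt{L_r}$.

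Now I would invoke the Gaussian mechanism: adding independent $\Normal{0}{L_r \tau^2}$ noise to a vector-valued query with $\ell_2$-sensitivity $\sqrt{L_r}$ yields
\[
\frac{(\sqrt{L_r})^2}{2 \cdot L_r \tau^2} \;=\; \frac{1}{2\tau^2}
\]
zCDP. Equivalently, one can analyze each of the $L_r$ levels separately as a Gaussian mechanism with sensitivity $1$ and variance $L_r \tau^2$, giving $\tfrac{1}{2 L_r \tau^2}$-zCDP per level, and then apply the composition property of zCDP from Lemma~\ref{lem:zCDP_Comp} across the $L_r$ independent releases; either route gives the same total parameter.

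There isn't really a hard step here; the proof is essentially bookkeeping. The only part that requires any care is the sensitivity argument: one must verify that the block structure at each level is a \emph{partition} of $[T]$ (which follows from how \texttt{START} and \texttt{END} are updated in the construction of $\bbp$), so that a single differing event $\sigma_{t^*}$ perturbs at most one $p_{i,j}$ per level. Once that is in hand, the two remaining ingredients (post-processing and composition / vector Gaussian mechanism) are both standard facts already cited in the preliminaries.
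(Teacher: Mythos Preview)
Your proposal is correct and follows essentially the same argument as the paper: release the full noisy partial sum table, observe that neighboring streams alter at most one cell per level (hence $\ell_2$-sensitivity $\sqrt{L_r}$), apply the Gaussian mechanism with variance $L_r\tau^2$, and recover the actual outputs by post-processing. If anything, your write-up is slightly more careful---you make the post-processing step explicit and state the sensitivity as $\sqrt{L_r}$ rather than the paper's (typo'd) $L_r$.
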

\begin{proof}
The proof follows the same argument as in \cite{ChanShSo11}.  Rather than outputting the noisy counts, we instead consider outputting the entire table of partial counts $\bbp := \{ p(j,\ell) +Z_{j,\ell} : j \in [ L_r], \ell \in [T] \}$, where $\{Z_{j,\ell}\} \stackrel{i.i.d.}{\sim} \Normal{0}{L_r \tau^2}$.   Let $\sigma_{1:T}$ and $\sigma'_{1:T}$ be two neighboring streams with partial sum tables $\bbp = \{p_{j,\ell} \}$ and $\bbp' = \{ p_{j,\ell}' \}$, respectively.  Due to the way we defined neighbors, these two partial sum counts can differ in at most $L_r$ cells and can differ in each cell by at most 1.  Hence, the $\ell_2$-sensitivity of the partial sum table is at most $L_r$, and by adding $\Normal{0}{L_r \tau^2}$ to each cell's true count ensures $\frac{1}{2\tau^2}$-zCDP (see Lemma 2.5 in \cite{BunSt16}).  
\end{proof}

We now present the utility guarantee of the $\BinMech$ for any base $r$, which follows from tail bounds of Gaussian random variables.
\begin{theorem}
For any $r \in  \{2, \cdots, T \}$ and any time $t \in [T]$, the true count $y_t$ and $\hat{y}_t$ from $\BinMech(\sigma_{1:T};\tau,r)$ satisfies the following for any $\eta>0$
\begin{align*}
\Pr[|\hat{y}_t - y_t| \geq \eta]  \leq \Pr[|\Normal{0}{ (r -1)  L_r^2 \tau^2}| \geq \eta] 
\leq 2 \exp\left( \frac{-\eta^2}{2(r -1)  L_r^2 \tau^2} \right).
\end{align*}
\label{thm:BinMechUtility}
\end{theorem}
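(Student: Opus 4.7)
The plan is to identify $\hat{y}_t - y_t$ as a sum of independent Gaussian noises, count how many such noises appear using the base-$r$ digit expansion of $t$, bound this count, and then apply a standard Gaussian tail bound.

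First, I would inspect the inner \texttt{While} loop in $\BinMech$. With $t = \sum_{j = 0}^{\lfloor \log_r(t)\rfloor}s_j(t;r) r^j$, the loop successively peels off the current largest nonzero digit and fetches the corresponding $s_j(t;r)$ consecutive entries from level $j$ of the partial sum table. Since row $j$ of the partial sum table exactly decomposes the time axis into disjoint blocks of length $r^j$, the fetched cells form a disjoint cover of $[1,t]$, and summing their noiseless values recovers $y_t$ exactly. Hence $\hat{y}_t - y_t = \sum_{(j,\ell) \in S_t} Z_{j,\ell}$, where $S_t$ is the index set of fetched cells and $|S_t| = \sum_{j=0}^{\lfloor \log_r(t)\rfloor} s_j(t;r)$.

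Second, I would bound this digit sum: each digit satisfies $s_j(t;r) \le r-1$, and for $t \le T$ there are at most $L_r = \lfloor \log_r(T)\rfloor + 1$ digit positions, so $|S_t| \le (r-1)L_r$. Since the $Z_{j,\ell}$ are i.i.d.\ $\Normal{0}{L_r \tau^2}$, independence gives $\hat{y}_t - y_t \sim \Normal{0}{|S_t|\, L_r \tau^2}$, and because the absolute-value tail probability of a centered Gaussian is monotone in its variance, this variable is stochastically dominated in absolute value by $|\Normal{0}{(r-1)L_r^2 \tau^2}|$; this yields the first inequality. The second inequality is then the standard Gaussian tail bound $\Pr[|X| \ge \eta] \le 2\exp(-\eta^2/(2\sigma^2))$ applied with $\sigma^2 = (r-1)L_r^2\tau^2$.

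The main (and quite modest) obstacle is purely bookkeeping: verifying that the \texttt{While} loop's fetches really do form a disjoint cover of $[1,t]$ matched to the base-$r$ digits, and justifying replacement of the exact count $|S_t|$ by its worst-case bound $(r-1)L_r$ via stochastic dominance rather than equality. There is no analytical difficulty beyond this counting.
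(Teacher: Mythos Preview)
Your proposal is correct and follows essentially the same approach as the paper: the paper's proof simply notes that $\BinMech$ adds at most $(r-1)L_r$ i.i.d.\ samples from $\Normal{0}{L_r\tau^2}$ in the worst case, and then invokes the Gaussian tail bound $\Pr[\Normal{0}{\sigma^2} > \eta] \le \exp(-\eta^2/(2\sigma^2))$. Your write-up supplies the bookkeeping (the digit-sum count and the stochastic-dominance step) that the paper leaves implicit, but the argument is the same.
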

\begin{proof}
The first inequality in the lemma holds since in the worst case, $\BinMech(\sigma_{1:T}; \tau, r)$ will add at most $(r-1)L_r$ i.i.d samples from $\Normal{0}{L_r \tau^2}$. The second inequality holds since for any $\eta>0$, we have $\Pr[\Normal{0}{\sigma^2} > \eta] \leq \exp\left(-\frac{\eta^2}{2\sigma^2}\right)$. 
\end{proof}

Note that the base $r$, given $T$, can be selected in a way to minimize the overall variance of any single count, i.e.
\begin{equation}
r^* \defeq \argmin_{r \in \{2,\cdots, T \}}\left\{ (r - 1) L_r^2 \right\}.
\label{eq:opt_base}
\end{equation}
In Figure~\ref{fig:baseNoise} we plot the resulting standard deviation of noise with various bases $r$ and compare it with what we would get by using base $r = 2$ as in the original Binary Mechanism.  Note that it looks like we can reduce noise by about $15\%$ at the same level of privacy and for most practical settings $r^* \in \{3,\cdots,10 \}$.  Note that the optimal choice of $r$ is pretty stable, so that even if a gross upper bound $T$ is used on the event stream, the true optimal base will not change very much.  In our algorithms, we will keep the choice of base $r$ as arbitrary and remove its dependence in the later algorithms since it will not impact the privacy claims.
\begin{figure}
\centering
        \includegraphics[width=.5\textwidth]{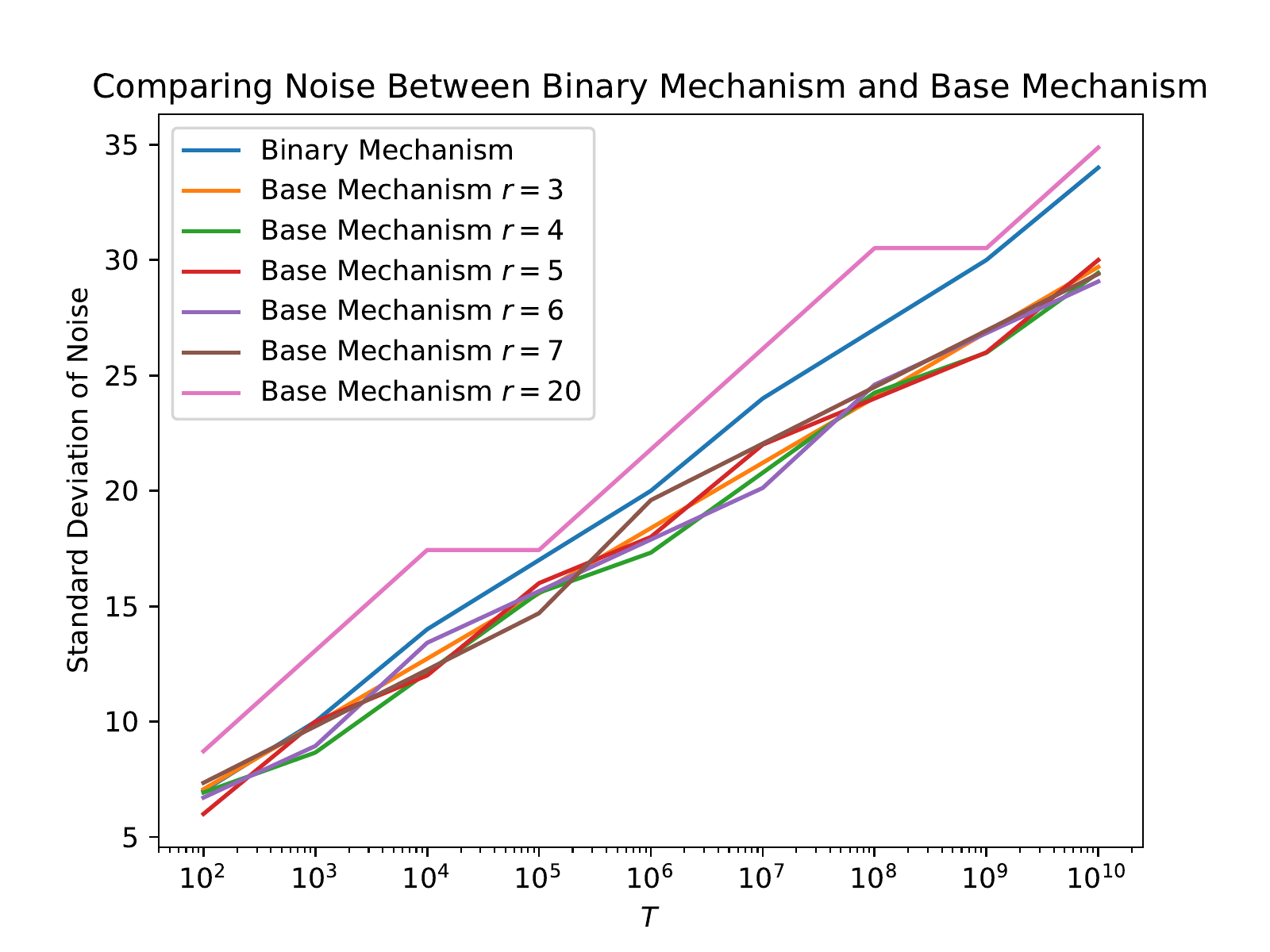}
    \caption{Comparison of the scale of the noise required in the classical Binary Mechanism and $\BinMech{}$ with various choices of base $r$.}
    \label{fig:baseNoise}
\end{figure}

In the case when it is known that a user can only modify a limited number of counts at each round $t$, i.e. $|\omega_t| \leq \Delta_0$ and the domain $\cU$ is known in advance, we can simply apply a stream of counts for each domain item.  This setting was considered in \citet{ChanShSo11}, and we provide the mechanism $\knownBase{}$ in Algorithm~\ref{algo:KnownBase}.
\begin{algorithm}[h!]
	\caption{$\knownBase$; Return a running histogram}
	\begin{algorithmic}
		\State  \textbf{Input:} $\omega_{1:T}$, with $\omega_t \subseteq \cU$, $r \in \{2, \cdots, T \}$, and $\tau$.
		\State  \textbf{Output:} Noisy histograms $\hat{\bbh}_{1:T} = \hat{\bbh}_1, \cdots, \hat{\bbh}_T$.
		\For{$u \in \cU$}
			\State  Define $\sigma_t^u = \1{u \in \omega_t }$ for all $t \in [T]$ \algorithmiccomment{Generate a stream of bits}
			\State  $\hat{h}_{1:T}^u = \BinMech(\sigma_{1:T}^u; \tau, r)$ \algorithmiccomment{Feed the stream of bits to $\BinMech$ }
		\EndFor
		\State  Return $\hat{\bbh}_{1:T} = \left( \left\{ (\hat{h}^u_{t}, u) : u \in \cU \right\} : t \in [T]\right)$.	
	\end{algorithmic}\label{algo:KnownBase}
\end{algorithm}

We then have the following privacy guarantee, which follows from the analysis in \citet{ChanShSo11} and composition of zCDP mechanisms \cite{BunSt16}.

\begin{lemma}
For streams $\omega_{1:T}$ such that $|\omega_t| \leq \Delta_0$ for each $t \in [T]$, $\knownBase(\cdot; \tau)$ is $\frac{\Delta_0}{2\tau^2}$-zCDP.
\end{lemma}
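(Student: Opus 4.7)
The plan is to reduce the claim to a composition-over-items argument, exploiting that neighboring streams only alter the per-item bit streams for items actually involved in the differing event.

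First, I would fix two neighboring streams $\omega_{1:T}$ and $\omega'_{1:T}$ according to the definition in Section~2, so they agree at every time step except some fixed $t$, at which one of them is empty. Without loss of generality assume $\omega'_t = \emptyset$, so the two streams differ exactly in the items in $\omega_t$, and $|\omega_t| \leq \Delta_0$ by hypothesis. Next I would look at $\knownBase$ item by item: for each $u \in \cU$, the algorithm constructs the bit stream $\sigma^u_{1:T}$ with $\sigma^u_\tau = \1{u \in \omega_\tau}$, and similarly $(\sigma')^u_{1:T}$ from $\omega'_{1:T}$, and then feeds each to an independent copy of $\BinMech$. The key observation is that $\sigma^u_{1:T} = (\sigma')^u_{1:T}$ identically whenever $u \notin \omega_t$, because the two streams only disagree at time $t$ and the only bit affected there is for items in $\omega_t$.

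Given this, I would split the output of $\knownBase$ into two groups. For items $u \notin \omega_t$, the corresponding $\BinMech$ invocation receives identical inputs under $\omega$ and $\omega'$, so its output distribution is unchanged and it contributes zero to the R\'enyi divergence between the overall outputs. For items $u \in \omega_t$, the two bit streams $\sigma^u_{1:T}$ and $(\sigma')^u_{1:T}$ differ in exactly one coordinate and are thus neighbors in the sense required by the previous theorem, so each such $\BinMech$ invocation is $\tfrac{1}{2\tau^2}$-zCDP. Since the noise variables across different items $u$ are independent, I can apply Lemma~\ref{lem:zCDP_Comp} across the at most $|\omega_t| \leq \Delta_0$ items that are actually affected to conclude that the joint output is $\tfrac{\Delta_0}{2\tau^2}$-zCDP.

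There is no real obstacle here; the only subtle point is properly formalizing that the ``parallel'' copies of $\BinMech$ corresponding to unchanged items contribute nothing to the divergence, which amounts to noting that the R\'enyi divergence between a product of distributions and a product in which some marginals are identical equals the sum of divergences over the differing marginals (by independence of the noise $Z_{i,j}$ drawn fresh for each $u$). Once this is stated, composition of zCDP via Lemma~\ref{lem:zCDP_Comp} directly yields the stated bound.
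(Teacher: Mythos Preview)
Your proposal is correct and follows essentially the same approach as the paper: fix neighboring streams differing at a single time $t$, observe that the per-item bit streams are identical for all $u \notin \omega_t$ and are neighbors for the at most $\Delta_0$ items $u \in \omega_t$, invoke the $\tfrac{1}{2\tau^2}$-zCDP guarantee of $\BinMech$ on each affected stream, and compose via Lemma~\ref{lem:zCDP_Comp}. Your additional remark about the unchanged items contributing zero to the R\'enyi divergence is just a slightly more explicit version of what the paper leaves implicit.
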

\begin{proof}
Let $\omega_{1:T}$ and $\omega_{1:T}'$ be two neighboring streams where there is a round $t$ where $\omega_t \neq \omega_t'$ where w.l.o.g. $|\omega_t| \leq \Delta_0$ and $\omega_t' = \emptyset$, while $\omega_{t'} = \omega_{t'}'$ for all $t' \neq t$.  Hence, there can be at most $\Delta_0$ many items $u \in \cU$ such that $\sigma^u_{1:T} \neq \sigma'^u_{1:T}$, while all other streams are identical.  Hence, we need only consider the total privacy of $\Delta_0$ many instances of $\BinMech$, which is each $\tfrac{1}{2\tau^2}$-zCDP.  Applying composition of zCDP mechanisms gives the result.
\end{proof}

\section{Meta Algorithm for Continual Observation\label{sect:meta}} 

In this section we propose a general scheme, $\metaAlgo$ in Algorithm~\ref{algo:MetaAlgo}, to return privatized histogram results in the various settings given in Table~\ref{table:tasks} but in the continual observation setting. Before describing the general scheme we briefly describe how the one-shot algorithms work. The known domain algorithms can be summarized as adding either Gaussian noise with standard deviation $\tau$, then returning the list of items and their counts or adding Gumbel noise with scale $\tau/2$ and taking the top-$k$ results then adding fresh Gaussian noise with standard deviation $\tau$ to those discovered items' counts.  Note that the Exponential Mechanism can be implemented by adding Gumbel noise to counts and then returning the element with the largest noisy count.  Further, the Exponential Mechanism with privacy parameter $\epsilon$ satisfies a property called \emph{bounded range} \cite{DurfeeRo19}, which results in $\tfrac{1}{8}\epsilon^2$-zCDP \cite{CesarRo20}. 

 The unknown domain algorithms can be thought of as the same as the known domain algorithms, except we only have access to the top-$(\bk+1)$ items from the full histogram and we include a noisy threshold that will depend on the privacy parameter $\delta$, so that only items above the noisy threshold will be shown. For completeness we present the pseudocode and privacy guarantees for each of the various algorithms in Appendix~\ref{appendix:oldAlgos}, except for $\unkGauss{\bk}$, which we analyze in a latter section.

The key observation is that we can generalize the partial sum table from Section~\ref{sect:knownBase} to a \emph{partial histogram table}  were each entry contains the histogram formed by the corresponding substream from $\omega_{1:T}$. Depending on what setting from the Table~\ref{table:tasks} we are in, we apply the corresponding one-shot DP algorithm to each cell $j,\ell$ for $j\in[\log_{r}(T)],\ell\in[T/{r}^{j-1}]$ of the partial histogram table, and aggregate the corresponding noisy histograms to provide a private result at time $t$.

\begin{algorithm}[h!]
	\caption{$\metaAlgo$; Return a running histogram in various settings from Table~\ref{table:tasks}}
	\begin{algorithmic}
		\State  \textbf{Input:} Stream $\omega_{1:T} = \omega_1, \cdots, \omega_T$, where $\omega_t\subseteq \cU$, and use $L_r$ from \eqref{eq:baseL}.
		\State  \textbf{Output:} Noisy histograms $\hat{h}_{1:T} = \hat{h}_1, \cdots, \hat{h}_T$ with labels for each count at each round.
		\If{Domain $\cU$ is known}
			\If{ $\ell_0$-Sensitivity $\Delta_0$}
				\For{ Each cell $j,\ell$ of the partial histogram table}\
					\State   $\{ (\hat{p}_{j,\ell}^u, u) \} = \knownGauss{}(\{ (p_{j,\ell}^u, u) : u \in \cU \}; \sqrt{L_r} \tau)$
				\EndFor
			\Else
				\For{ Each cell $j,\ell$ of the partial histogram table}
					\State   $\{ (\hat{p}_{j,\ell}^u, u) \} = \knownGumb{k}(\{ (p_{j,\ell}^u, u) : u \in \cU \}; \sqrt{L_r} \tau)$
				\EndFor
			\EndIf
		\Else
			\If{ $\ell_0$-Sensitivity $\Delta_0$}
				\For{ Each cell $j,\ell$ of the partial histogram table}
					\State  Let $\cU_{j,\ell}$ be the set of items  in the portion of $\omega$ used in $p_{j,\ell}$ 
					\State   $\{ (\hat{p}_{j,\ell}^u, u) \} = \unkGauss{\bk}(\{ (p_{j,\ell}^u, u) : u \in \cU_{j,\ell} \}; \sqrt{L_r} \tau,\delta/L_r)$
				\EndFor
			\Else
				\For{ Each cell $j,\ell$ of the partial histogram table}
					\State  Let $\cU_{j,\ell}$ be the set of items  in the portion of $\omega$ used in $p_{j,\ell}$ 
					\State   $\{ (\hat{p}_{j,\ell}^u, u) \} = \unkGumb{k,\bk}(\{ (p_{j,\ell}^u, u) : u \in \cU_{j,\ell} \}; \sqrt{L_r} \tau,\delta/L_r)$
				\EndFor
			\EndIf
		\EndIf
		\State  Return counts and labels at each round $t \in [T]$ using the  corresponding noisy partial histograms $\{ (\hat{p}^u_{j,\ell}, u) \}$.	
	\end{algorithmic}\label{algo:MetaAlgo}
\end{algorithm}

We state the various privacy guarantees in terms of the noise level $\tau$ and other parameters. The analysis follows by zCDP composition over at most $L_r$ cells that can change in the partial histogram tables of neighboring streams. 
\begin{theorem}
If  we have $\ell_0$-sensitivity $\Delta_0$ and known domain, then $\metaAlgo$ is $\frac{\Delta_0}{2\tau^2}$-zCDP.
If we have unrestricted $\ell_0$-sensitivity and known domain, then $\metaAlgo$ is $\frac{k}{\tau^2}$-zCDP.
If we have $\ell_0$-sensitivity $\Delta_0$ and unknown domain, then $\metaAlgo$ is $\left( \epsilon( \tfrac{\Delta_0}{2\tau^2}, \delta') ,\Delta_0 \delta + \delta' \right)$-DP for any $\delta'>0$, where $\epsilon(\cdot,\cdot)$ is given in \eqref{eq:epsilon}.
If we have unrestricted $\ell_0$-sensitivity and unknown domain, then $\metaAlgo$ is $\left( \epsilon(\tfrac{k}{\tau^2}, \delta') ,2k \delta + \delta' \right)$-DP for any $\delta'>0$, where $\epsilon(\cdot,\cdot)$ is given in \eqref{eq:epsilon}.
\end{theorem}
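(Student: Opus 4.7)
The plan is to establish all four parts by a uniform two-step recipe: first control the per-cell privacy cost of each one-shot subroutine, then apply zCDP composition over the cells of the partial histogram table that can change between neighboring streams. Throughout, I would exploit the structural fact used in the analyses of $\BinMech{}$ and $\knownBase{}$: if $\omega_{1:T}$ and $\omega_{1:T}'$ differ in a single round $t$, then the partial histogram table differs in at most $L_r$ cells, namely those cells along the root-to-leaf path that covers position $t$. Every other cell $(j,\ell)$ receives identical inputs under $\omega$ and $\omega'$, so its subroutine is effectively post-processing and contributes nothing to the privacy budget.

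First I would pin down the per-cell cost, using the fact that each one-shot subroutine in $\metaAlgo$ is instantiated with noise parameter $\sqrt{L_r}\tau$ (and, in the unknown-domain cases, failure parameter $\delta/L_r$). Using the one-shot privacy guarantees collected in Appendix~\ref{appendix:oldAlgos}, a single cell's call to $\knownGauss{}$ is $\tfrac{\Delta_0}{2L_r\tau^2}$-zCDP (because the cell histogram has $\ell_2$-sensitivity at most $\sqrt{\Delta_0}$ when one event of size $\leq \Delta_0$ is toggled); a call to $\knownGumb{k}$ is $\tfrac{k}{L_r\tau^2}$-zCDP (peeling top-$k$ via Gumbel plus Gaussian-noised counts on a histogram with $\ell_2$-sensitivity at most $\sqrt{k}$); a call to $\unkGauss{\bk}$ is $(\Delta_0\delta/L_r)$-approximate $\tfrac{\Delta_0}{2L_r\tau^2}$-zCDP; and a call to $\unkGumb{k,\bk}$ is $(2k\delta/L_r)$-approximate $\tfrac{k}{L_r\tau^2}$-zCDP. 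The $\Delta_0$ and $2k$ prefactors on $\delta$ come from union-bounding the threshold-crossing failure event over the at-most-$\Delta_0$ (respectively at-most-$2k$) items that a single differing event can introduce into or remove from the cell's local universe $\cU_{j,\ell}$.

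Second, I would compose the per-cell guarantees across the at-most-$L_r$ affected cells by repeated application of Lemma~\ref{lem:zCDP_Comp}, which adds both the $\rho$ parameters and the approximate-zCDP $\delta$ parameters. The $\rho$ terms telescope as $L_r \cdot \tfrac{\Delta_0}{2L_r\tau^2} = \tfrac{\Delta_0}{2\tau^2}$ in Case~1, $L_r \cdot \tfrac{k}{L_r\tau^2} = \tfrac{k}{\tau^2}$ in Case~2, and identically in Cases~3 and~4; the $\delta$ terms telescope from $\Delta_0\delta/L_r$ (resp.\ $2k\delta/L_r$) per cell to the claimed $\Delta_0\delta$ (resp.\ $2k\delta$) total. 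Cases~1 and~2 are then done. For Cases~3 and~4 I would finish by invoking the approximate-zCDP to approximate-DP conversion, which produces the function $\epsilon(\rho,\delta')$ defined in \eqref{eq:epsilon} and incurs an additive $\delta'$ in the failure parameter, yielding exactly $(\epsilon(\tfrac{\Delta_0}{2\tau^2},\delta'),\Delta_0\delta+\delta')$-DP and $(\epsilon(\tfrac{k}{\tau^2},\delta'),2k\delta+\delta')$-DP. Note that the per-time-$t$ output aggregation is just post-processing of the noisy partial histograms and costs nothing.

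The main obstacle is the per-cell $\delta$ accounting for the two unknown-domain cases. The $\rho$ arithmetic and composition step are mechanical once the per-cell statements are in hand, but justifying the $\Delta_0$ (respectively $2k$) multiplier on $\delta$ requires carefully tracking how a single differing event can inject or remove items from $\cU_{j,\ell}$ and how each such item contributes its own threshold-crossing failure event inside $\unkGauss{\bk}$ and $\unkGumb{k,\bk}$. This is precisely the unified approximate-zCDP analysis of unknown-domain selection that the paper advertises as improving on the bound in \cite{DurfeeRo19}, so the cleanest route is to appeal to (or reprove in the appendix) those per-cell statements and then let composition and the zCDP-to-DP conversion do the rest.
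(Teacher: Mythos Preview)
Your proposal is correct and follows essentially the same approach as the paper: identify the at-most-$L_r$ affected cells, cost each one-shot subroutine per cell (with the scaled noise $\sqrt{L_r}\tau$ and, where relevant, failure parameter $\delta/L_r$), compose, and post-process. The arithmetic in all four cases matches the paper.

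One organizational difference is worth noting for Cases~3 and~4. You propose to phrase each per-cell unknown-domain subroutine as $(\Delta_0\delta/L_r)$-approximate (resp.\ $(2k\delta/L_r)$-approximate) zCDP and then invoke Lemma~\ref{lem:zCDP_Comp} to add both the $\rho$'s and the $\delta$'s. The paper instead performs the good/bad decomposition globally: it union-bounds the bad events across all $L_r$ differing cells at once, and for the good outcomes views the collection of per-cell Gaussian mechanisms as one large Gaussian mechanism with $\ell_2$-sensitivity $\sqrt{\Delta_0 L_r}$ (Case~3), or as $L_r\cdot k$ exponential mechanisms followed by a Gaussian mechanism on the discovered items (Case~4). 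Both routes yield the same final parameters. Your route is more uniform but requires stating the one-shot unknown-domain guarantees in approximate-zCDP form; the paper's route avoids that by working directly with the good/bad decomposition at the level of the whole partial histogram table. The paper's analysis in Section~\ref{sect:revisit} (Lemmas~\ref{lem:GaussMechBot} and~\ref{lem:gauss_del}) in fact establishes exactly the approximate-zCDP statement you need for $\unkGauss{\bk}$, so your route is fully supported there; for $\unkGumb{k,\bk}$ the paper's own argument is terse, and either packaging would need the same underlying ingredients.
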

\begin{proof}
For completeness, we present all existing algorithms with their privacy guarantees in the appendix, except for $\unkGauss{\bk}$, which we cover in a later section.  The restricted $\ell_0$-sensitivity with known domain result follows from the zCDP analysis of the Gaussian Mechanism with $\ell_2$-sensitivity $\sqrt{\Delta_0\cdot L_r}$, due to an event changing at most $\Delta_0$ many counts by at most $1$ in at most $L_r$ many cells of the partial histogram table $\{ (p_{j,\ell}^u, u ) : u \in \cU \}$.  The unrestricted $\ell_0$-sensitivity with known domain result follows from applying the Exponential Mechanism to select $k$ items and then add Gaussian noise to each of the $k$ counts in each cell.  We then apply composition over $L_r$ cells of the table that can change when an event is changed.  

We have covered the analysis of $\unkGauss{\bk}$ in Theorem~\ref{thm:main_UnkGauss}, where we showed that we can separate good and bad outcome sets given a pair of neighboring datasets.  Note that bad outcomes are ones that can only occur in one neighboring dataset, which are only possible in the cells that can differ in neighboring streams.  Hence, we union bound all bad outcome sets over the $L_r$ cells, each of which has a probability of at most $\delta/L_r$.  For good outcomes in each cell, we can consider a specific Gaussian Mechanism from Algorithm~\ref{algo:GaussMech}.  Over the good outcomes, we are left with a Gaussian mechanism in each cell which can then be considered as a larger Gaussian mechanism with $\ell_2$-sensitivity $\sqrt{\Delta_0 L_r}$.  Lastly, we have the unrestricted $\ell_0$-sensitivity with unknown domain, whose analysis follows a similar argument to the \emph{pay-as-you-go composition}, although we always bound the number of exponential mechanisms to be at most $L_r\cdot k$.  Note that we then apply the Gaussian mechanism over the discovered items in each cell.
\end{proof}

The main drawback with this meta-algorithm is that in order to implement it at a time $t$, we will need to know the full stream of events, so that we can apply each DP algorithm on different subsequences, which then need to be stored for later calculations.  In latter sections we explore settings where our algorithms only have access to the aggregated histogram up to time $t$ at each round, rather than the full stream of events.

\section{Unrestricted $\ell_0$-sensitivity, Known Domain\label{sect:sparseGumb}} 

We now consider the case where there is no limit to how many items a user can contribute for a given event in a stream, unlike in Section \ref{sect:knownBase} where the bound was $\Delta_0$.  To ensure there is some bound on privacy, we only display the top-$k$ results at round $t$, which are computed based on all events that have occurred in the stream up to that round.  This is particularly useful when no preprocessing of the data is in place to restrict the number of items for each event, yet we still want to ensure some bounded level of privacy, even for event level.  Otherwise, we would need to add noise that scales with $|\cU| = d$ due to users possibly contributing an arbitrary number of items.  

\subsection{Privacy Analysis}
Our algorithm consists of multiple classical DP algorithms, which makes the privacy analysis somewhat standard.  Consider the case when we want to return the top-1 item at every round. Given a data generating distribution, one would expect that the top-1 item would not change very many times in a stream of events. Hence, we introduce a parameter $s$, which is the number of \emph{switches} the algorithm is allowed to have.  A switch takes place when a new item has count significantly larger than the currently selected one.

\begin{algorithm}[h!]
	\caption{$\sparseGumb{s,k}$; Continually return top-$k$}
	\begin{algorithmic}
		\State  \textbf{Input:} $\omega_{1:T}$ with $s$, top-$k$ returned, $\{\eta_t \}_{t=1}^T$, and $\tau$.
		\State  \textbf{Output:} Noisy histograms $\hat{\bbh}_{1:T}$ for top-$k$ items.
		\State  Let $h_t^u$ be the count for item $u$ from $\bbh(\omega_{1:t};\cU)$.
		\State  Let $\tau_1 = \sqrt{s} \tau$ and $\tau_2 = \sqrt{s+1 }\tau$.
		\State  $\{i_1, \cdots, i_k \} = \knownGumb{k}( \bbh(\omega_{1:1};\cU); \tau_2)$.  \algorithmiccomment{Select top-$k$}		
		\State  Let $\sigma^{u}_{1:T}$ be the binary stream for item $u\in \{i_1, \cdots, i_k \}$.
		\For{$i \in \{i_1, \cdots, i_k \}$}
			\State  Get the current counts: $\hat{h}_{1:T}^i = \BinMech{}(\sigma^{i}_{1:T}; \tau_2)$ \algorithmiccomment{Return running counts of top-$k$}
		\EndFor
		\State  Set $\cD_1 = \{ i_1, \cdots, i_k\}$ and sample $Z \sim \lap(2 \tau_1)$.
		\For{$t \in \{ 2. \cdots,  T\}$}
			\If{$s = 0$}
				\State  {\bf break}  
			\EndIf
				\State  Set $i^* = \argmin_{i \in \{i_1, \cdots, i_k \}}\{ \hat{h}_t^{i}  \}$
				\State  Set threshold $\hat{m}_t = \hat{h}_t^{i^*} + \eta_t + Z$ \algorithmiccomment{Set threshold as in Sparse Vector}
				\For{$u \in \cU \setminus \{i_1, \cdots, i_k \}$}
					\If{$h_t^u + \lap(4\tau_1) > \hat{m}_t$} \algorithmiccomment{Check if there is a new element in the top-$k$.}
						\State  $\{i_1, \cdots, i_k \} = \knownGumb{k}(\bbh(\omega_{1:t}); \tau_2)$. \algorithmiccomment{Select top-$k$} 
						\For{$i \in \{i_1, \cdots, i_k \}$}
							\State  Set $\hat{h}_{1:T}^i = \BinMech{}(\sigma^{i}_{1:T}; \tau_2)$ \algorithmiccomment{Return running counts of the new top-$k$}
						\EndFor
						\State  $s \gets s-1$. \algorithmiccomment{Reduce by one the number of switches}
						\State  Redraw $Z \sim \lap(2\tau_1)$
						\State  {\bf break} 
					\EndIf
				\EndFor
			\State  $\cD_t = \{i_1, \cdots, i_k \}$
		\EndFor
		\State  Return $ \left( \left\{  (\hat{h}_t^u, u) : u \in \cD_t \right\} : t \in [T]\right)$
	\end{algorithmic}\label{algo:sparseGumb}
\end{algorithm}

We now discuss the algorithm at a high level.  At the first round, we will want to find the item with the top count, which can be done with the Exponential Mechanism \cite{McSherryTa07}, i.e. $\knownGumb{1}$ with only the items returned, not their counts.  Recall that we are in the known domain setting, so we will have the same domain at each round, which consists of $d$ items.  Finding this top item will cost a single unit of privacy in our composition, despite one user being able to have a set of items the size of the full domain $d$.  Once we have the top selected item, we can use the $\BinMech$ algorithm to produce a running count for this selected item.  However, we need to check the counts of other items at each round to see if there is one with higher count.  For this, we will use the Sparse Vector technique \cite{DworkNaReRoVa09} to continually check whether there is an item with larger count than the currently selected item.  We will only switch the top item if there is another item with count $\eta_t \geq 0$ more that the currently selected item's count at round $t$.  We can then set $\eta_t$ in our utility analysis, so we keep it arbitrary here.  Once we find that there is an item with larger count than the current top item, we will then use the Exponential Mechanism again to find a new top item, and continually return counts for the new item using $\BinMech{}$.  By the end of the stream, we know that there can be at most $s+1$ many items with counts from $\BinMech{}$. It is then easy to generalize this idea to allow for top-$k$ results at each round, rather than top-$1$. We call this generalization $\sparseGumb{s,k}(\omega_{1:T};\{\eta_t\}, \tau)$ and it has the following privacy guarantee.

\begin{theorem}
For any $\{\eta_t\}_{t=1}^T$ with $\eta_t \geq 0$ for all  $t \in [T]$,  $\sparseGumb{s,k}(\cdot;\{\eta_t\},\tau)$ is $\tfrac{2k+4}{2\tau^2}$-zCDP.
\end{theorem}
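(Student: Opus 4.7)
The plan is to apply zCDP composition (Lemma~\ref{lem:zCDP_Comp}) to each randomized subroutine invoked by $\sparseGumb{s,k}$ and to sum the contributions. Three families of subroutines are at play: (i) at most $s+1$ calls to $\knownGumb{k}$ at noise level $\tau_2 = \sqrt{s+1}\tau$, one at initialization and one after each of the at most $s$ switches; (ii) up to $(s+1)k$ $\BinMech{}$ instances on the item bit streams $\sigma^i_{1:T}$, each at noise level $\tau_2$; and (iii) the Sparse Vector branch, comprising the Laplace threshold $Z \sim \lap(2\tau_1)$ (redrawn after every firing) together with the per-query noise $\lap(4\tau_1)$, where $\tau_1 = \sqrt{s}\tau$, governing the at most $s$ switches.

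For (i), I would invoke the one-shot zCDP cost of $\knownGumb{k}$ used in the meta-algorithm analysis --- $k/\sigma^2$-zCDP per call at noise $\sigma$, stemming from the Exponential Mechanism on top-$k$ (analyzed via bounded range) together with the Gaussian mechanism on the $k$ selected counts --- noting that the calibration $\tau_2^2 = (s+1)\tau^2$ exactly absorbs the $s+1$ calls, producing a $k/\tau^2$ contribution that accounts for the $2k$ term in the numerator of the claim. For (ii), I would apply the $\BinMech{}$ privacy bound of Section~\ref{sect:knownBase} ($1/(2\sigma^2)$-zCDP per instance at noise $\sigma$), observing that under unrestricted $\ell_0$-sensitivity a single event change can flip the bit stream of every currently selected item, so the worst-case per-epoch cost scales with $k$; once more the $\tau_2$ calibration cancels the $s+1$ epoch count, giving a fixed-in-$s$ contribution. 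For (iii), the redrawing of $Z$ decomposes the Sparse Vector into at most $s+1$ independent single-firing SVT instances; each is $1/\tau_1$-pure-DP by the standard SVT analysis for sensitivity-$1$ queries with the given Laplace scales, which converts to $1/(2\tau_1^2) = 1/(2s\tau^2)$-zCDP per instance, summing to a constant-times-$1/\tau^2$ contribution after $\tau_1^2 = s\tau^2$ cancels the instance count.

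Composing these three contributions through Lemma~\ref{lem:zCDP_Comp} then yields the claimed $(2k+4)/(2\tau^2)$-zCDP bound. The main subtlety --- which I expect to be the principal obstacle --- is the adaptive interplay between the three families: the threshold $\hat m_t = \hat h_t^{i^*} + \eta_t + Z$ inside the Sparse Vector depends on the noisy $\BinMech{}$ output for the currently lowest-count top-$k$ item, and which items are tested depends on the $\knownGumb{k}$ selection $\cD_t$. I would handle this by invoking zCDP composition in its adaptive form, treating $\hat h_t^{i^*}$ and $\cD_t$ as post-processed outputs of already-privatized mechanisms, so that each reset-to-next-reset block of the Sparse Vector can be analyzed as a stand-alone single-firing SVT over the raw sensitivity-$1$ counts $h_t^u$, with its pure-DP guarantee applying unchanged and the data-independent additive shift $\eta_t$ entering only the utility analysis.
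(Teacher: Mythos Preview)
Your decomposition into (i) $\knownGumb{k}$ calls, (ii) $\BinMech$ instances, and (iii) Sparse Vector is exactly the paper's strategy, and your treatment of the adaptive interplay in the last paragraph is correct in spirit. However, your bookkeeping in (i) does not close: you charge each $\knownGumb{k}$ call $k/\sigma^2$-zCDP, explicitly including ``the Gaussian mechanism on the $k$ selected counts,'' but in the pseudocode the calls are $\{i_1,\dots,i_k\}=\knownGumb{k}(\cdot;\tau_2)$ --- only the \emph{indices} are retained, and the per-item counts come exclusively from $\BinMech$. The paper's proof is explicit here (``$\knownGumb{k}(\cdot,\sqrt{s+1}\tau)$, \emph{without releasing counts}, is $\tfrac{k}{2(s+1)\tau^2}$-zCDP''), using only the Exponential-Mechanism/bounded-range half of the $\knownGumb{k}$ budget. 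With your accounting, (i) alone already consumes the full $2k/(2\tau^2)$, and then (ii) adds another $k/(2\tau^2)$, so (i)$+$(ii) $=3k/(2\tau^2)$; adding any positive contribution from (iii) exceeds the claimed $(2k+4)/(2\tau^2)$ for $k\geq 4$. The fix is simply to charge $k/(2\tau_2^2)$ per selection call in (i); then (i)$+$(ii) $= k/(2\tau^2)+k/(2\tau^2)=2k/(2\tau^2)$, matching the $2k$ term.

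A secondary point: for (iii), the paper quotes each single-firing SVT as $\tfrac{2}{\sqrt{s}\tau}$-DP and hence $\tfrac{2}{s\tau^2}$-zCDP, summing over $s$ firings to $\tfrac{4}{2\tau^2}$; your sharper $1/\tau_1$-DP per instance is defensible under the standard Dwork--Roth/Lyu--Su--Li parametrization with sensitivity-$1$ queries, threshold noise $\lap(2\tau_1)$, and query noise $\lap(4\tau_1)$, and would give a strictly smaller contribution --- which is fine for an upper bound, just not the constant the theorem states. Also, the number of SVT instances is at most $s$ (after the $s$th firing the algorithm breaks), not $s+1$.
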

\begin{proof}
We rely on the privacy analysis of multiple subroutines.  We know that each call, of the $(s+1)$ calls, to the routine $\knownGumb{k}(\cdot,\sqrt{s+1}\tau)$, without releasing counts, is $\tfrac{k}{2(s+1)\tau^2}$-zCDP.  Further, there can be at most $(s+1) \cdot k$ many different instances of $\BinMech$, each of which is $\tfrac{1}{2(s+1) \tau^2}$-zCDP.  Lastly, we use the Sparse Vector technique to determine which rounds we should find a new top-$k$ in.  Note that we use different thresholds at each round $t$, but we do not update the noise on the threshold unless we update the top-$k$.  We then use the general version of Sparse Vector in \cite{LySuLi17} to conclude that each time we select a round to run $\knownGumb{k}$, it is $\tfrac{2}{\sqrt{s}\tau}$-DP and hence $\tfrac{2}{s\tau^2}$-zCDP.  We then apply composition of zCDP mechanisms to get the result.
\end{proof}

\subsection{Utility Analysis}

We now consider the utility of $\sparseGumb{s,k}$.  We will consider the case where $k = 1$.  Let $t_0,<t_1<\cdots < t_s  \leq T$ be the $s+1$ rounds that we select an element $i_{t_\ell}$ with $\knownGumb{k}$ for $\ell \in \{0,1,\cdots, s\}$.  Let $i_t^* \defeq \argmax\{ h_t^u : u \in [d]\}$ be the true max element at round $t \in [T]$.  We then calculate the error in $\sparseGumb{s,1}$ to be the following where we use $t_{s+1} = T+1$,
\[
\texttt{Err}(T) = \max_{\ell \in \{ 0, 1, \cdots, s\} } \left\{  \max_{t \in [t_{\ell}, t_{\ell+1})} \left\{ | \hat{h}_{t}^{i_{t_\ell}} - h_t^{i^*_t}|\right\} \right\}.  
\]

Due to a recent result from \cite{JainRaSiSm21}, we know a lower bound on the general streaming max problem (referred to as \texttt{SumSelect} in their work) is $\texttt{Err}(T)  = \Omega\left( \min\left\{ \sqrt[3]{\tfrac{T\log^2(d)}{\diffp^2}}, \tfrac{\sqrt{d}}{\diffp}, T \right\} \right)$ with high probability, so we instead consider non worse case streams to avoid the dependence on $\mathrm{poly}(T,d)$.  In particular, we will make an assumption on the stream of data.\footnote{Note that we are only making an assumption for utility and privacy holds in all cases.}. We first define a set of elements that are $\alpha>0$ close to the max count at round $t \in [T]$
\[
S_t(\alpha) \defeq \{u \in [d] : h_{t}^{u} \geq h_t^{i_t^*} - \alpha \}.
\]
Our assumption on the stream will involve three parameters, $\alpha_1< \alpha_2 < \alpha_3$.  At a high level, we will decompose $[T]$ into possibly overlapping intervals $B_0, A_1, B_1, A_2, \cdots, A_s, B_s$.  We will want to assume that the rounds in $A_\ell$ have a clear maximum element, or at least a cluster that is $\alpha_1$ close to the optimal and separated from elements outside the cluster.  Note that selecting a maximum element for the right choice of $\alpha_1$ will result in selecting an element from the cluster, with high probability.  Further, we define intervals $B_\ell$ that may overlap with $A_\ell$ and $A_{\ell+1}$, that will ensure no switch will occur, i.e. select a new maximum element, with high probability as long as all elements in from the cluster of elements in $A_\ell$ are not $\alpha_2$ smaller than the maximum element for the appropriate choice of $\alpha_2 > \alpha_1$. Lastly, we will want a subinterval $A'_\ell \subseteq A_\ell$ which will ensure with high probability that an element in the cluster is selected, i.e. the cluster is at least $\alpha_3$ larger than any element outside of the cluster. See Figure~\ref{fig:Assumption} for a picture showing the assumption for various times.

\begin{figure}
\centering
\includegraphics[width=0.75\textwidth]{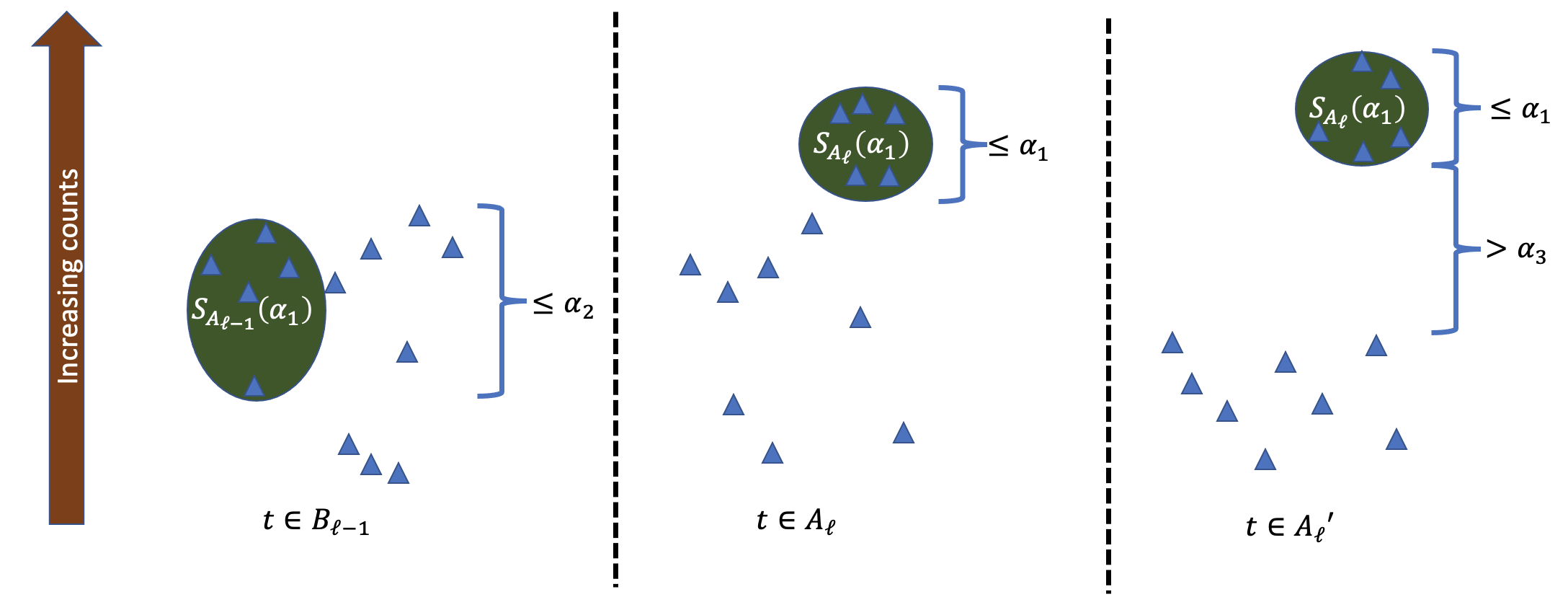}
\caption{A visualization of Assumption~\ref{assumpt:utility} for various rounds $t \in [T]$ with $\ell \in [s]$.  We denote with a triangle to be the counts of different elements in $[d]$.\label{fig:Assumption}}
\end{figure}

This assumption closely matches the expected behavior of sampling i.i.d. from a distribution with a clear maximum element for some time and then switching to sampling i.i.d. from another distribution with a different maximum element.  We now state the assumption more formally.
\begin{assumption}[Utility]\label{assumpt:utility}
Let $\alpha_1, \alpha_2, \alpha_3 >0$ be three parameters such that $\alpha_1 < \alpha_2 < \alpha_3$.  We will decompose the set $[T]$ into different intervals $B_0, A_1, B_1, \cdots, A_s, B_s$ that cover all of $[T]$ which will have the following conditions for each $\ell \in [s]$:
\begin{itemize}
\item For all $t \in B_0$, we have $h_t^u \geq h_t^{v} - \alpha_2$ for each $u, v \in [d]$.
\item There exists a set $S_{A_\ell}(\alpha_1) \subseteq [d]$ such that $S_t(\alpha_1) = S_{A_\ell}(\alpha_1)$ for each $t \in A_\ell$.
\item For each $t \in A_\ell$ and each $u \in S_{A_\ell}(\alpha_1)$, we have 
\[
h_t^u > h_t^v, \quad \forall v \notin S_{A_\ell}(\alpha_1).
\]
\item For all $t \in B_\ell$, and for all $u \in S_{A_\ell}(\alpha_1)$ and for $v \in [d]\setminus \{u \}$, we have
\[
h_t^u \geq h_t^{v} - \alpha_2.
\]
\item There exists a nonempty subinterval $A'_\ell \subseteq  A_\ell$ such that for all $t \in A'_\ell$ and each $u \in S_{A_\ell}(\alpha_1)$, we have
\[
h_t^u \geq h_t^v + \alpha_3, \quad \forall v \notin S_{A_\ell}(\alpha_1).
\]
\end{itemize} 
\end{assumption}
With Assumption~\ref{assumpt:utility}, we will work up to the full utility statement by adding each privacy mechanism one by one, starting with the Exponential Mechanism.
\begin{lemma}\label{lem:expMechUtility}
Given Assumption~\ref{assumpt:utility}, for any $\ell \in [s]$, if we run the exponential mechanism in $\sparseGumb{s,1}$ within any $A_\ell$ then with probability at least $1-\beta$, the index at each of those rounds will be in $S_{A_\ell}(\alpha_1)$ where
\begin{equation}
\alpha_1 = \Theta\left( \sqrt{s}\tau \log(d/\beta)\right).
\label{eq:alpha1}
\end{equation}
\end{lemma}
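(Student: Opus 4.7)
The plan is to combine the standard Gumbel-max utility guarantee for the Exponential Mechanism with the separation condition that Assumption~\ref{assumpt:utility} imposes on any interval $A_\ell$, and then to union-bound over the possible calls to $\knownGumb{1}$ inside $A_\ell$.

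First, I would record a Gumbel tail bound. Every call to $\knownGumb{1}(\cdot;\tau_2)$ inside $\sparseGumb{s,1}$ uses noise scale $\tau_2/2$ with $\tau_2=\sqrt{s+1}\,\tau$, so each of the $d$ coordinates receives an independent $\mathrm{Gumbel}(0,\tau_2/2)$ perturbation $G_u$. A union bound over the $d$ coordinates, using the standard tail $\Pr[|G_u|\ge c]\le 2e^{-2c/\tau_2}$, shows that with probability at least $1-\beta'$ every $G_u$ satisfies $|G_u|\le (\tau_2/2)\log(2d/\beta')$.

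Next, I would invoke the structural condition from Assumption~\ref{assumpt:utility}. For any $t\in A_\ell$ and any $v\notin S_{A_\ell}(\alpha_1)$, we have $h_t^v<h_t^{i_t^*}-\alpha_1$, while $i_t^*\in S_{A_\ell}(\alpha_1)$ by definition of $S_t(\alpha_1)$. On the high-probability event that $|G_u|\le c$ for all $u\in[d]$, the noisy count at $i_t^*$ is at least $h_t^{i_t^*}-c$, whereas the noisy count at any $v\notin S_{A_\ell}(\alpha_1)$ is strictly less than $h_t^{i_t^*}-\alpha_1+c$. Hence whenever $\alpha_1>2c$ the argmax, and therefore the item returned by $\knownGumb{1}$, must lie in $S_{A_\ell}(\alpha_1)$.

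Finally, I would account for the possibility that $\knownGumb{1}$ is executed more than once within $A_\ell$. Since $\sparseGumb{s,1}$ invokes the Exponential Mechanism only at initialization and after each triggered switch, and the algorithm permits at most $s$ switches, there are at most $s+1$ such calls across the whole stream, and so at most $s+1$ within $A_\ell$. Substituting $\beta'=\beta/(s+1)$ in the single-call bound and union-bounding over the calls yields the requirement $\alpha_1>\tau_2\log(2(s+1)d/\beta)=\sqrt{s+1}\,\tau\log(2(s+1)d/\beta)$, which reduces to $\alpha_1=\Theta(\sqrt{s}\,\tau\log(d/\beta))$ once the $\log(s+1)$ factor is absorbed into the constants (as is standard when $s$ is at most polynomial in $d$).

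The main obstacle is really just careful bookkeeping: tracking how the noise scale $\tau_2=\sqrt{s+1}\,\tau$, the $d$-way union bound within a single call, and the $(s+1)$-way union bound across calls combine to determine $\alpha_1$. Once the high-probability noise bound is in place, the fact that the argmax lies in $S_{A_\ell}(\alpha_1)$ follows immediately from the $\alpha_1$-separation and requires no further use of the assumption.
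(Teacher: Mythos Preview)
Your proof is correct and reaches the stated bound, but it takes a more laborious route than the paper. The paper simply invokes the classical utility guarantee of the Exponential Mechanism: a single call to $\knownGumb{1}(\cdot;\tau_2)$ returns an index whose true count is within $\tau_2\ln(d/\beta)$ of the maximum with probability at least $1-\beta$, and hence lies in $S_t(\alpha_1)=S_{A_\ell}(\alpha_1)$ by the very definition of that set. Your two-sided Gumbel tail bound essentially rederives this guarantee from scratch; that is fine, but the one-sided statement (the selected element has count close to the max) already implies membership in $S_{A_\ell}(\alpha_1)$ directly, with no need to compare noisy scores coordinate-by-coordinate. Your additional union bound over the at most $s+1$ calls is also more than the paper does in this lemma---the paper treats the statement as a per-call guarantee and postpones the union bound over switches to Lemma~\ref{lem:SVTutility} and the main utility theorem---so the extra $\log(s+1)$ factor you have to absorb never appears in the paper's bookkeeping. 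Both approaches are valid; the paper's is shorter because it leans on the black-box EM utility theorem, while yours is self-contained.
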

\begin{proof}
This follows from the utility theorem of the Exponential Mechanism \cite{McSherryTa07}, which we write out here.  Let $i_t$ be the index selected from the Exponential Mechanism $\knownGumb{1}( \{ h^u_t, u : u \in \cU\} ; \tau)$, without its count, and let $i^*_t$ be the argmax of $\{ h^u_t: u \in \cU\}$.  Then for $\beta > 0$, we have
\[
\Pr[h^{i_t}_t > h^{i^*_t}_t - \tau \ln\left(d/\beta \right) ] \geq 1 - \beta.
\]
\end{proof}

We now consider the sparse vector component, also referred to as AboveThreshold in \cite{DworkRo14}, which will never return a new top element in $B_\ell$ for $\ell \in [s]$ if $\eta = \eta_t$ and $\alpha_1,\alpha_2, \alpha_3$ are set appropriately.
\begin{lemma}\label{lem:SVTutility}
Given Assumption~\ref{assumpt:utility}, we will use $\alpha_1$ from \eqref{eq:alpha1} and 
\[
\alpha_2 = \Theta\left( \alpha_1 \right), \alpha_3 = O\left(  \tau \sqrt{s} \log^{3/2}(dT/\beta)\right), \eta = O\left(  \tau \sqrt{s} \log^{3/2}(dT/\beta)\right).
\]
Let $i_0 \in [d]$ be the index selected initially in $\sparseGumb{s,1}$.  With probability at least $1-\beta$ over all rounds $t \in B_0$ we will not select a new maximum element and there will be a round $t_1 \in A_1$ where we either select a new element $i_1 \in S_{A_1}(\alpha_1)$ or $i_0$ is already in $S_{A_1} (\alpha_1)$, and we relabel $i_1 = i_0$.  

Further, for $\ell \in [s]$, conditioning on $i_{\ell-1} \in S_{A_{\ell-1}}(\alpha_1)$ at some round $t_{\ell-1} \in A_{\ell-1}$ we have that for all $t> t_{\ell-1}$ and $t \in A_{\ell-1} \cup B_{\ell-1} \cup A_\ell$, we will not select a new element until some round $t_\ell \in A_\ell$ where $t_\ell$ is at most the first time in $A_\ell'$ with probability at least $1-\beta$. 

\end{lemma}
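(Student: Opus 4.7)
The plan is to establish high-probability concentration for every noise variable drawn by $\sparseGumb{s,1}$, and then walk through the intervals comparing the noisy threshold $\hat m_t$ to the noisy queries $h_t^u + \lap(4\tau_1)$. The three noise sources are: $Z \sim \lap(2\tau_1)$ redrawn at most $s+1$ times (once per switch period), a fresh $\lap(4\tau_1)$ drawn for each of at most $dT$ comparisons, and the Gaussian noise inside each of the at most $s+1$ $\BinMech$ instances, whose per-round error is governed by Theorem~\ref{thm:BinMechUtility}. Using standard Laplace and Gaussian tail bounds with per-event failure probability $\beta/(6(s+1)dT)$, a union bound gives an event $E$ of probability at least $1-\beta$ on which every Laplace draw is at most $c_0\tau_1\log(dT/\beta)$ in magnitude and every $\BinMech$ reading $\hat h_t^i$ satisfies $|\hat h_t^i - h_t^i| \le c_0\tau_2 L_r\sqrt{\log(T/\beta)}$. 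I then bundle these into a single slack $\xi = c_1 \tau\sqrt{s+1}\log^{3/2}(dT/\beta)$ so that the subsequent comparisons are transparent.

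For the no-switch claim on $B_0$, Assumption~\ref{assumpt:utility} yields $h_t^u \le h_t^{i_0} + \alpha_2$ for every $u$, so on $E$ the noisy query is at most $h_t^{i_0}+\alpha_2+\xi$ and the threshold is at least $h_t^{i_0} + \eta - \xi$; taking the constant hidden in $\eta$ so that $\eta \ge \alpha_2 + 3\xi$ rules out any trigger. The same inequality chain handles every $B_\ell$ and the tail of $A_{\ell-1}$ after $t_{\ell-1}$, once I condition on $i_{t_{\ell-1}} \in S_{A_{\ell-1}}(\alpha_1)$: the $A_{\ell-1}$-clause gives $h_t^v \le h_t^{i_{t_{\ell-1}}} + \alpha_1 \le h_t^{i_{t_{\ell-1}}}+\alpha_2$ for every $v$, and the $B_{\ell-1}$-clause gives the same bound on $B_{\ell-1}$. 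If in addition $i_{t_{\ell-1}} \in S_{A_\ell}(\alpha_1)$, the bound extends through all of $A_\ell$ via $S_t(\alpha_1) = S_{A_\ell}(\alpha_1)$, and I relabel $i_\ell = i_{t_{\ell-1}}$ with $t_\ell$ the first round of $A_\ell$. For the forced-switch claim on $A'_\ell$, assuming instead $i_{t_{\ell-1}} \notin S_{A_\ell}(\alpha_1)$ and that no trigger has fired before $A'_\ell$, I fix any $u^\star \in S_{A_\ell}(\alpha_1)$; the $A'_\ell$-clause gives $h_t^{u^\star} \ge h_t^{i_{t_{\ell-1}}} + \alpha_3$, so on $E$ the query for $u^\star$ is at least $h_t^{i_{t_{\ell-1}}}+\alpha_3-\xi$ while the threshold is at most $h_t^{i_{t_{\ell-1}}}+\eta+\xi$, and picking the constant in $\alpha_3$ to exceed that of $\eta$ by $3\xi$ of slack forces a trigger by the first round of $A'_\ell$. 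Invoking Lemma~\ref{lem:expMechUtility} at the triggering round (which sits in $A_\ell$, so the max cluster is exactly $S_{A_\ell}(\alpha_1)$) places the freshly selected index inside $S_{A_\ell}(\alpha_1)$ and closes the induction; an earlier trigger inside $A_\ell \setminus A'_\ell$ is handled identically.

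The main obstacle will be the polylogarithmic bookkeeping: the slack $\xi$ must simultaneously absorb the $\BinMech$ Gaussian tail, which scales as $\tau\sqrt{s+1}\,L_r\sqrt{\log(T/\beta)}$, the single Laplace threshold noise, and the per-query Laplace noises, while leaving enough room for both $\eta \ge \alpha_2 + 3\xi$ and $\alpha_3 \ge \eta + 3\xi$ within the $\log^{3/2}(dT/\beta)$ envelope stated in the lemma. I would also have to be careful that each switch period produces its own independent noise, so concentration should be applied per-period and then union-bounded across periods; a lapse in maintaining the inductive invariant $i_{t_\ell} \in S_{A_\ell}(\alpha_1)$ would cascade through subsequent intervals and invalidate both the no-switch and forced-switch analyses.
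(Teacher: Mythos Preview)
Your proposal is correct and follows essentially the same interval-by-interval case analysis as the paper: show no trigger throughout $B_0$, the tail of $A_{\ell-1}$, and $B_{\ell-1}$; force a trigger by $A'_\ell$ when the current index is outside $S_{A_\ell}(\alpha_1)$; and invoke Lemma~\ref{lem:expMechUtility} at the triggering round. The one genuine methodological difference is that you control the Sparse Vector component by directly union-bounding the individual Laplace draws (threshold noise and per-query noise), whereas the paper treats the Sparse Vector portion as a black box and invokes the standard AboveThreshold utility guarantee from \cite{DworkRo14}, packaging that contribution into a single slack $\alpha_{\texttt{AT}} = 8\tau\sqrt{s}\log(6dT/\beta)$ alongside a separate $\BinMech$ slack $\alpha_{\texttt{BM}}$. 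Your bundling of all noise into one $\xi$ and then requiring $\eta \geq \alpha_2 + 3\xi$ and $\alpha_3 \geq \eta + 3\xi$ is exactly equivalent to the paper's conditions $\eta = \alpha_2 + \alpha_{\texttt{BM}} + \alpha_{\texttt{AT}}$ and $\alpha_3 - \alpha_2 \geq 2\alpha_{\texttt{BM}} + 2\alpha_{\texttt{AT}}$, and both yield the same $\Theta(\tau\sqrt{s}\log^{3/2}(dT/\beta))$ envelope. Neither approach is materially simpler; yours is more self-contained, the paper's is slightly more modular.
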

\begin{proof}
We first need to ensure that the noisy counts are within some error bound of the true counts.  From Theorem~\ref{thm:BinMechUtility}, we have that for any fixed $u \in [d]$ the following holds with probability $1-\beta/3$,
\begin{equation}
\max_{t \in (t_\ell, t_{\ell+1})} | \hat{h}_t^u - h_t^u| \leq \tau L_r\sqrt{2 (s+1)(r-1) \log(6T/\beta)} \eqdef \alpha_{\texttt{BM}}.
\label{eq:alpha''}
\end{equation}
We will condition on the case where each noisy count is within $\alpha_{\texttt{BM}}$ of its true count.  We also know from Lemma~\ref{lem:expMechUtility} that the selected index $i_{t_\ell}$ will be within $\alpha_1 = O\left( \sqrt{s}\tau \log(d/\beta)\right)$ of the true max with probability $1-\beta/3$, which we will assume in the remainder of the proof.

From the utility guarantee of the AboveThreshold algorithm \cite{DworkRo14}, we have that as long as $h_t^u \leq M_t - \alpha_{\texttt{AT}}$ for all $u \in [d]\setminus \{ i_{t_\ell}\}$  and $t \in B_{\ell-1}$ for some threshold $M_t$ yet to be determined, then AboveThreshold will return $\bot$ for all rounds in interval $B_{\ell-1}$.  Further, we have that for any $t \in A'_{\ell}$ and assuming $i_{t_{\ell-1}} \notin S_{A_{\ell}}(\alpha_1)$, then $h_{t}^u > M_{t} + \alpha_{\texttt{AT}}$ for some $u \in [d] \setminus \{ i_{t_{\ell-1}}\}$ then AboveThreshold will return $\top$ for some round in $A_{\ell}$ with probability at least $1-\beta/3$ where
\begin{equation}
\alpha_{\texttt{AT}} \defeq 8\tau \sqrt{s} \log(6 d T/\beta) .
\label{eq:alpha'''}
\end{equation} 
If $i_{t_{\ell-1}} \in S_{A_{\ell}}(\alpha_1)$, then we would not need AboveThreshold to return $\top$, since it has already selected an element near the next maximum.  Note that in $\sparseGumb{s,1}$, we have $M_t = \hat{h}_t^{i_{t_\ell}} + 
\eta$ where $t \in [t_\ell, t_{\ell+1})$.

We now need to determine $\eta$ based on Assumption~\ref{assumpt:utility}.  Let's start with $t \in B_0$.  The element $i_{t_0}$ that we select at some round $t_0 \in B_0$ will be within $\alpha_2$ of all other elements throughout $B_0$.  Hence we have for all $u \neq i_{t_0}$ and $t \in B_0$
\[
h_t^u \leq h_t^{i_{t_0}} + \alpha_2 \leq \hat{h}_t^{i_{t_0}} + \alpha_{\texttt{BM}} + \alpha_2. 
\]
We then set $\eta$ in $\sparseGumb{}$ as the following,
\begin{equation}
\eta \defeq \alpha_2 + \alpha_{\texttt{BM}} + \alpha_{\texttt{AT}}
\label{eq:eta}
\end{equation}
Note that if there exists a time $t_\ell \in A_\ell$ where AboveThreshold returns $\top$, then the exponential mechanism will select $i_{t_\ell} \in S_{A_\ell}(\alpha_1)$, since by hypothesis $h_t^u > h_t^v$ for all $u \in S_{A_\ell}(\alpha_1)$ and $v \notin S_{A_\ell}(\alpha_1)$.  We want to show that there will actually be a round $t \in A_\ell$ where we will have the selected element $i_{t_\ell} \in S_{A_\ell}(\alpha_1)$.  In particular, we know there is a round $t \in A'_\ell$ where for all $u \in S_{A_\ell}(\alpha_1)$ and $v \notin S_{A_\ell}(\alpha_1)$,
\[
h_t^u \geq h_t^v + \alpha_3
\]
If the previously selected element $i_{t_{\ell-1}} = v \notin S_{A_\ell}(\alpha_1)$, then 
\[
h_t^u \geq h_t^{i_{t_{\ell-1}}} + \alpha_3 \geq \hat{h}_t^{i_{t_{\ell-1}}} - \alpha_{\texttt{BM}} + \alpha_3 = \hat{h}_t^{i_{t_{\ell-1}}} + \eta - 2\alpha_{\texttt{BM}} - \alpha_2 - \alpha_{\texttt{AT}} + \alpha_3
\]
Hence, we need to ensure that following holds, so that with probability at least $1-\beta/3$ we will select something in $S_{A_\ell}(\alpha_1)$
\[
- 2\alpha_{\texttt{BM}} - \alpha_2 - \alpha_{\texttt{AT}} + \alpha_3 \geq \alpha_{\texttt{AT}}.
\]
\begin{equation}
\implies \alpha_3 - \alpha_2 \geq 2 \alpha_{\texttt{BM}} + 2 \alpha_{\texttt{AT}}
\label{eq:alphaGap}
\end{equation}
Note that once we have selected a new element in $i_{t_{\ell}} \in S_{A_\ell}(\alpha_1)$, we have that for each $t \in A_{\ell}$ where $t > t_\ell$ and each $u \neq i_{t_\ell}$
\[
h_t^u \leq h_t^{i_{t_\ell}} + \alpha_1 \leq \hat{h}_t^{i_{t_\ell}} +  \alpha_{\texttt{BM}} + \alpha_1 = \hat{h}_t^{i_{t_\ell}} +  \eta + \alpha_1 - \alpha_2 - \alpha_{\texttt{AT}}
\] 
Thus, because $\alpha_2 > \alpha_1$, we know that AboveThreshold will only return $\bot$ in the rest of $A_\ell$.

Given $i_\ell \in S_{A_{\ell}}(\alpha_1)$, we now turn to rounds $t \in B_\ell$.  By assumption, we have for each $v \neq i_{t_\ell}$
\[
h_t^v \leq h_t^{i_{t_\ell}} + \alpha_2 \leq \hat{h}_t^{i_{t_\ell}} + \alpha_{\texttt{BM}} +  \alpha_2 = \hat{h}_t^{i_{t_\ell}} + \eta - \alpha_{\texttt{AT}}.
\]

Hence, we use 
\[
\alpha_2 = \Theta(\alpha_1) = \Theta(\sqrt{s}\tau \log(d/\beta))
\]
 and 
 \[
 \alpha_3 = \Theta(\alpha_{\texttt{BM}} + \alpha_{\texttt{AT}}) =  \Theta\left( \tau \sqrt{s} \log^{3/2}(dT/\beta)\right).
 \]
 This makes $\eta = \Theta\left( \tau \sqrt{s} \log^{3/2}(dT/\beta)\right)$.
\end{proof}

With our technical lemmas, we are now ready to prove our main utility result.

\begin{theorem}
Given Assumption~\ref{assumpt:utility} and setting $\alpha_1, \alpha_2, \alpha_3, \eta$ as in Lemmas~\ref{lem:expMechUtility} and \ref{lem:SVTutility} we have that with probability at least $1 - s \beta$, 
\[
\texttt{Err}(T) = O\left(\tau \sqrt{s} \log^{3/2}(dT/\beta) \right)
\]
\end{theorem}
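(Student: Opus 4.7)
The plan is to combine three high-probability events via a single union bound and then reduce the error at every round to a triangle inequality whose two pieces are each controlled by what is already proved. The three events are: (i) each of the at most $s+1$ invocations of $\knownGumb{1}$ selects an element of the corresponding cluster $S_{A_\ell}(\alpha_1)$, by Lemma~\ref{lem:expMechUtility}; (ii) the AboveThreshold component used inside $\sparseGumb{s,1}$ produces no spurious switches inside any $B_\ell$ block and does trigger by (or at the start of) $A'_\ell$, by Lemma~\ref{lem:SVTutility}; and (iii) the $s+1$ running $\BinMech{}$ estimates all stay within $\alpha_{\texttt{BM}}$ of the true counts at every round, by Theorem~\ref{thm:BinMechUtility}. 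Each of these three families fails with probability at most $\beta$ per application, so a union bound over $O(s)$ applications gives overall failure probability $O(s\beta)$, matching the target $1-s\beta$ success probability.

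Conditioned on all favorable events, I would fix any $t \in [t_\ell, t_{\ell+1})$, let $i_{t_\ell}$ denote the currently held index, and split
\[
|\hat{h}_t^{i_{t_\ell}} - h_t^{i_t^*}| \;\leq\; |\hat{h}_t^{i_{t_\ell}} - h_t^{i_{t_\ell}}| \;+\; |h_t^{i_{t_\ell}} - h_t^{i_t^*}|.
\]
The first term is bounded by $\alpha_{\texttt{BM}}$ from event (iii). The second, the \emph{selection} error, is handled via Assumption~\ref{assumpt:utility}: if $t\in A_\ell$ after the switch has fired, events (i)--(ii) place $i_{t_\ell}\in S_{A_\ell}(\alpha_1)$, and since the true maximizer is also in $S_{A_\ell}(\alpha_1)$ the selection error is at most $\alpha_1$; if $t\in B_\ell$, the assumption controls every pair of counts to within $\alpha_2$, so the selection error is at most $\alpha_2$.

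The subtlety I expect to be the main obstacle is the transition window inside $A_\ell$ \emph{before} AboveThreshold fires: there the algorithm still displays the previous index $i_{t_{\ell-1}}$, which need not lie in $S_{A_\ell}(\alpha_1)$. The way around this is to invoke the guarantee in Lemma~\ref{lem:SVTutility} that the switch occurs no later than the first round of $A'_\ell$, and to observe that the portion of $A_\ell$ preceding $A'_\ell$ is covered by the $B_{\ell-1}$ clause of Assumption~\ref{assumpt:utility}, which bounds every count within $\alpha_2$ of $i_{t_{\ell-1}}$. The analogous initial-block bound for $t\in B_0$ comes from the first bullet of the assumption. One has to verify that the decomposition $B_0, A_1, B_1, \dots, A_s, B_s$ (allowing the $B_\ell$ overlaps described before the assumption) lets one assign each $t$ to at least one block whose guarantee applies to whichever index is currently held.

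Finally, substituting the quantitative choices pinned down in Lemmas~\ref{lem:expMechUtility} and~\ref{lem:SVTutility}, namely $\alpha_1=\Theta(\sqrt{s}\,\tau\log(d/\beta))$, $\alpha_2 = \Theta(\alpha_1)$, and $\alpha_{\texttt{BM}} = O(\tau\sqrt{s}\log^{3/2}(T/\beta))$, the dominant term is $\max\{\alpha_2,\alpha_{\texttt{BM}}\} = O(\tau\sqrt{s}\,\log^{3/2}(dT/\beta))$, which upper bounds the error uniformly over all rounds $t$ and hence gives the claimed bound on $\texttt{Err}(T)$.
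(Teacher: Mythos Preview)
Your proposal is correct and mirrors the paper's proof: union-bound over the $O(s)$ applications of Lemma~\ref{lem:SVTutility}, split the per-round error via the triangle inequality into a $\BinMech{}$ term (bounded by $\alpha_{\texttt{BM}}$) plus a selection term, and case-split on whether $t$ sits in $A_\ell$ after the switch, in $B_\ell$, or in the transition window before the switch fires.

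The only substantive divergence is how you handle that transition window inside $A_\ell$ before AboveThreshold triggers. You bound the selection error there by $\alpha_2$, relying on the claim that this window is covered by the $B_{\ell-1}$ overlap; the paper instead bounds it directly by $\alpha_3$, writing $h_t^{i_t^*} \leq h_t^{i_{t_{\ell-1}}} + \alpha_3$ for $t \in A_\ell \setminus A'_\ell$ and obtaining $|\hat{h}_t^{i_{t_{\ell-1}}} - h_t^{i_t^*}| \leq \alpha_{\texttt{BM}} + \alpha_3$. Both routes land at $O(\tau\sqrt{s}\log^{3/2}(dT/\beta))$ since $\alpha_2,\alpha_3,\alpha_{\texttt{BM}}$ are all dominated by that quantity. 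The paper's route sidesteps the overlap verification you (rightly) flagged as something to check, while yours would yield a slightly smaller constant if the overlap indeed holds.
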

\begin{proof}
We apply a union bound in Lemma~\ref{lem:SVTutility} to ensure that at there are $s$ times $t_\ell \in S_{A_\ell}(\alpha_1)$ where we will select an element $i_{t_\ell} \in A_{\ell}$ and then not again in the remainder of $A_\ell$ and in $B_\ell$, where $\ell \in [s]$ .  We now consider the error in the counts.  At each round $t \in A_\ell$ in which we select an element $i_{t_\ell}$, call this round $t_\ell$, or after, we have with probability at least $1 - s \beta$ that 
\[
|\hat{h}_{t}^{i_{t_\ell}} - h_{t}^{i_{t}^*} | \leq |h_{t}^{i_{t_\ell}} - h_{t}^{i_{t}^*}| + \alpha_{\texttt{BM}}  \leq  \alpha_1 + \alpha_{\texttt{BM}} \leq  O\left( \tau \sqrt{s} \log^{3/2}(dT/\beta) \right).
\]
Further, for all times $t \in A_{\ell}\setminus A_\ell'$ before we select a new element at round $t_{\ell}$, we have
\[
\hat{h}_t^{i_{t_{\ell-1}}} - \alpha_{\texttt{BM}} \leq h_t^{i_{t_{\ell-1}}}\leq h_t^{i_{t}^*} \leq h_t^{i_{t_{\ell-1}}} + \alpha_3 \leq \hat{h}_t^{i_{t_{\ell-1}}}  + \alpha_{\texttt{BM}} + \alpha_3.
\]
Hence, we have for all $t < t_{\ell}$ and $t \in A_{\ell}\setminus A_\ell'$, we have
\[
|\hat{h}_t^{i_{t_{\ell-1}}} - h_{t}^{i_{t}^*}| \leq \alpha_{\texttt{BM}} + \alpha_3  = O\left( \tau \sqrt{s} \log^{3/2}(dT/\beta) \right) 
\]
We also consider all $t \in B_{\ell}$, in which case we have a similar condition as above,
\[
\hat{h}_t^{i_{t_\ell}} - \alpha_{\texttt{BM}} \leq h_t^{i_{t_\ell}} \leq h_t^{i_{t}^*} \leq \hat{h}_t^{i_{t_\ell}} + \alpha_{\texttt{BM}} + \alpha_2
\] 
This completes the proof.
\end{proof}

We will conduct experiments to see how the number of switches $s$ and $\{\eta_t \}_{t=1}^T$ impact the accuracy of the current round's selected item and the true maximum count. 
We will also need to set the additional threshold amounts $\eta_t$ for each round $t \in [T]$.  We will try several values of $\eta_t = \eta$ in our experiments.

\begin{figure}
\centering
\includegraphics[width=0.45\textwidth]{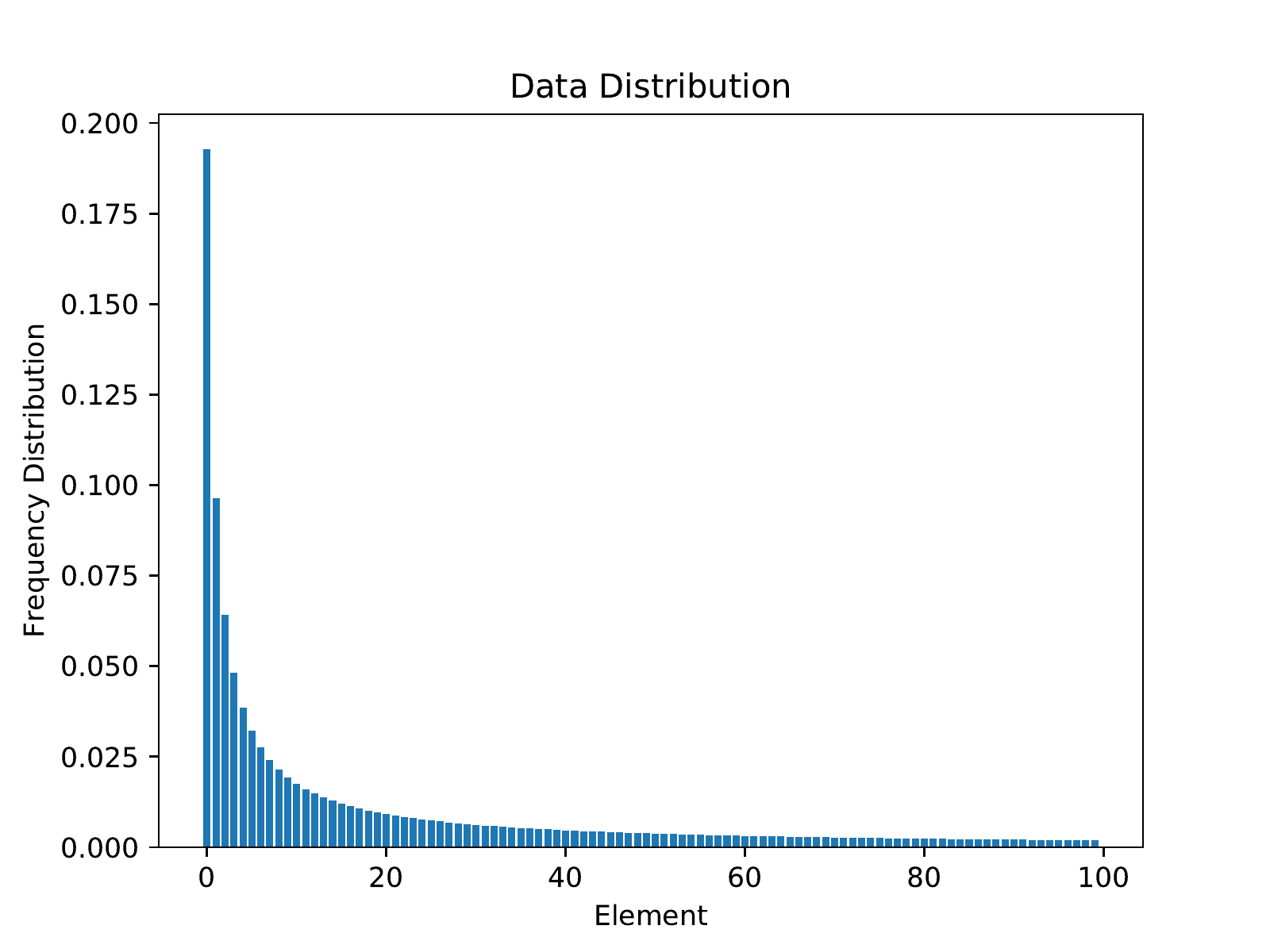}
\includegraphics[width=0.45\textwidth]{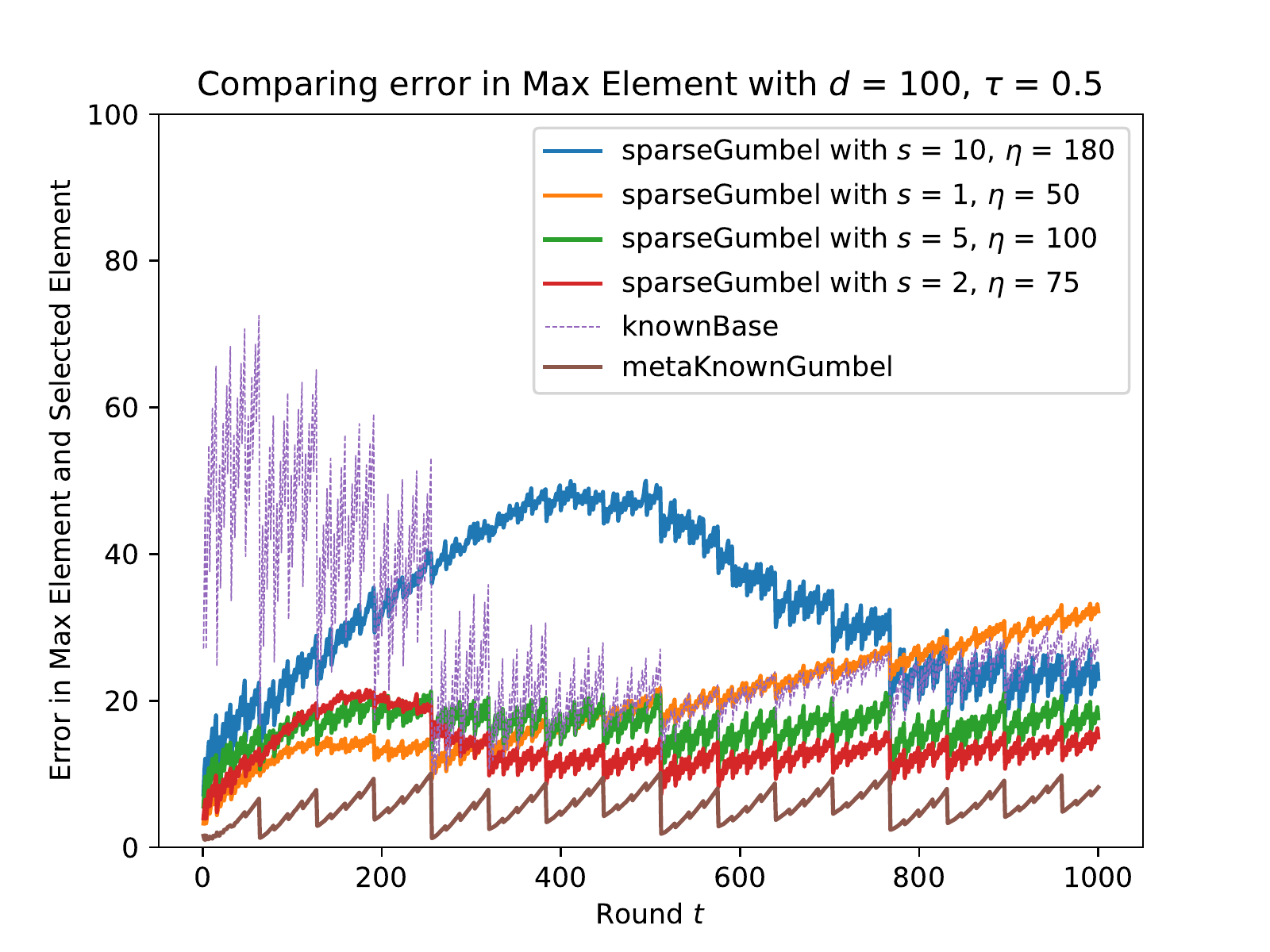}
\caption{(Left) We give the Zipf's Law distribution that items are sampled from at each round of an event stream. (Right)
We compare $\sparseGumb{s,1}$ with $\knownBase{}$ and $\metaAlgo$ with unrestricted $\ell_0$-sensitivity, denoted as `metaKnownGumbel' in the plot.\label{fig:sparseGumb}}
\end{figure}

To generate a stream of data, we sample an item from a distribution following Zipf's Law, as it models many data sources that occur in nature well, given in the left plot of Figure~\ref{fig:sparseGumb}, with $d=100$ items.  The right plot in Figure~\ref{fig:sparseGumb} shows the error between the count of the true max item with the noisy count of the selected top-$1$ item $\sparseGumb{s,1}$ at each round $t \in [1000]$ with various $s$ and $\eta \equiv \eta_t$.   We compare this algorithm with both the $\knownBase{}$ algorithm, where we use $\Delta_0 = d$, since we are assuming unrestricted $\ell_0$-sensitivity, and we also compare the results with $\metaAlgo$ in the same setting.  In our experiments, we will equalize the privacy level in all algorithms.  Hence, we will use $\tau \gets \tau \sqrt{d}$ in $\knownBase{}$, $\tau \gets \tau\sqrt{2}$ in $\metaAlgo{}$, and $\tau \gets \tau \sqrt{6}$ in $\sparseGumb{s,1}$, so that each will be $\frac{1}{2\tau^2}$-zCDP. 
As expected $\metaAlgo$ outperforms the other algorithms, but recall that at each round $t$ it needs the full stream $\omega_{1:t}$, which may be impractical in some situations (see Section~\ref{sect:intro}). Instead, $\sparseGumb{}$ only requires the current aggregate histogram. It is interesting to notice the behavior of $\sparseGumb{}$ with respect to $s$, with very few switches ($s=1$) the algorithm runs out of switches before the maximum element is learned and thus the error seems to increase linearly. If we allow $\sparseGumb{}$ more switches, it does not run out of switches very quickly, unfortunately the magnitude of the noise scales with $\sqrt{s}$ thus hurting accuracy. The right number of switches is a parameter that the practitioner needs to tune to balance the amount of noise incurred and the number of times the distribution is expected to change. The plots show the average error at each $t \in [1000]$ over 1000 independent trials.

\section{Revisiting One-Shot Restricted $\ell_0$-sensitivity, Unknown Domain \label{sect:revisit}} 

In this section we present a new one-shot algorithm, $\unkGauss{\bk}$ for the restricted $\ell_0$-sensitivity and unknown domain setting. We first point out that in the one-shot case, we are considering only having access to a limited number of elements from the true histogram, in particular the top-$(\bk+1)$ elements, rather than the full histogram.  This is particularly useful for case when DP algorithms can only be applied to an aggregated result that is limited by how many elements can be fetched from the original dataset, see for example the setting in \cite{RogersSuPeDuLeKaSaAh20}.  When the full histogram is available yet the labels of the elements are unknown, algorithms from \cite{KorolovaKeMiNt09} and \cite{GoogleDPSQL} can be used.

The first algorithm for this limited histogram setting $\texttt{LimitDom}_\lap$ was developed by \citet{DurfeeRo19}. Through a new analysis (which we will also use in Section~\ref{sect:UnkBase}) we show that $\unkGauss{\bk}$ attains a better privacy guarantee than $\texttt{LimitDom}_\lap$ with the same level of noise. The algorithm is simple; given access to the $(\bk+1)$ highest ranked elements in the histogram, it adds Gaussian noise to each element and releases only those with noisy counts above a threshold $h_\bot \defeq h_{(k+1)} + 1 + \sqrt{2} \tau \Phi^{-1}(1-\delta)$ with $\delta>0$, which we label as $\bot$, and also has noise added to it. x

In both analyses, the neighboring datasets are given. From the neighboring datasets, the set of outcomes can be partitioned into good outcomes (those that can occur with both datasets) and bad outcomes (those that can only occur in one dataset).  However, notice that the set of elements that we add noise to in either dataset differs, since the top-$\bk$ in one dataset might be different than the top-$\bk$ in the other.  The earlier analysis consisted of applying the Laplace Mechanism only over the common elements in both datasets and showing that the probability of any good outcome from this Laplace mechanism over common elements is within $\delta$ of the probability of any good outcome from the Laplace mechanism over the full top-$\bk$ elements. Unfortunately, this resulted in a final $(\Delta_0\epsilon,\delta + e^{\epsilon \Delta_0} \delta)$-DP guarantee.  We will show that with the same amount of noise, we can achieve better privacy guarantees, without the $e^{\epsilon\Delta_0}$ factor on the $\delta$ term, by relabeling elements that cannot be released in both datasets.

\begin{algorithm}[h!]
	\caption{$\unkGauss{\bk}$; $\Delta_0$-Restricted Sensitivity Gaussian Mechanism with top-$(\bk+1)$}
	\begin{algorithmic} 
		\State \textbf{Input:} Histogram $\bbh = \{(h_u,u) : u \in \cU \}$, cut off at $\bk$, along with parameters $\tau,\delta$.
		\State \textbf{Output:} Noisy histogram with labels $\{i_j\}$ and noisy counts $\{v_{i_j}\}$.
		\State Let $h_{i_{(1)}} \geq h_{i_{(2)}} \geq \cdots  \geq h_{i_{(\bk)}} \geq h_{i_{(\bk+1)}} \geq \cdots \geq h_{i_{(d)}}$, with corresponding labels $i_{(j)} \in \cU$ for $j \in [d]$
		\State Set $v_\bot = h_{i_{(\bk + 1)}}+ 1 +  \sqrt{2}\tau \Phi^{-1}(1 - \delta)  + \Normal{0}{\tau^2}$ \algorithmiccomment{Set (data-dependent) noisy threshold}
		\State Set discovered set $\cD = \emptyset$ 
		\For{$u \in \cU$ such that $u \in \{i_{(j)} : j \in [\bk] \}$ and $h_u > 0$}\algorithmiccomment{Add noise to each element in top-$\bk$}
			\State Set $v_{u} =  \Normal{h_{u}}{\tau^2}$ with label $u$.
			\State $\cD \gets \cD \cup \{u \}$
		\EndFor
        \State Sort $\{v_u : u \in \cD\} \cup v_\bot$
        \State Let $v_{i_{(1)}},....,v_{i_{(\ell)}}$ be the counts in descending order until $v_\bot$, with relative labels $i_{(1)}, i_{(2)}, \cdots, i_{(\ell)} $.
		\State Return $\{(v_{i_{(1)}}, i_{(1)} ),...,(v_{i_{(\ell)}}, i_{(\ell)} )\}$ \algorithmiccomment{Return elements above the noisy threshold}		
	\end{algorithmic}\label{algo:UnkGauss}
\end{algorithm}

\begin{theorem}
\label{thm:main_UnkGauss}
For histograms with $\ell_0$-sensitivity $\Delta_0$ and $\ell_\infty$-sensitivity $1$, $\unkGauss{\bk}(\cdot; \tau,\delta)$ is $(\epsilon(\tfrac{\Delta_0}{2\tau^2},\delta'),\Delta_0\delta + \delta')$-DP for any $\delta' > 0$ with $\epsilon(\cdot,\cdot)$ in \eqref{eq:epsilon}.
\end{theorem}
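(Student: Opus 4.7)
My plan is to show that $\unkGauss{\bk}$ is $\Delta_0\delta$-approximate $\tfrac{\Delta_0}{2\tau^2}$-zCDP and then invoke the zCDP-to-DP conversion from \citet{BunSt16} to obtain the stated $(\epsilon,\delta)$-DP guarantee. Fix neighboring histograms $\bbh,\bbh'$ and write $S = \{i_{(1)},\dots,i_{(\bk)}\}$ and $S' = \{i'_{(1)},\dots,i'_{(\bk)}\}$ for their respective top-$\bk$ label sets. Since the $\ell_0$-sensitivity is $\Delta_0$ and the $\ell_\infty$-sensitivity is $1$, at most $\Delta_0$ counts differ between $\bbh$ and $\bbh'$, each by at most $1$. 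The difficulty is that $S \neq S'$ in general, so the two mechanisms add noise to different label sets. My strategy is to identify a ``good'' event on each side under which the mechanisms can be coupled to the \emph{same} label set, then show that the leftover ``bad'' events have probability at most $\Delta_0 \delta$.

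The first step is to describe the bad events. Call a label $u$ \emph{asymmetric} if it lies in exactly one of $S,S'$. There are at most $\Delta_0$ asymmetric labels. Let $E$ be the event that every label $u \in S\setminus S'$ satisfies $v_u \leq v_\bot$ when running on $\bbh$, and define $E'$ analogously for $\bbh'$. I would show $\Pr[E],\Pr[E'] \geq 1 - \Delta_0\delta$ as follows: for $u \in S\setminus S'$, the true count $h'_u$ in $\bbh'$ is at most $h'_{i'_{(\bk+1)}}$, and by the $\ell_\infty$-sensitivity, $h_u \leq h'_u + 1 \leq h'_{i'_{(\bk+1)}} + 1$; the noisy threshold exceeds $h'_{i'_{(\bk+1)}} + 1 + \sqrt{2}\tau\Phi^{-1}(1-\delta) + \mathcal{N}(0,\tau^2)$, and the difference of the two independent Gaussians on $v_u$ and $v_\bot$ is distributed as $\mathcal{N}(0,2\tau^2)$, so the probability $v_u > v_\bot$ is at most $\delta$ by choice of the deterministic shift $\sqrt{2}\tau\Phi^{-1}(1-\delta)$. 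A union bound over the $\leq \Delta_0$ asymmetric labels yields the claim.

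Next, I would analyze the conditional distributions $M(\bbh)\mid E$ and $M(\bbh')\mid E'$. Under $E$, asymmetric labels from $S\setminus S'$ are forced to $\bot$, and symmetrically under $E'$. Thus both conditional mechanisms can be viewed as emitting noisy counts only for labels in the common set $S\cap S'$ together with $\bot$. Relabel the coordinates so that $\bbh$ and $\bbh'$ become aligned vectors indexed by $S\cap S'$ (padded with zeros where needed), whose difference has $\ell_0$-norm at most $\Delta_0$ and $\ell_\infty$-norm at most $1$, hence $\ell_2$-norm at most $\sqrt{\Delta_0}$; the noisy threshold $v_\bot$ may be treated as one more coordinate and absorbed into this sensitivity bound since its deterministic shift is based on $h_{i_{(\bk+1)}}$ which also has bounded sensitivity. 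The conditional mechanism is then a Gaussian mechanism with noise scale $\tau$ on a vector of $\ell_2$-sensitivity $\sqrt{\Delta_0}$ (composed with the deterministic postprocessing of sorting and thresholding), which is $\tfrac{\Delta_0}{2\tau^2}$-zCDP by Lemma 2.5 of \citet{BunSt16}. Combining with the bad-event bound satisfies Definition~\ref{defn:zCDP}, and applying the conversion lemma with parameter $\delta'$ finishes the proof.

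The main obstacle will be step two: handling $v_\bot$ carefully, since it is data-dependent in a way that is coupled to the other noisy counts through the thresholding step, and making sure the coupling between $M(\bbh)\mid E$ and $M(\bbh')\mid E'$ can genuinely be cast as a single Gaussian mechanism with $\ell_2$-sensitivity $\sqrt{\Delta_0}$ rather than picking up extra factors. Getting the bookkeeping right — in particular, that the deterministic quantity $h_{i_{(\bk+1)}}$ appearing inside $v_\bot$ contributes only its own bounded sensitivity to the aligned vector, and that postprocessing (sort-and-threshold) preserves zCDP — is the crux of the argument and is precisely what separates this bound from the weaker $e^{\epsilon\Delta_0}\delta$ factor in the prior analysis of $\texttt{LimitDom}_\lap$.
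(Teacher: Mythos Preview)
Your overall plan (prove $\Delta_0\delta$-approximate $\tfrac{\Delta_0}{2\tau^2}$-zCDP and then convert) would yield the stated bound, and your bad-event step is essentially the paper's Lemma~\ref{lem:gauss_del}. The gap is exactly where you suspect it: step two does not go through as written.

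The problem is that conditioning on $E$ genuinely distorts the distribution of $v_\bot$. The event $E=\{v_u\le v_\bot:\ u\in S\setminus S'\}$ ties the asymmetric noises to $v_\bot$, so the conditional marginal of $v_\bot$ is no longer $\Normal{h_\bot}{\tau^2}$; it is biased upward in a way that depends on the counts $\{h_u:u\in S\setminus S'\}$. On the other side, $E'$ involves the \emph{different} label set $S'\setminus S$ and the counts $\{h'_u:u\in S'\setminus S\}$, so $v'_\bot$ is distorted differently. Consequently, ``both conditional mechanisms can be viewed as emitting noisy counts only for labels in $S\cap S'$ together with $\bot$'' is not a Gaussian mechanism on a common vector of $\ell_2$-sensitivity $\sqrt{\Delta_0}$; the $\bot$ coordinate carries two different, data-dependent conditional laws, and you have no direct handle on $D_\alpha(M(\bbh)\mid E \,\|\, M(\bbh')\mid E')$. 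Restricting to $S\cap S'$ and ``padding with zeros'' discards the asymmetric noises rather than accounting for their effect on the threshold, which is precisely the coupling you flagged as the obstacle.

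The paper sidesteps this by \emph{not} conditioning. It builds an auxiliary mechanism $\gaussMechBot{\bk}(b;\bbh^{(0)},\bbh^{(1)})$ that keeps all $\bk{+}1$ noises (including on the asymmetric elements) but \emph{relabels} the asymmetric elements to common placeholder labels $\{B_\ell\}$. This is an unconditional Gaussian mechanism on a histogram $\bbv^{(b)}$, and the sensitivity analysis (Lemmas~\ref{lemma:unkGauss_ell_infinity} and~\ref{lemma:unkGauss_ell_zero}) shows $\|\bbv^{(0)}-\bbv^{(1)}\|_0\le\Delta_0$ and $\|\bbv^{(0)}-\bbv^{(1)}\|_\infty\le 1$ even after relabeling, so it is $\tfrac{\Delta_0}{2\tau^2}$-zCDP outright. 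After post-processing (sort, truncate at $\bot$, drop dummy labels via Lemma~\ref{lem:dummy}), this auxiliary mechanism agrees with $\unkGauss{\bk}(\bbh^{(b)})$ on every good outcome, and Lemma~\ref{lem:meta} (which compares \emph{outcome probabilities}, not conditional laws) combines this with the $\Delta_0\delta$ bad-outcome bound. The missing idea in your proposal is precisely this relabeling: it lets the asymmetric noises remain in the mechanism (so $v_\bot$ is untouched and stays Gaussian) while making the two neighbors live on a common label set, avoiding any conditioning.
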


We now describe the proof technique used to prove Theorem~\ref{thm:main_UnkGauss}, which will also be used in Section~\ref{sect:UnkBase} to analyze our more practical DP algorithm for the continual observation setting with $\ell_0$-sensitivity $\Delta_0$ and unknown domain.  We first set up some notation.  Let $M$ be a mechanism that takes input datasets $\bbh$ to some arbitrary outcome space.  For any two datasets $\bbh^{(0)}$ and $\bbh^{(1)}$, we define the \emph{good} outcome sets $\cG_M$, as outcomes that can occur with input $\bbh^{(0)}$ and $\bbh^{(1)}$ and the bad outcome sets $\cB_M^{b}$ for $b \in \{0,1\}$, as outcomes of $M$ that can occur with input $\bbh^{(b)}$ but not $\bbh^{(1-b)}$.  

The following result allows us to determine the privacy of a particular mechanism by analyzing the privacy of a related mechanism with access to both neighboring datasets.

\begin{lemma}\label{lem:meta}
Let $\bbh^{(0)}$ and $\bbh^{(1)}$ be two neighboring datasets.  Suppose there exists a mechanism $A(b;\bbh^{(0)}, \bbh^{(1)})$ where $b \in \{ 0,1\}$ such that for any outcome set $S \subseteq \cG_M$, we have
$
\Pr[M(\bbh^{(b)}) \in S] = \Pr[A(b;\bbh^{(0)}, \bbh^{(1)}) \in S].
$
Further, suppose that $\Pr[M(\bbh^{(b)}) \in \cB^b_M] \leq \delta$ for $b \in \{ 0,1\}$.
If $A(\cdot ; \bbh^{(0)}, \bbh^{(1)})$ is $(\epsilon,\delta')$-DP, then $M$ is $(\epsilon, \delta + \delta')$-DP.
\end{lemma}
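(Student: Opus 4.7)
The plan is to reduce the privacy analysis of $M$ on a neighboring pair $(\bbh^{(0)}, \bbh^{(1)})$ to the privacy analysis of the auxiliary mechanism $A(\cdot;\bbh^{(0)},\bbh^{(1)})$ by decomposing any target outcome set $S$ into a ``good'' part (handled via $A$'s DP guarantee) and ``bad'' parts (handled by the $\delta$-bound on rare events). Concretely, I would fix an arbitrary outcome set $S$ and a bit $b\in\{0,1\}$, and write
\[
S = (S\cap \cG_M)\,\dot\cup\,(S\cap \cB_M^0)\,\dot\cup\,(S\cap \cB_M^1),
\]
which is a valid partition of the full outcome space by the definitions of $\cG_M$, $\cB_M^0$, $\cB_M^1$.

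Next I would bound $\Pr[M(\bbh^{(b)})\in S]$ term-by-term. The term $\Pr[M(\bbh^{(b)})\in S\cap \cB_M^{1-b}]$ is identically zero because outcomes in $\cB_M^{1-b}$ are, by definition, impossible under input $\bbh^{(b)}$. The term $\Pr[M(\bbh^{(b)})\in S\cap \cB_M^{b}]$ is at most $\Pr[M(\bbh^{(b)})\in \cB_M^{b}]\leq \delta$ by hypothesis. For the good part, the hypothesis gives
\[
\Pr[M(\bbh^{(b)})\in S\cap \cG_M] \;=\; \Pr[A(b;\bbh^{(0)},\bbh^{(1)})\in S\cap \cG_M],
\]
and since $A(\cdot;\bbh^{(0)},\bbh^{(1)})$ is $(\epsilon,\delta')$-DP in its first argument (viewing $b=0$ vs $b=1$ as neighboring inputs in $\{0,1\}$), I can bound this by $e^{\epsilon}\Pr[A(1-b;\bbh^{(0)},\bbh^{(1)})\in S\cap \cG_M]+\delta'$. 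Finally I translate back via the hypothesis again, $\Pr[A(1-b;\cdot)\in S\cap \cG_M]=\Pr[M(\bbh^{(1-b)})\in S\cap \cG_M]\leq \Pr[M(\bbh^{(1-b)})\in S]$.

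Assembling these three pieces yields
\[
\Pr[M(\bbh^{(b)})\in S] \;\leq\; e^\epsilon \Pr[M(\bbh^{(1-b)})\in S]+\delta+\delta',
\]
which is exactly the $(\epsilon,\delta+\delta')$-DP inequality, and since $S$ and $b$ were arbitrary we conclude.

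I do not anticipate a technical obstacle: the entire argument is bookkeeping around the partition into good/bad outcomes. The only subtle point to be careful about is that $A$ is a mechanism on the binary ``dataset'' $\{0,1\}$ with $\bbh^{(0)},\bbh^{(1)}$ hard-coded, so invoking its DP guarantee requires viewing the two values of $b$ as neighbors — I would state this explicitly to make the application of $(\epsilon,\delta')$-DP unambiguous. Everything else is a direct consequence of the fact that bad outcomes under one input have zero mass under the other, which isolates the only probability mass that needs to be compared across the two distributions to the good set where $M$ and $A$ agree.
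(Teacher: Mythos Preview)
Your proposal is correct and follows essentially the same approach as the paper: decompose $S$ into its good and bad parts, bound the bad part by $\delta$, pass through $A$ on the good part to invoke the $(\epsilon,\delta')$-DP guarantee, and translate back. The paper's proof is the same chain of inequalities, just written slightly more tersely (it leaves implicit that $\Pr[M(\bbh^{(b)})\in S\cap\cB_M^{1-b}]=0$ and that $b\leftrightarrow 1-b$ are the two neighboring inputs for $A$).
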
 
\begin{proof}
Fix an outcome set $S$, we then have
\begin{align*}
\Pr[M(\bbh^{(b)}) \in S] & = \Pr[M(\bbh^{(b)}) \in S\cap \cG_M] + \Pr[M(\bbh^{(b)}) \in S \cap \cB_M^b] \\
& \leq \Pr[M(\bbh^{(b)}) \in S\cap \cG_M] +\delta \\
& = \Pr[A(b;\bbh^{(0)}, \bbh^{(1)}) \in S \cap \cG_M] + \delta \\
& \leq e^{\epsilon} \Pr[A(1-b;\bbh^{(0)}, \bbh^{(1)}) \in S \cap \cG_M]  + \delta' + \delta \\
& = e^{\epsilon} \Pr[M(\bbh^{(1-b)}) \in S \cap \cG_M] + \delta' + \delta\\
& = e^{\epsilon} \Pr[M(\bbh^{(1-b)}) \in S] + \delta' + \delta
\end{align*}
\end{proof}

Hence, to prove the privacy of $\unkGauss{\bk}$, we show that bad outcomes occur with negligible probability and that there is a mechanism on shared outcomes of neighboring datasets that is DP.   Note that the parameter $\bk$ in $\unkGauss{\bk}$ means that we only have the top-$(\bk+1)$ elements available from the original histogram.  It might be the case that $\bk$ is larger than the number of elements in the histogram that actually have positive count.  Hence, $\unkGauss{\bk}$ might add noise to fewer than $\bk +1$ elements.  

Consider a slight variant of $\unkGauss{\bk}$, which we denote as $\unkGaussTop{\bk}$, that pads the histogram with zero counts and dummy labels $\{\top_i \}$ to ensure that there are exactly $\bk + 1$ many elements to add noise to. The next lemma shows that adding noise to \emph{dummy} elements but then dropping those elements from the outcome is the same as simply not even considering these dummy elements to begin with. 

\begin{lemma}\label{lem:dummy}
Let  $\bbh = \{ (h_u, u) : u \in \cU \}$ be a histogram with labels for $|\cU| = p$ elements and $M(\bbh)$ return $\bk+1$ counts with labels for $\bk\geq p$ with noise from some distribution $\cP$ where $\{ Z_i \} \stackrel{i.i.d.}{\sim} \cP$ and
\begin{align*}
M(\bbh) = \{ &(h_u + Z_u, u) : u \in \cU \} \cup \{ (h_{\bot} + Z_{\bot}, \bot) \} \cup \\
 &\{ (Z_{\top_j}, \top_j ) : j \in \{ 1, \cdots, \bk - p \} \}.
\end{align*}
Let $M'(\bbh)$ drop elements with counts lower than $\bot$ and then drop any element with label in $\{\top_i\}$.  Let $\hat{M}$ be the mechanism that adds i.i.d. noise from $\cP$ to only counts in $\bbh$ and $\hat{M}'$ drop elements with counts lower than the count labeled $\bot$. Then $M'(\bbh)$ is equal in distribution to $\hat{M}'(\bbh)$. 
\end{lemma}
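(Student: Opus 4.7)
The plan is to couple the randomness between the two mechanisms and show that the two outputs agree as multisets almost surely. First I would place both mechanisms on a common probability space in which the noise variables $\{Z_u : u \in \cU\}$ and $Z_\bot$ are shared, while the extra dummy noise variables $\{Z_{\top_j} : j \in \{1, \ldots, \bk - p\}\}$ are drawn fresh and appear only inside $M$. Under this coupling, the noisy counts assigned to the real labels $u \in \cU$ and the noisy threshold $h_\bot + Z_\bot$ are literally the same bit of randomness in $M$ and in $\hat M$.

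Next I would trace through what $M'(\bbh)$ produces. The first step of $M'$ keeps exactly those items whose noisy count exceeds $h_\bot + Z_\bot$, and the second step removes every item carrying a label in $\{\top_j\}$. Composing these two steps, the dummy slots are discarded regardless of whether their noisy counts fell above or below the threshold, so the final output is precisely
\[
\{ (h_u + Z_u, u) : u \in \cU, \ h_u + Z_u > h_\bot + Z_\bot \}.
\]
In particular, which real labels survive depends only on the shared variables $\{Z_u\}_{u\in\cU}$ and $Z_\bot$; the dummy noise $\{Z_{\top_j}\}$ never influences the surviving real counts because the threshold is $h_\bot + Z_\bot$ rather than some rank-based cutoff.

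Then I would apply the same unwinding to $\hat M'(\bbh)$. By construction $\hat M$ outputs exactly the noisy real counts together with the threshold count, and $\hat M'$ keeps those real items whose noisy count exceeds $h_\bot + Z_\bot$, giving the identical set expression displayed above. Thus under the coupling, $M'(\bbh) = \hat M'(\bbh)$ pointwise, and marginalizing out the independent dummy variables yields equality in distribution. The only genuinely subtle point, and the one I would be careful to spell out, is that the threshold in $M'$ is the \emph{same} noisy $h_\bot + Z_\bot$ used by $\hat M'$ (as opposed to, say, a re-sorted quantile that could move when dummies are inserted); once that is verified, the rest is routine bookkeeping and no real obstacle remains.
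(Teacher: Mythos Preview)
Your coupling argument is correct and complete. The paper, however, takes a different route: it writes down the density of the output of $\hat{M}'$ at a fixed outcome $(z_1,\ldots,z_k)$, integrating out the threshold variable and the real labels that fell below it, and then observes that multiplying by $\prod_j \int_{\R} f(z_j)\,dz_j = 1$ (one factor per dummy) recovers the density of $M'$ at the same outcome. In other words, the paper matches marginal densities explicitly, while you build a common probability space on which the two outputs are pointwise equal.

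Your approach is more elementary and robust: it never requires $\cP$ to have a density, and it sidesteps the bookkeeping of which variables are integrated over which ranges. The paper's computation makes the marginalization visible but is somewhat heavier and tacitly assumes absolute continuity of $\cP$. Both hinge on the same observation you single out as the ``subtle point'': the threshold used to prune in $M'$ is the shared $h_\bot + Z_\bot$, so inserting the dummies cannot shift which real labels survive.
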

\begin{proof}
We need to show that adding independent noise to $\bk$ counts, some of which have $\{\top_i\}$ labels and then dropping these terms is equivalent to having never considered those elements.

Let $f(\cdot)$ be the density function for distribution $\cP$, $f_{M'}(\cdot)$ be the density of $M'$, and $f_{\hat{M}'}$ be the density of $\hat{M}'$.  We fix an outcome of counts  $(z_1, z_2, \cdots, z_k)$ and denote the set of indices that are not in this outcome to be $I$ after dropping counts of $\{ \top_i \}$. The density for mechanism $\hat{M}'$ is then  
\begin{align*}
f_{\hat{M}'}(z_1, \cdots, z_k) & =  \prod_{i=1}^k f(z_i - h_i) \int_{-\infty}^{ \min\{z_i:i \in [k]\}} f\left(z_\bot - h_\bot\right) \prod_{\ell \in I }\left( \int_{-\infty}^{z_\bot} f(z_\ell - h_\ell)  dz_\ell \right) dz_{\bot}    \\
& =  \prod_{i=1}^k f(z_i - h_i) \int_{-\infty}^{ \min\{z_i:i \in [k]\}} f\left(z_\bot - h_\bot\right)\prod_{\ell \in I }\left( \int_{-\infty}^{z_\bot} f(z_\ell - h_\ell)  dz_\ell \right)dz_{\bot}   \\
& \qquad \qquad \cdot \prod_{j = 1}^{\bk - p - 1}\int_{\R} f(z_j) dz_{j} \\
& =  f_{M'}(z_1, \cdots, z_k).
\end{align*}
\end{proof}

Therefore, we prove the privacy of $\unkGaussTop{\bk}$, rather than $\unkGauss{\bk}$, since the latter is equal in distribution to a post-processing function of the former and cannot increase the privacy loss of $\unkGaussTop{\bk}$.  Our privacy analysis consists of analyzing the Gaussian Mechanism and bounding bad events, i.e. events that cannot occur in both neighboring histograms.  We define two domains of labels from a given histogram $\bbh = \left\{ (h_u,u) : u \in \cU \right\}$ with ordered indices $h_{i_{(1)}}\geq h_{i_{(2)}} \geq \cdots \geq h_{i_{(d)}}$.  The first only considers elements with positive count and the second pads the domain with zero counts and \emph{dummy} labels:
\begin{align*}
\cD^{\bk}(\bbh) & \defeq \{ i_{(j)} \in \cU : j \leq \bar{k} \text{ and } h_{i_{(\bk)}} > 0 \} \cup \{ \bot\} \\
\domainT{\bk}{\bbh} & \defeq 
\begin{cases}
\{ i_{(j)} \in \cU: j \leq \bar{k} \} \cup \{ \bot\} ,  \text{ if } h_{i_{(\bk)}} > 0 
\\
\{ i_{(j)} \in \cU : j \leq p  \} \cup \{\top_{1}, \cdots \top_{\bk - p} \} \cup \{ \bot\} , \\
\qquad \qquad \qquad \qquad  \text { if } h_{i_{(p)}}> h_{i_{(p+1)}} = 0
\end{cases}
\end{align*}
Note that the labels $\{\top_j\}$ in $\bbh$ do not exist, and so for any index $\top_j$, its count is $h_{\top_j} = 0$. Consider the Gaussian mechanism $\gaussMechBot{\bk}(b; \bbh^{(0)}, \bbh^{(1)}, \tau )$ that takes a bit $b \in \{0,1\}$ and two neighboring histograms $\bbh^{(0)}$ and $\bbh^{(1)}$ with noise added to the top-$(\bk+1)$ elements from each histogram.  Because the labels need not be the same in the top-$(\bk+1)$ in $\bbh^{(0)}$ and $\bbh^{(1)}$, we assign a common label to the differing \emph{bad} indices, denoted as $\{B_\ell : \ell =1, \cdots, |\domainT{\bk}{\bbh^{(0)}} \setminus \domainT{\bk}{\bbh^{(1)}} | \}$.  

\begin{algorithm}[h!]
	\caption{$\gaussMechBot{\bk}$; Gaussian Mechanism over Limited Domain}
	\begin{algorithmic} 
		\State \textbf{Input:} Bit $b \in \{0,1\}$, neighboring histograms $\bbh^{(0)}$ and $\bbh^{(1)}$, cut off $\bk$, and parameter $\tau$.
		\State \textbf{Output:} Histogram $\bbv$ with labels in $\domainT{\bk}{\bbh^{(0)}} \cap \domainT{\bk}{\bbh^{(1)}}$ and $\{B_\ell : \ell \in [|\domainT{\bk}{\bbh^{(b)}} \setminus \domainT{\bk}{\bbh^{(1-b)}}|] \}$
		\State We relabel the indices in both $\bbh^{(0)}$ and $\bbh^{(1)}$ to form the following histogram $\bbv^{(b)}$ 
		\For{$i \in \domainT{\bk}{\bbh^{(0)}} \cap \domainT{\bk}{\bbh^{(1)}}$}
			\State $\bbv^{(b)} \gets\bbv^{(b)} \cup \{(h_i^{(b)}, i)\}$, where $h_{\top_j}^{(b)} = 0$. \algorithmiccomment{Keep common labels}
		\EndFor
		\State Initialize $\ell = 1$
		\For{$j \in \domainT{\bk}{\bbh^{(b)}} \setminus \domainT{\bk}{\bbh^{(1-b)}}$}\algorithmiccomment{Create ``bad" labels for uncommon elements} 
			\State $\bbv^{(b)} \gets \bbv^{(b)} \cup  \{(h_j^{(b)}, B_\ell)\}$.
			\State $\ell = \ell + 1$
		\EndFor
		\State $\bbv^{(b)} \gets \bbv^{(b)} \cup  \{(h_{(\bk + 1)}^{(b)} + 1 + \sqrt{2}\tau \Phi^{-1}(1 - \delta), \bot)\}$
		\State Add $\Normal{0}{\tau^2}$ to each count in $\bbv^{(b)}$ to form the noisy histogram $\hat\bbv$ \algorithmiccomment{Apply Gaussian Mechanism}
		\State Return $\hat\bbv$		
	\end{algorithmic}\label{algo:GaussMech}
\end{algorithm}

\begin{figure}
\centering
\includegraphics[width=0.75\textwidth]{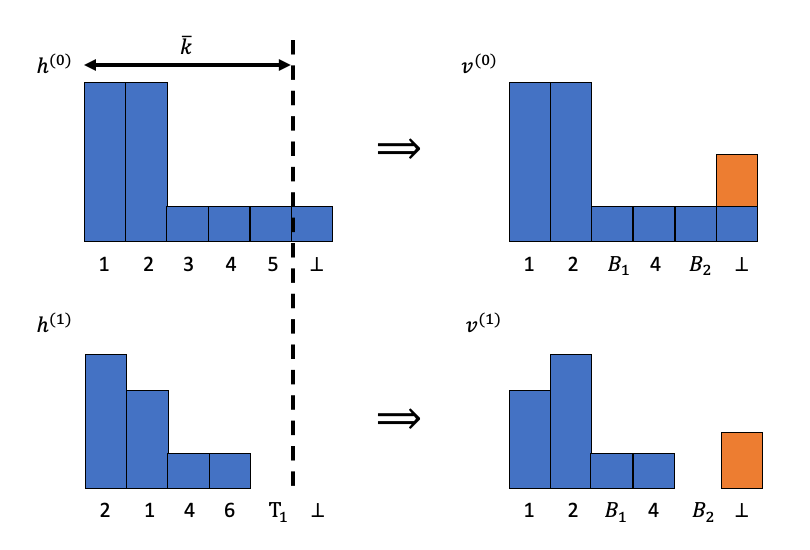}
\caption{Visualizing the construction of $v^{(b)}$ in $\gaussMechBot{\bk}(b;\bbh^{(0)}, \bbh^{(1)})$. Since labels $3,5$ are present in $h^{(0)}$ but not $h^{(1)}$ they get relabeled to $B_1,B_2$ respectively in $v^{(0)}$. Similarly, since the labels $6,\bot_1$ are present in $h^{(1)}$ but not in $h^{(0)}$ they get relabeled to $B_1,B_2$ in $v^{(1)}$.  The additional amount added to $\bot$ in orange denotes the added amount to the count $h_\bot$.}
\end{figure}

Note that once we fix neighboring histograms, $\gaussMechBot{\bk}(\cdot; \bbh^{(0)}, \bbh^{(1)}, \tau )$ is simply the Gaussian mechanism on a new  histogram $\bbv^{(b)}$ that uses the counts from $\bbh^{(b)}$ but whose labels include the common labels from $\bbh^{(0)}$ and $\bbh^{(1)}$, including the $\bot$ element and dummy elements $\{ \top_j \}$, as well as the bad indices $\{ B_j\}$.  Hence, we want to show that $\bbv^{(0)}$ and $\bbv^{(1)}$ can differ in at most $\Delta_0$ bins, i.e. the $\ell_0$-sensitivity of $\bbh^{(b)}$, and in any bin that changes, the counts can differ by at most $1$, i.e. the $\ell_\infty$-sensitivity of $\bbh^{(b)}$.  
We know that for any $j \in \domainT{\bk}{\bbh^{(0)}} \cap \domainT{\bk}{\bbh^{(1)}}$ that $|h_j^{(0)} - h_j^{(1)}| \leq 1$ and hence $|v_j^{(0)} - v_{j}^{(1)}|\leq 1$. We now consider the differing labels.
\begin{lemma}{\label{lemma:unkGauss_ell_infinity}}
Let $\bbh^{(0)}$ and $\bbh^{(1)}$ be neighbors with $\ell_\infty$-sensitivity $1$.  For any $i \in \domainT{\bk}{\bbh^{(b)}} \setminus \domainT{\bk}{\bbh^{(1-b)}}$ and $j \in \domainT{\bk}{\bbh^{(1-b)}} \setminus \domainT{\bk}{\bbh^{(b)}}$ we have $|h_i^{(b)} - h_j^{(1-b)}| \leq 1$.  Furthermore, $|h_{(\bk + 1)}^{(b)} - h_{(\bk + 1)}^{(1-b)}| \leq 1$.
\label{lem:v_ellinfty}
\end{lemma}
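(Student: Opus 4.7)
My plan is to reduce both inequalities to the per-coordinate $\ell_\infty$-sensitivity bound, which gives $|h_u^{(0)} - h_u^{(1)}| \leq 1$ for every label $u \in \cU$ (extending to the dummy labels $\{\top_m\}$ by declaring their count to be zero in any histogram where they do not appear). With this in hand the arguments are almost entirely combinatorial statements about sorted orders.

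For the first inequality, I would characterize $i$ and $j$ purely by their positions in the two sorted orders. Since $i \in \domainT{\bk}{\bbh^{(b)}}$ but $i \notin \domainT{\bk}{\bbh^{(1-b)}}$, the count $h_i^{(b)}$ sits in the top-$\bk$ of $\bbh^{(b)}$ while in $\bbh^{(1-b)}$ the label $i$ has rank at least $\bk+1$. This yields the two sandwich bounds $h_i^{(b)} \geq h_{(\bk+1)}^{(b)}$ and $h_i^{(1-b)} \leq h_{(\bk+1)}^{(1-b)}$, and the symmetric pair $h_j^{(1-b)} \geq h_{(\bk+1)}^{(1-b)}$ and $h_j^{(b)} \leq h_{(\bk+1)}^{(b)}$ holds for $j$. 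Applying $\ell_\infty$-sensitivity to the label $i$ then gives $h_i^{(b)} \leq h_i^{(1-b)} + 1 \leq h_{(\bk+1)}^{(1-b)} + 1 \leq h_j^{(1-b)} + 1$, and applying it to $j$ gives $h_j^{(1-b)} \leq h_j^{(b)} + 1 \leq h_{(\bk+1)}^{(b)} + 1 \leq h_i^{(b)} + 1$. Combining the two inequalities yields $|h_i^{(b)} - h_j^{(1-b)}| \leq 1$, which is the claim.

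For the second inequality, I would invoke the standard fact that the $(\bk+1)$-th order statistic is $1$-Lipschitz with respect to the $\ell_\infty$ norm on the extended universe. Concretely, $\ell_\infty$-sensitivity implies $\{u : h_u^{(b)} \geq t\} \subseteq \{u : h_u^{(1-b)} \geq t-1\}$ for every threshold $t$, so any configuration that puts at least $\bk+1$ coordinates of $\bbh^{(b)}$ above $t$ puts at least $\bk+1$ coordinates of $\bbh^{(1-b)}$ above $t-1$. Taking $t = h_{(\bk+1)}^{(b)}$ gives $h_{(\bk+1)}^{(1-b)} \geq h_{(\bk+1)}^{(b)} - 1$, and symmetry closes the bound.

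The only subtlety worth flagging is the treatment of the dummy labels $\{\top_m\}$, whose membership in $\domainT{\bk}{\bbh^{(b)}}$ depends on how many positive counts are present. After extending each histogram by zero counts to a common universe containing all real and dummy labels, the $\ell_\infty$-sensitivity bound still holds coordinate-wise and every rank comparison above goes through unchanged, so this does not require a separate case analysis. I do not expect a real obstacle here; the lemma is combinatorial once sorting and sensitivity are lined up, and the hardest bookkeeping is just making the dummy convention precise enough that the sandwich inequalities for $i$ and $j$ are stated uniformly.
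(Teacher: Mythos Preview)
Your proposal is correct and follows essentially the same sandwich argument as the paper: pin $h_i^{(b)}$ and $h_j^{(1-b)}$ between appropriate order statistics and then invoke the per-coordinate $\ell_\infty$ bound. The only noteworthy difference is that the paper begins by assuming WLOG that $\bbh^{(0)}$ coordinate-wise dominates $\bbh^{(1)}$ (which is available from the specific add/remove-one-event neighbor relation) and uses this monotonicity directly, whereas your argument stays fully symmetric and relies only on $|h_u^{(0)}-h_u^{(1)}|\le 1$; similarly, for the $(\bk+1)$-th order statistic the paper runs a short contradiction while you give the direct $\ell_\infty$-Lipschitz argument for order statistics. Both routes are equally short, and your version has the mild advantage of not needing the domination WLOG.
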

\begin{proof}
Without loss of generality, we assume that $\bbh^{(0)}$ has larger counts than $\bbh^{(1)}$.  If $i \in \domainT{\bk}{\bbh^{(0)}} \setminus \domainT{\bk}{\bbh^{(1)}}$ then we know that $h_i^{(0)} \geq h_{(\bk)}^{(0)}$ but $h_i^{(1)} \leq h_{(\bk)}^{(1)}$.  We also know that $h_i^{(0)} \leq h_{i}^{(1)} + 1$.  Putting this together, we have
\[
h_{(\bk)}^{(0)} \leq h_i^{(0)} \leq h_{i}^{(1)} + 1 \leq h_{(\bk)}^{(1)} + 1 .
\]
Similarly, for $j \in \domainT{\bk}{\bbh^{(1)}} \setminus \domainT{\bk}{\bbh^{(0)}}$ we have $h_{j}^{(0)}\leq h^{(0)}_{(\bk)}$ and $h_j^{(1)} \geq h_{(\bk)}^{(1)}$.  Further, $h_j^{(1)} \leq h_j^{(0)}$, which gives us
\[
h_{(\bk)}^{(1)} \leq h_j^{(1)}  \leq h_j^{(0)} \leq h^{(0)}_{(\bk)}.
\]
Combining the two, we have
\[
h_{(\bk)}^{(1)} \leq h_j^{(1)} \leq h_i^{(0)} \leq h_{(\bk)}^{(1)} + 1  \implies h_i^{(0)} - h_j^{(1)} \leq 1.
\]
Lastly, we have $h_{(\bk + 1)}^{(1)} \leq h_{(\bk + 1)}^{(0)}$.  Now assume that $h_{(\bk + 1)}^{(0)} > h_{(\bk + 1)}^{(1)} + 1$.  This can only occur if the $(\bk+1)$-th ranked element in $\bbh^{(0)}$ is not the same as the $(\bk + 1)$-th ranked element in $\bbh^{(1)}$, otherwise their count would differ by at most $1$.  Hence, there must be some element $i$ with count $h_{i}^{(1)} \leq h_{(\bk + 1)}^{(1)}$, but $h_{i}^{(0)} \geq h_{(\bk + 1)}^{(0)}$, since that would change the label for the $(\bk + 1)$-th ranked element between $\bbh^{(1)}$ and $\bbh^{(0)}$.  However, $h_i^{(0)} \leq h_i^{(1)} + 1$ and thus
\[
h_{(\bk + 1)}^{(0)} \leq h_i^{(0)} \leq  h_i^{(1)} + 1 \leq h_{(\bk + 1)}^{(1)} + 1.
\]
\end{proof}

We now show that the $\ell_0$-sensitivity between $\bbv^{(0)}$ and $\bbv^{(1)}$ is the same as between $\bbh^{(0)}$ and $\bbh^{(1)}$.
\begin{lemma}\label{lemma:unkGauss_ell_zero}
If $\bbh^{(0)}$ and $\bbh^{(1)}$ differ in at most $\Delta_0$ bins, then $\bbv^{(0)}$ and $\bbv^{(1)}$ also differ in at most $\Delta_0$ bins.
\label{lem:v_ell0}
\end{lemma}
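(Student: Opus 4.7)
The plan is to exhibit an explicit injection from the set of bins at which $\bbv^{(0)}$ and $\bbv^{(1)}$ disagree into the at most $\Delta_0$ items whose counts differ between $\bbh^{(0)}$ and $\bbh^{(1)}$. Without loss of generality I take $\bbh^{(1)}$ to be obtained from $\bbh^{(0)}$ by incrementing by $1$ the counts of the items in some $\omega \subseteq \cU$ with $|\omega| \leq \Delta_0$. I will write $T_b = \domainT{\bk}{\bbh^{(b)}} \setminus \{\bot\}$, $C = T_0 \cap T_1$ and $D_b = T_b \setminus T_{1-b}$, so the bins of $\bbv^{(b)}$ are exactly $C \cup \{B_1, \ldots, B_{|D_0|}\} \cup \{\bot\}$ with $|D_0| = |D_1|$.

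The first step is the auxiliary claim $D_1 \subseteq \omega$: if $j \in D_1 \setminus \omega$ then $h_j^{(0)} = h_j^{(1)}$, and since counts only increase, any item strictly above $j$ in the sorted order of $\bbh^{(0)}$ (or tied with $j$ but placed above by the fixed tie-break) must remain at or above $j$ in $\bbh^{(1)}$, so $j$'s rank cannot improve from $>\bk$ to $\leq \bk$, a contradiction. Combined with $D_1 \cap C = \emptyset$, this already gives $|\omega \cap C| + |D_1| \leq |\omega|$.

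Next I build the injection $\phi$. For each $j \in C$ that is a differing bin, necessarily $j \in \omega$, so set $\phi(j) = j$; for each differing $B_\ell$, let $j_\ell \in D_1$ be the element whose count is placed at $B_\ell$ in $\bbv^{(1)}$ and set $\phi(B_\ell) = j_\ell \in D_1 \subseteq \omega$. Distinct $\ell$'s produce distinct $j_\ell$'s, and these images are disjoint from the common-label images since $C \cap D_1 = \emptyset$. In the easy case $\omega \not\subseteq C \cup D_1$, pick any $u \in \omega \setminus (C \cup D_1)$; $u$ is unused so far, so we may set $\phi(\bot) = u$ whenever $\bot$ is a differing bin, completing the injection with image size at most $|\omega| \leq \Delta_0$.

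The main obstacle is the residual regime $\omega \subseteq C \cup D_1$ with $\bot$ differing. Here the counts on $D_0$ are identical in both histograms while each count in $D_1$ increases by exactly one, and I must show at least one $B_\ell$ matches so that its $D_1$-assignee becomes available for $\phi(\bot)$. Sorting the $B_\ell$'s by decreasing count in $D_0$ on the $\bbv^{(0)}$ side and in $D_1$ on the $\bbv^{(1)}$ side, and combining Lemma~\ref{lem:v_ellinfty} with the sandwiches $h_{(\bk)}^{(0)} \leq h_i^{(0)} \leq h_{(\bk+1)}^{(1)}$ for $i \in D_0$ and $h_{(\bk)}^{(1)} \leq h_j^{(1)} \leq h_{(\bk+1)}^{(0)}+1$ for $j \in D_1$, the extremal $B_\ell$-counts get pinned to the top-$(\bk+1)$ boundary and hence agree, freeing the desired slot for $\bot$. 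Reconciling this matching with the particular tie-breaking rule that produced the partition $(C,D_0,D_1)$ is the technical subtlety that the proof has to handle carefully.
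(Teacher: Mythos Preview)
Your core argument coincides with the paper's, just recast as an explicit injection rather than a direct count. The paper takes (WLOG) $\bbh^{(0)}$ to be the larger neighbor, lets $\ell$ be the number of common labels in $\domainT{\bk}{\bbh^{(0)}} \cap \domainT{\bk}{\bbh^{(1)}}$ at which the counts differ, and observes that every $i \in \domainT{\bk}{\bbh^{(0)}} \setminus \domainT{\bk}{\bbh^{(1)}}$ must have $h_i^{(0)} > h_i^{(1)}$ (if $h_i$ stays fixed while all other counts weakly increase, $i$'s rank cannot improve). This is exactly your auxiliary claim $D_1 \subseteq \omega$ with the roles of $0$ and $1$ swapped. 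Since at most $\Delta_0$ items change count, $\ell$ of them already lie in the common part, and $D_0$ is disjoint from the common part, one gets $|D_0| \leq \Delta_0 - \ell$, hence $\ell + |D_0| \leq \Delta_0$. That is the entire proof in the paper; it never discusses the $\bot$ bin in this lemma.

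Where you go beyond the paper is in trying also to route the $\bot$ bin through the injection. Your ``easy case'' is fine, but the residual regime $\omega \subseteq C \cup D_1$ with $\bot$ differing is only sketched: you assert that extremal $B_\ell$-counts ``get pinned to the top-$(\bk+1)$ boundary and hence agree,'' but you never carry out the comparison, and you explicitly flag the interaction with the tie-breaking rule as an unresolved ``technical subtlety.'' As written this is a gap --- you have stated what you hope is true rather than proved it. If you want to keep $\bot$ inside the $\ell_0$-count you must actually complete that case analysis (which is delicate precisely because of ties); the paper, for its part, simply omits $\bot$ from this lemma's proof and handles the sensitivity of the $\bot$ coordinate only through Lemma~\ref{lem:v_ellinfty}.
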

\begin{proof}
Let $\ell$ be the number of bins that differ between $\bbh^{(0)}$ and $\bbh^{(1)}$ on labels in $\domainT{\bk}{\bbh^{(0)}} \cap \domainT{\bk}{\bbh^{(1)}}$.  Without loss of generality, we assume that $\bbh^{(0)}$ has larger counts than $\bbh^{(1)}$.  We know by definition that $\ell \leq \Delta_0$.  We now show that 
\[
|\domainT{\bk}{\bbh^{(b)}} \setminus \domainT{\bk}{\bbh^{(1-b)}}| \leq \Delta_0 - \ell.
\]
It suffices to only consider $|\domainT{\bk}{\bbh^{(0)}} \setminus \domainT{\bk}{\bbh^{(1)}}|$ since $\domainT{\bk}{\bbh^{(0)}} \setminus \domainT{\bk}{\bbh^{(1)}}$ has the same cardinality.  Note that for any $i \in \domainT{\bk}{\bbh^{(0)}} \setminus \domainT{\bk}{\bbh^{(1)}}$, that implies $h_i^{(0)} > h_{i}^{(1)}$, and we know only $\Delta_0 -\ell$ such additional indices can exist.
If $h_i^{(0)} = h_i^{(1)}$, then the position of index $i$ cannot have moved up the ordering from $\bbh^{(1)}$ to $\bbh^{(0)}$ because we assumed $\bbh^{(0)}$ had larger counts.  Therefore, if $i \notin \domainT{\bk}{\bbh^{(1)}}$ and $h_i^{(0)} = h_i^{(1)}$ we must also have $i \notin \domainT{\bk}{\bbh^{(0)}}$.  Hence, $|\domainT{\bk}{\bbh^{(0)}}\setminus \domainT{\bk}{\bbh^{(1)}}| \leq \Delta_0 - \ell$
\end{proof}

With these two results we can show the following.
\begin{lemma}\label{lem:GaussMechBot}
For any two neighboring histograms $\bbh^{(0)}$ and $\bbh^{(1)}$ with $\ell_0$-sensitivity $\Delta_0$ and $\ell_\infty$-sensitivity $1$, the procedure $\gaussMechBot{\bk}(\cdot; \bbh^{(0)}, \bbh^{(1)},\tau)$ is $\tfrac{\Delta_0}{2\tau^2}$-zCDP
\end{lemma}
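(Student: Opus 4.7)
The plan is to reduce the claim to the standard Gaussian mechanism applied to the data-dependent histogram $\bbv^{(b)}$ constructed inside $\gaussMechBot{\bk}$. Once we fix the two neighboring inputs $\bbh^{(0)}, \bbh^{(1)}$, the routine is just a deterministic relabeling $b \mapsto \bbv^{(b)}$ (on a common label set consisting of the shared indices from $\domainT{\bk}{\bbh^{(0)}} \cap \domainT{\bk}{\bbh^{(1)}}$, the bad indices $\{B_\ell\}$, and the threshold label $\bot$) followed by i.i.d.\ $\Normal{0}{\tau^2}$ noise added coordinatewise. So by Lemma 2.5 of \cite{BunSt16} it suffices to show that the $\ell_2$-sensitivity of the map $b \mapsto \bbv^{(b)}$ is at most $\sqrt{\Delta_0}$; that yields $\tfrac{\Delta_0}{2\tau^2}$-zCDP.

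First I would bound the number of coordinates on which $\bbv^{(0)}$ and $\bbv^{(1)}$ can disagree. Lemma~\ref{lemma:unkGauss_ell_zero} already does most of this work: if $\ell \leq \Delta_0$ bins on the shared domain disagree, then the number of ``bad'' labels on each side is at most $\Delta_0 - \ell$, so the bad labels $\{B_1, \dots, B_{\Delta_0 - \ell}\}$ contribute at most $\Delta_0 - \ell$ more differing coordinates. The $\bot$ coordinate is an additional coordinate that may differ (its value depends on $h_{(\bk+1)}^{(b)}$), but this is counted already within the shared structure; more carefully, I would note that even counting $\bot$, the total number of differing coordinates is at most $\Delta_0$ since the relabeling was constructed precisely to align everything else.

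Next I would bound the per-coordinate difference. On shared labels $i \in \domainT{\bk}{\bbh^{(0)}} \cap \domainT{\bk}{\bbh^{(1)}}$ we have $|v_i^{(0)} - v_i^{(1)}| = |h_i^{(0)} - h_i^{(1)}| \leq 1$ directly from the $\ell_\infty$-sensitivity hypothesis. On bad labels $B_\ell$, Lemma~\ref{lemma:unkGauss_ell_infinity} shows $|h_i^{(0)} - h_j^{(1)}| \leq 1$ for any $i$ and $j$ in the respective set differences, so the bad-label coordinates also differ by at most $1$. For the $\bot$ coordinate, the deterministic offset $1 + \sqrt{2}\tau \Phi^{-1}(1 - \delta)$ is the same under both inputs, and Lemma~\ref{lemma:unkGauss_ell_infinity} gives $|h_{(\bk+1)}^{(0)} - h_{(\bk+1)}^{(1)}| \leq 1$, so this coordinate also differs by at most $1$.

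Combining, $\|\bbv^{(0)} - \bbv^{(1)}\|_2^2 \leq \Delta_0 \cdot 1^2 = \Delta_0$, so the $\ell_2$-sensitivity is $\sqrt{\Delta_0}$. Applying the Gaussian mechanism privacy guarantee with noise scale $\tau$ yields $\tfrac{\Delta_0}{2\tau^2}$-zCDP. The main thing to be careful about is the accounting in the first step: one must make sure that after introducing the $B_\ell$ labels and the $\bot$ label, no double-counting inflates the sensitivity above $\Delta_0$. The key observation is that any label not among the $\ell$ shared differing ones is either identical on both sides or got relabeled into one of the $\leq \Delta_0 - \ell$ bad slots, so the total coordinate-differ count is $\leq \Delta_0$, and $\bot$ is subsumed because changes in $h_{(\bk+1)}^{(b)}$ are driven by the same event that creates the boundary bad labels.
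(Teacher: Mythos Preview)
Your proposal is correct and follows essentially the same route as the paper: reduce $\gaussMechBot{\bk}$ to the Gaussian mechanism on the relabeled histogram $\bbv^{(b)}$, invoke Lemmas~\ref{lemma:unkGauss_ell_infinity} and~\ref{lemma:unkGauss_ell_zero} to bound the $\ell_\infty$- and $\ell_0$-sensitivity of $\bbv^{(b)}$ (hence its $\ell_2$-sensitivity by $\sqrt{\Delta_0}$), and conclude $\tfrac{\Delta_0}{2\tau^2}$-zCDP. The paper's proof is a one-liner citing those two lemmas; your write-up is more detailed, and your explicit treatment of the $\bot$ coordinate (which the paper's argument leaves implicit) is a bit hand-wavy but does not introduce any gap beyond what the paper itself glosses over.
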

\begin{proof}
Follows directly from the sensitivity analysis of the intermediate histogram $\bbv^{(b)}$ from Lemmas~\ref{lem:v_ellinfty} and~\ref{lem:v_ell0}
\end{proof}

We now show that, for a pair of fixed neighboring datasets, $\unkGauss{\bk}$ is equivalent to running a post processing function on $\gaussMechBot{\bk}$ for certain outcomes and that we can bound the probability of other outcomes where they do not align.  We now define \emph{good} and \emph{bad} outcome sets.
\begin{definition}
Given two neighboring histograms $\bbh^{(0)},\bbh^{(1)}$, we define $\cS_{\gauss}^{(b)}$ as the outcome set of $\unkGauss{\bk}(\bbh^{(b)}; \tau , \delta)$.
We then define the \emph{bad outcomes} as 
$
\cB^{(b)}_{\gauss} \defeq \cS_{\gauss}^{(b)} \setminus \cS_{\gauss}^{(1-b)}, \text{ for } b \in \{ 0,1\}.
$
\label{defn:eps_delt_setsGauss}
\end{definition}
Next, we bound the probability of outputting something in $\cB^{(b)}_{\gauss}$, and also show that we can achieve pure DP for the remaining outputs that are common in $\bbh^{(0)}$ and $\bbh^{(1)}$. For bounding the bad outcomes, it suffices to consider each element in $\domainT{\bk}{\bbh^{(b)}} \setminus \domainT{\bk}{\bbh^{(1-b)}}$ and bound the probability that its respective noisy value is above a threshold $h_\bot$ with added noise. Note that the threshold computation will have a simpler analysis than prior work due to the sum of two Gaussians being Gaussian, whereas \citet{DurfeeRo19} considered Laplace noise which does not satisfy the same property.

\begin{lemma}\label{lem:gauss_del}
For neighboring histograms $\bbh^{(0)},\bbh^{(1)}$ with $\ell_0$-sensitivity $\Delta_0$ and $\ell_\infty$-sensitivity $1$, with $b \in \{0,1\}$ we have
\begin{equation}
\Pr[\unkGauss{\bk}(\bbh^{(b)}; \tau,\delta) \in \cB^{(b)}_{\gauss}] \leq  \delta \Delta_0.
\label{eq:barDelta}
\end{equation}
\end{lemma}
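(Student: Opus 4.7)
My plan is to identify bad outcomes concretely, bound each one, and then apply a union bound. A bad outcome in $\cB^{(b)}_{\gauss}$ is (by definition) one that the algorithm can produce on $\bbh^{(b)}$ but never on $\bbh^{(1-b)}$. Since $\unkGauss{\bk}$ only ever outputs labels from $\domainT{\bk}{\bbh^{(b)}}$ (using the padded-with-dummies viewpoint of Lemma on dummy padding), a bad outcome must contain at least one label $i \in \domainT{\bk}{\bbh^{(b)}} \setminus \domainT{\bk}{\bbh^{(1-b)}}$ that rose above the noisy threshold $v_\bot$. Therefore
\[
\Pr\bigl[\unkGauss{\bk}(\bbh^{(b)};\tau,\delta) \in \cB^{(b)}_{\gauss}\bigr] \;\le\; \sum_{i \in \domainT{\bk}{\bbh^{(b)}} \setminus \domainT{\bk}{\bbh^{(1-b)}}} \Pr[v_i > v_\bot],
\]
and by Lemma~\ref{lemma:unkGauss_ell_zero} the sum has at most $\Delta_0$ terms, so it is enough to show each single-label probability is at most $\delta$.

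Next I would produce the per-label bound. Unfolding the algorithm, $v_i = h_i^{(b)} + Z_1$ and $v_\bot = h_{(\bk+1)}^{(b)} + 1 + \sqrt{2}\tau\,\Phi^{-1}(1-\delta) + Z_2$ with $Z_1,Z_2 \stackrel{i.i.d.}{\sim} \Normal{0}{\tau^2}$, so $Z_1 - Z_2 \sim \Normal{0}{2\tau^2}$ and
\[
\Pr[v_i > v_\bot] \;=\; \Pr\!\Bigl[\Normal{0}{1} > \Phi^{-1}(1-\delta) + \tfrac{h_{(\bk+1)}^{(b)} + 1 - h_i^{(b)}}{\sqrt{2}\,\tau}\Bigr].
\]
Thus once I verify the key structural inequality $h_i^{(b)} \le h_{(\bk+1)}^{(b)} + 1$, the argument of $\Phi^{-1}$ is only increased, so the probability is bounded by $\Pr[\Normal{0}{1} > \Phi^{-1}(1-\delta)] = \delta$, which is exactly what I need.

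The main obstacle, and the step that deserves care, is proving $h_i^{(b)} \le h_{(\bk+1)}^{(b)} + 1$ for every $i \in \domainT{\bk}{\bbh^{(b)}} \setminus \domainT{\bk}{\bbh^{(1-b)}}$. I plan to do this by a short case analysis, assuming WLOG that $\bbh^{(0)}$ has the larger counts on every bin (so $\omega_t'=\emptyset$ in the $\bbh^{(1)}$ dataset). In the case $b=0$, since $i \notin \domainT{\bk}{\bbh^{(1)}}$ we have $h_i^{(1)} \le h_{(\bk+1)}^{(1)}$; combining the $\ell_\infty$-sensitivity bound $h_i^{(0)} \le h_i^{(1)} + 1$ with the coordinate-wise monotonicity $h_{(\bk+1)}^{(1)} \le h_{(\bk+1)}^{(0)}$ (which follows because sorting a coordinate-wise dominated vector gives a pointwise dominated sorted vector) yields $h_i^{(0)} \le h_{(\bk+1)}^{(0)} + 1$. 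In the case $b=1$, the monotonicity gives $h_i^{(1)} \le h_i^{(0)} \le h_{(\bk+1)}^{(0)}$, and then Lemma~\ref{lemma:unkGauss_ell_infinity} supplies $h_{(\bk+1)}^{(0)} \le h_{(\bk+1)}^{(1)} + 1$, giving the same conclusion.

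Combining the three ingredients---the characterization of $\cB^{(b)}_{\gauss}$, the cardinality bound $\Delta_0$ from Lemma~\ref{lemma:unkGauss_ell_zero}, and the per-label tail bound built on the structural inequality---delivers
\[
\Pr\bigl[\unkGauss{\bk}(\bbh^{(b)};\tau,\delta) \in \cB^{(b)}_{\gauss}\bigr] \;\le\; \Delta_0 \,\delta,
\]
as required. I also note that the same analysis goes through on $\unkGaussTop{\bk}$ (so dummy $\top_j$ labels with zero count trivially satisfy the structural inequality), and then Lemma on dummy padding transfers the conclusion back to $\unkGauss{\bk}$.
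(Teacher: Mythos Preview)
Your proposal is correct and follows essentially the same approach as the paper: characterize bad outcomes as those containing a label from $\domainT{\bk}{\bbh^{(b)}}\setminus\domainT{\bk}{\bbh^{(1-b)}}$, union bound over that set (size $\le\Delta_0$ by Lemma~\ref{lemma:unkGauss_ell_zero}), and bound each term by $\delta$ using the Gaussian tail. The paper packages the per-label tail bound as a separate Lemma~\ref{lem:delta_bound} (which requires $h_i \le h_{(\bk+1)}+1$) and then invokes it, whereas you inline that computation; and you go a step further by explicitly verifying the structural inequality $h_i^{(b)}\le h_{(\bk+1)}^{(b)}+1$ via a short case analysis, a hypothesis the paper's proof leaves implicit when it cites Lemma~\ref{lem:delta_bound}.
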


In order to prove Lemma~\ref{lem:gauss_del}, we will need to define a mechanism that takes an input domain of indices, as well as a histogram.
\begin{definition}\label{defn:rnmk}[Sorted Gaussian Mechanism over Limited Domain]
We define the \emph{sorted Gaussian mechanism over limited domain} to be $\gaussMax{\bk}$ that takes as input a histogram along with a domain set of indices $\cD$ and returns an ordered list of elements until $\bot$'s count, that is

\begin{equation*}
\gaussMax{\bk}(\bbh,\cD) =
\{(v_{i_{(1)}}, i_{(1)}),...,(v_{i_{(1)}}, i_{(j)}),(v_{\bot},\bot) \}
\end{equation*}
where $(v_{i_{(1)}},...,v_{i_{(j)}},v_\bot)$ is the sorted list until $v_\bot$ of $v_i =  \Normal{h_{(i)}}{\tau^2}$ and $v_\bot = \Normal{h_{\bot}}{\tau^2}$, for each $i \in \cD$ and 
\begin{equation}
h_\bot \defeq h_{(\bk + 1)} + 1 +\sqrt{2} \tau \Phi^{-1}(1 - \delta)
\label{eq:h_bot}
\end{equation}

\end{definition}
Note that $\gaussMax{\bk}(\bbh,\cD^{\bk}(\bbh))$ and $\unkGauss{\bk}(\bbh)$ are equal in distribution.
We will use the following result to prove Lemma~\ref{lem:gauss_del}.
\begin{lemma}\label{lem:delta_bound}
Given an histogram $\bbh$ and some domain $\cD$ that can include dummy $\{ \top_i\}$. For any $i \in \cD$ such that $h_i \leq h_{(\bk + 1)} + 1$, then 
\[
\Pr[i \in \gaussMax{\bk}(\bbh,\cD)] \leq \delta.
\]
\end{lemma}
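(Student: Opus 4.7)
The plan is to observe that the event $\{i \in \gaussMax{\bk}(\bbh, \cD)\}$ is equivalent to the event that the noisy count $v_i$ for $i$ exceeds the noisy threshold $v_\bot$, since by Definition~\ref{defn:rnmk} the mechanism outputs exactly those indices whose noisy value sits above $v_\bot$ in the sorted list. So the whole lemma reduces to bounding $\Pr[v_i > v_\bot]$.

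Next I would expand this probability using the representations $v_i = h_i + Z_i$ and $v_\bot = h_\bot + Z_\bot$, where $Z_i, Z_\bot \overset{\text{i.i.d.}}{\sim} \Normal{0}{\tau^2}$. Then
\[
\Pr[i \in \gaussMax{\bk}(\bbh, \cD)] = \Pr[Z_i - Z_\bot > h_\bot - h_i].
\]
The hypothesis $h_i \leq h_{(\bk+1)} + 1$ combined with the definition \eqref{eq:h_bot} of the threshold gives
\[
h_\bot - h_i \;\geq\; h_{(\bk+1)} + 1 + \sqrt{2}\,\tau\,\Phi^{-1}(1-\delta) - (h_{(\bk+1)} + 1) \;=\; \sqrt{2}\,\tau\,\Phi^{-1}(1-\delta).
\]

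Finally, since $Z_i$ and $Z_\bot$ are independent Gaussians each of variance $\tau^2$, their difference is distributed as $\Normal{0}{2\tau^2}$. Therefore
\[
\Pr[Z_i - Z_\bot > \sqrt{2}\,\tau\,\Phi^{-1}(1-\delta)] = \Pr[\Normal{0}{1} > \Phi^{-1}(1-\delta)] = 1 - \Phi(\Phi^{-1}(1-\delta)) = \delta,
\]
which gives the claim. This is really the reason the threshold was defined with the additive term $\sqrt{2}\,\tau\,\Phi^{-1}(1-\delta)$ in the first place: it is calibrated precisely so that the sum of two independent $\Normal{0}{\tau^2}$ variables concentrates below it with probability $1-\delta$. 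There is no real obstacle here; the only subtlety is matching the $\sqrt{2}$ scaling to the variance of the difference of two Gaussians, which Laplace-based analyses such as \cite{DurfeeRo19} could not exploit since the difference of two Laplaces is not Laplace.
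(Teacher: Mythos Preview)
Your proposal is correct and follows essentially the same approach as the paper: both reduce the event $\{i \in \gaussMax{\bk}(\bbh,\cD)\}$ to $\{v_i > v_\bot\}$, use the hypothesis together with \eqref{eq:h_bot} to bound $h_\bot - h_i$, and finish by noting that the difference of two independent $\Normal{0}{\tau^2}$ variables is $\Normal{0}{2\tau^2}$. The only cosmetic difference is that the paper introduces the shorthand $T = \sqrt{2}\,\tau\,\Phi^{-1}(1-\delta)$ and writes the bound as $\Pr[\Normal{0}{2\tau^2} > T] = 1 - \Phi(T/(\sqrt{2}\tau))$ before substituting back, whereas you standardize directly.
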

\begin{proof}
For simplicity, we will set $T=\tau\sqrt{2} \Phi^{-1}(1 - \delta)$, which implies $h_{\bot} = h_{(\bk + 1)} +1 +  T$ and plug back in at the end of the analysis.
By construction of our mechanism, we know that the noisy estimate of $h_i$ must be greater than the noisy estimate of our threshold $h_{\bot} = h_{(\bk + 1)} + 1 +  T$ to be a possible output, which implies

\[
\Pr[i \in \gaussMax{\bk}(\bbh,\cD)] \leq \Pr[h_i + \Normal{0}{\tau^2} > h_\bot + \Normal{0}{\tau^2}]. 
\]

By assumption, $h_i \leq h_{(\bk + 1)} + 1$, and by the fact that the sum of two independent Gaussians $Z_1, Z_2 \sim \Normal{0}{\tau^2}$ is also Gaussian, i.e.  $Z_1 + Z_2 \sim \Normal{0}{2\tau^2}$,
$$
\Pr[i \in \gaussMax{\bk}(\bbh,\cD)]  \leq \Pr[\Normal{0}{2\tau^2} >  T ] = 1 - \Phi\left( \frac{T}{\sqrt{2} \tau} \right).
$$
Plugging in $T = \sqrt{2} \sigma \Phi^{-1}(1 - \delta)$ gives the result.

\end{proof}

We can now prove Lemma~\ref{lem:gauss_del}.

\begin{proof}[Proof of Lemma~\ref{lem:gauss_del}]

This will follow from a simple union bound on each  $i \in  \domainT{\bk}{\bbh^{(0)}}\setminus \domainT{\bk}{\bbh^{(1)}}$ where we consider each subset of $\cB^{(0)}_{\gauss}$ such that each outcome contains $i$, or more formally we define $\cB^{(0)}_{\gauss}(i) \defeq \{o \in \cB^{(0)}_{\gauss}: i \in o\}$
This then implies that 

\[
\Pr[\gaussMax{\bk}(\bbh^{(0)},\domainT{\bk}{\bbh}) \in \cB^{(0)}_{\gauss}] \leq \sum_{i \in \domain{\bk}{\bbh^{(0)}}\setminus  \domain{\bk}{\bbh^{(1)}}} \Pr[\gaussMax{\bk}(\bbh,\domainT{\bk}{\bbh}) \in \cB^{(0)}_{\gauss}(i)] 
\]
because each outcome $o \in \cB^{(0)}_{\gauss}$ must contain some $i \in  \domainT{\bk}{\bbh^{(0)}}\setminus \domainT{\bk}{\bbh^{(1)}}$ by construction. Furthermore, by construction we also have 

\[
\Pr[\gaussMax{\bk}(\bbh^{(0)},\domain{\bk}{\bbh^{(0)}}) \in \cB^{(0)}_{\gauss}(i)] = \Pr[i \in \gaussMax{\bk}(\bbh^{(0)},\domain{\bk}{\bbh^{(0)}})]
\]

Our claim then immediately follows from Lemma~\ref{lem:delta_bound} and the fact that the size of $\domain{\bk}{\bbh^{(0)}}\setminus  \domain{\bk}{\bbh^{(1)}}$ is at most $\Delta_0$ by Lemma~\ref{lem:v_ell0}.

\end{proof}

We can now prove Theorem~\ref{thm:main_UnkGauss}.
\begin{proof}[Proof of Theorem~\ref{thm:main_UnkGauss}]
We will use Lemma~\ref{lem:meta} to prove this result.  From Lemma~\ref{lem:gauss_del}, we have the probability of bad outcomes being negligible.  We now need to define a mechanism $A(b;\bbh^{(0)}, \bbh^{(1)})$ that matches $\unkGauss{\bk}$ on good outcomes and is DP.  Lemma~\ref{lem:GaussMechBot} shows that $\gaussMechBot{\bk}(\cdot; \bbh^{(0)}, \bbh^{(1)})$ is DP.   We then define a post processing function on $\gaussMechBot{\bk}$.  First, we sort in descending order the elements up until we hit $\bot$ and then we eliminate the rest.  Next, we drop all the dummy labels $\{\top_i\}$ and their noisy counts.  We know from Lemma~\ref{lem:dummy} that sorting up to $\bot$ and dropping the dummy labels is equivalent to never considering the dummy elements in the first place.  Note that this post processing function on $\gaussMechBot{\bk}(b;\bbh^{(0)}, \bbh^{(1)})$ is equivalent to our main algorithm $\unkGauss{\bk}$ for good outcomes.  Because post-processing cannot increase the privacy loss parameters, we can use Lemma~\ref{lem:meta} with $A(b; \bbh^{(0)}, \bbh^{(1)})$ as this post-processing function of $\gaussMechBot{\bk}(b;\bbh^{(0)}, \bbh^{(1)})$.
\end{proof}

\section{Restricted $\ell_0$-sensitivity, Unknown Domain \label{sect:UnkBase}} 

We turn back to the continual observation setting where a user can contribute at most $\Delta_0$ many items at any round, but the domain of items is unknown.  When the domain is not given in advance, it is impossible for an item that no one contributed to in the stream to actually be returned.  However, it is important to point out that the mere existence of a particular item shows that someone in the dataset must have contributed such an item. We will then impose a threshold so that the probability that we display an item with a single count is very small.  We emphasize that even if the domain were known in advance, it still might be desirable to consider this setting, since the domain might be incredibly large making $\knownBase$ computationally expensive.

\begin{algorithm}[h!]
	\caption{$\unkBase$; Return a running histogram}
	\begin{algorithmic}
		\State  \textbf{Input:} Stream $\omega_{1:T} = \omega_1, \cdots, \omega_T$, base $r$, noise $\tau$, $\delta$.
		\State  \textbf{Output:} Noisy histograms $\hat{\bbh}_{1:T}$.
		\For{$t\in [T]$}
			\State  $\cD_t = \emptyset, \cZ = \emptyset$, 
			Let $\cU_t$ be the set of items in $\omega_{1:t}$
			\For{ $u \in \cU_t$}
			\State  Let $\cI_t(r)$ be set of cells $(j,\ell)$ that are used in the representation of $t$ with base $r$ 
				\For{each cell $(j,\ell) \in \cI_t(r)$} \State \algorithmiccomment{Add noise to each $u$, if noise has been added to $u$ before, add the same realization}
					\If{ $ Z_{j,\ell}^u \notin \cZ$}
						\State  Let $Z_{j,\ell}^u \sim \Normal{0}{L_r \tau^2}$, 
						$\cZ \leftarrow \cZ \cup \{(u,j,\ell)\}$
					\EndIf
				\EndFor
				\State  $\hat{h}_t^u = h_t^u + \sum_{(j,\ell) \in \cI_t(r)} Z_{j,\ell}^u$
				\If{$\hat{h}_t^u > m_\delta$ from \eqref{eq:threshold}} \algorithmiccomment{Only add elements above the threshold}
					\State  $\cD_t \gets \cD_t \cup \{u \}$
				\EndIf
			\EndFor
		\EndFor
		\State  Return $( \{ (\hat{h}_{t}^u, u) : u \in \cD_t \} : t \in [T])$.	
	\end{algorithmic}\label{algo:UnkBase}
\end{algorithm}

We present the main algorithm of this section in Algorithm~\ref{algo:UnkBase}.\footnote{The sets $\cI_t(r)$ can be built using the same idea as in $\BinMech$, all one needs is the $r$-nary representation of $t$. For clarity of exposition we do not build the sets in the pseudocode of $\unkBase$.}  We can summarize $\unkBase{}$ as simply taking the items, denoted as $\cU_t$, that have appeared in the stream up to time $t$, form their current histogram $\bbh\left(\omega_{1:t}; \cU_t\right)$, add noise in the way one would in $\knownBase{}$, but only show items if their noisy count is above the following threshold,
\begin{align}\label{eq:threshold} 
m_\delta & \defeq \tau L_r \sqrt{r-1} \Phi^{-1}\left( 1 - \delta / T \right) + 1.
\end{align}

We point out that the algorithm discovers a new set of items $\cD_t$ at each round $t \in [T]$, essentially wiping away the set of items that have already appeared at previous rounds.  However, we still ensure the same privacy level if the algorithm remembers previous items that were discovered but may not have noisy count above the threshold at a later round. This would avoid the strange behavior of some items having a count in some rounds and then disappearing in other rounds, however one would need to remember all the items that were previously discovered at each round.

\subsection{Privacy Analysis}

Our analysis of $\unkBase$ can be thought of as a generalization of the stability based histograms studied in earlier work \citep{KorolovaKeMiNt09, BunNiSt16, Vadhan17}, where only elements with positive counts exist in the histogram.  Directly applying the stability based histogram approach would result in another variant of the $\texttt{MetaAlgo}$, as the histogram in each cell in the partial sum table would have counts below a certain threshold removed.  
We opted to using $\texttt{UnkGauss}$ in the presentation of the $\texttt{MetaAlgo}$ because it is more general than the original stability based histogram approaches, due to it only needing access to the top-$\bar{k}$ counts with positive counts, as opposed to all positive counts.  

The novelty of our approach is then in extending the stability based histogram approach to the case where we only have access to positive counts up to and including round $t \in [T]$, rather than in all sub-streams.  Lemma~\ref{lem:meta} allows us to consider a DP algorithm with access to a given pair of neighboring datasets $x, x'$ and only consider outcomes that can occur with both neighbors, i.e. \emph{good} outcomes.  Hence, we can then consider the two partial sum tables that suffices to compute the running counts for either $x$ or $x'$.  The problem between the two partial sum tables is that there are table cells with elements present for $x$ but not for $x'$.  To address this issue, we introduce zero count elements to each cell, so that each cell has the same number of elements that get noise added to it.  Note that the labels of the zero counts need to be made common across the two partial sum tables, which we can do because we are constructing a DP algorithm that knows $x$ and $x'$.  We can then analyze the privacy of this resulting partial sum table using composition of Gaussian mechanisms.  The last part in our analysis is to bound the probability of all \emph{bad} outcomes, which in this case is when any of the elements that had zero count in $x'$ yet positive count in $x$ appear in any histogram in any $t \in [T]$, which we can do by applying a threshold that is determined by the tail bound of the sum of at most $L_r$ Gaussians, which itself is Gaussian (another reason to use Gaussian noise!).  

We now present the full analysis.  As we did for $\unkGauss{\bk}$, we will instead analyze a slight variant of $\unkBase$, which we call $\unkBaseTop{\bar{d}}$, see Algorithm~\ref{algo:UnkBaseTop}. $\unkBaseTop{\bar{d}}$ pads the set of items with dummy items $\{\top_i\}$ that we add noise to in each cell of the partial sum table, so that each cell has the same cardinality $\bar{d}$ of items, which is some upper bound on the dimension of the set of items.  
\begin{algorithm}[h!]
	\caption{$\unkBaseTop{\bar{d}}$; Return a running histogram}
	\begin{algorithmic}
		\State  \textbf{Input:} Same as $\unkBase$ and an upper bound $\bar{d}$.
		\State  \textbf{Output:} Noisy histograms $\hat{\bbh}_{1:T}$.
		\State  Use threshold $m$ from \eqref{eq:threshold}. 
		\For{cell $(j,\ell)$ in partial histogram table}
				\State \algorithmiccomment{Create the noisy partial histogram table, pad if necessary }
			\State  Define $\cU_{j,\ell}$ as the set of items in the corresponding substream of $\omega_{1:T}$ and let $d_{j,\ell} = |\cU_{j, \ell}|$.
			\State  Include dummy items $\top_{j,\ell} = \{\top_{j,\ell}^1, \cdots \top_{j,\ell}^{\bar{d} - d_{j,\ell}} \}$, let $n_{j,\ell} = \vert \top_{j,\ell} \vert$.
			\State  Form the partial histogram $p_{j,\ell} = (p_{j,\ell}^u: u \in \cU_{j,\ell} \cup \top_{j,\ell})$ for this cell.
			\State  Add independent noise to each count in this cell to get $\hat{p}_{j,\ell} = (p_{j,\ell}^u +\Normal{0}{L_r\tau^2} : u \in \cU_{j,\ell} \cup \top_{j,\ell})$.
			\State  We then have histogram with labels $\hat{\bbp}_{j,\ell} = \{ (\hat{p}_{j,\ell}^u, u) : u \in \cU_{j,\ell} \cup \top_{j,\ell}  \}$
		\EndFor
		\For{$t\in [T]$}
			\State  $\cD_t = \emptyset$
			\State  Let $\cI_t(r)$ be set of cells $(j,\ell)$ that are used in the representation of $t$ with base $r$.
			\State  Let $\cU_t $ be the union of items present in each cell used for the count at time $t$
			\State  Let $\top_t$ be the union of dummy items present in each cell.
			\For{each $(j,\ell)$ cell in $\cI_t(r)$ }
			\State \algorithmiccomment{Relabel dummy items in each cell with items that have newly appeared in the stream}
				\For{$u \in \cU_t \setminus \cU_{j,\ell}$}
					\State  Replace the dummy label with largest index $n_{j,\ell}$ to $u$, i.e. $(\hat{p}_{j,\ell}^{u}, u) \gets (\hat{p}_{j,\ell}^{\top_{j,\ell}^{n_{j,\ell}}}, \top_{j,\ell}^{n_{j,\ell}})$.
					\State  Update $\cU_{j,\ell} \gets \cU_{j,\ell} \cup \{ u\}$, $\top_{j,\ell} \gets \top_{j,\ell} \setminus \{\top^{n_{j,\ell}}_{j,\ell} \}$, and $n_{j,\ell} \gets n_{j,\ell} - 1$.
 				\EndFor
			\EndFor
			\For{ $u \in \cU_t \cup \top_t$} \algorithmiccomment{Aggregate histograms from each cell $\cI_t(r)$}
				\State  $\hat{h}_t^u = \sum_{(j,\ell) \in \cI_t(r)}  \hat{p}_{j,\ell}^u$
				\If{$\hat{h}_t^u > m_\delta$} \algorithmiccomment{Only add elements above the threshold }
					\State  $\cD_t \gets \cD_t \cup \{u \}$
				\EndIf
			\EndFor
		\EndFor
		\State  Return $( \{ (\hat{h}_{t}^u, u) : u \in \cD_t \} : t \in [T])$.	
	\end{algorithmic}\label{algo:UnkBaseTop}
\end{algorithm}
Similar to Lemma~\ref{lem:dummy}, we will show that simply dropping these dummy items later is equivalent to having never considered them.  

\begin{lemma}\label{lem:dummy2}
Let $h \in \N^{p}$ be a histogram with labels $\{i_1, \cdots, i_{p}\}$. Let $M(h)$ be the following for $\bar{d}\geq p$ and $Z_j \stackrel{i.i.d.}{\sim} \cP_j$ and $\hat{h}_j = h_j + Z_j$,
\[
\{ (\hat{h}_1, i_1), \cdots, (\hat{h}_p, i_{p} ), (Z_{p+1}, \top_1),  \cdots, (Z_{\bar{d}},\top_{\bar{d} - p} ) \}.
\]
Let $M'(h)$ be the mechanism that drops all items with counts lower than some threshold $m$ and drops any item with label in $\{\top_i\}$.  Now let $\hat{M}(h)$ be the same as $M(h)$ except it does not include the $\{\top_i\}$ items.  Then $M'(h)$ is equal in distribution to $\hat{M}'(h)$.
\end{lemma}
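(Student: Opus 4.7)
The proof will mirror Lemma~\ref{lem:dummy} almost verbatim; the argument is essentially the observation that integrating out the independent, label-disjoint dummy noise variables gives back $1$. The plan is to fix an arbitrary outcome of $M'(h)$, write its density as a product over the $\bar{d}$ independent noise draws, and show that the dummy factors marginalize to $1$ while the remaining factors coincide with the density of $\hat{M}'(h)$ at the same outcome.

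More concretely, I would first argue that any outcome of $M'(h)$ is a (possibly empty) set of the form $\{(z_u,u): u \in L\}$ for some $L \subseteq \{i_1,\dots,i_p\}$ with $z_u > m$ for every $u \in L$, since the label filter removes every $\top_j$ regardless of count. Fix such an outcome. Under $M'(h)$, the event that this is the realized output is exactly the intersection of: (i) $h_u + Z_u = z_u$ for $u \in L$; (ii) $h_u + Z_u \le m$ for $u \in \{i_1,\dots,i_p\}\setminus L$; and (iii) no constraint whatsoever on $Z_{p+1},\dots,Z_{\bar d}$ (the $\top$-labeled counts are discarded by label, not by threshold). By independence of the $Z_j$'s, the joint density/probability factorizes as
\begin{equation*}
f_{M'(h)}\bigl(\{(z_u,u)\}_{u\in L}\bigr)
= \prod_{u \in L} f_u(z_u - h_u)\;\cdot\; \prod_{u \in \{i_1,\dots,i_p\}\setminus L}\!\Pr[h_u + Z_u \le m]\;\cdot\; \prod_{j=p+1}^{\bar d} \int_{\R} f_j(z)\,dz.
\end{equation*}
The rightmost factor equals $1$ because each $f_j$ is a probability density. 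The remaining two factors are exactly the density of $\hat{M}'(h)$ at the same outcome, since $\hat M$ is defined to omit the dummy draws entirely and apply the identical threshold filter to the real items. Hence the two output distributions agree on every measurable outcome set, proving equality in distribution.

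The only real care needed is bookkeeping: being explicit that outputs are unordered label-indexed sets so that the density is unambiguous, and that the threshold filter is applied to the noisy counts (not the true counts), which is why the integrals in the non-appearing factors for real items are over $(-\infty,m]$ rather than over $\R$. Once the factorization is written, the argument is a one-line integration, completely parallel to the proof of Lemma~\ref{lem:dummy}; no new technical obstacle appears, and in particular the argument does not require the $Z_j$'s to be identically distributed, only independent.
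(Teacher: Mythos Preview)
Your proposal is correct and follows essentially the same approach as the paper: fix an outcome, write the joint density as a product over the independent noise draws, observe that the dummy factors integrate to $1$ over $\R$ (since the $\top$-labels are discarded unconditionally), and identify the remaining factors with the density of $\hat{M}'(h)$. If anything, your bookkeeping is a bit cleaner than the paper's---you are explicit that the $\top$-labeled coordinates are removed by label rather than by threshold, which is exactly why their integration domain is all of $\R$ rather than $(-\infty,m]$.
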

\begin{proof}
We need to show that adding independent noise to $\bar{d}$ counts, of which some have $\{\top_i\}$ labels and then dropping these terms is equivalent to having never considered those items.

Let $f(\cdot)$ be the density function for distribution $\cP$, $f_{M'}(\cdot)$ be the density of $\hat{M}$, and $f_{\hat{M}'}$ be the density of $\hat{M}'$.  We fix an outcome of counts  $(z_1, z_2, \cdots, z_k)$ with $k \leq p$ and denote the set of indices that are not in this outcome to be $I$ after dropping counts of $\{ \top_i \}$.  We then have the density for mechanism $\hat{M}'$ as 
\begin{align*}
f_{\hat{M}'}(z_1, \cdots, z_k) & =  \prod_{i=1}^k f_i(z_i - h_i) \int_{\infty}^{m} \cdots \int_{\infty}^{m}  \prod_{\ell \in I } f_\ell(z_\ell - h_\ell)  dz_\ell     \\
& =   \prod_{i=1}^k f_i(z_i - h_i) \int_{\infty}^{m} \cdots \int_{\infty}^{m}  \prod_{\ell \in I } f_\ell(z_\ell - h_\ell)  dz_\ell   \cdot \prod_{j = p+1}^{\bk} \int_{\R}  f(z_j) dz_{j} \\
& =  f_{M'}(z_1, \cdots, z_k).
\end{align*}
\end{proof}

Algorithm~\ref{algo:GaussMechBase} is a variant of the Gaussian Mechanism that we use when given two neighboring histograms.  Note its similarity with Algorithm~\ref{algo:GaussMech}.  The algorithm $\gaussMech{\bar{d}}$ takes a parameter $\bar{d}$ which is an upper bound on the number of distinct bins of the histograms, this ensures that each cell has access to a full histogram. We will assume that we have access to the full histogram, including the items with 0 counts, rather than just having the top-$(\bk+1)$ as it was assumed in $\gaussMechBot{\bk}$.  

\begin{algorithm}[h!]
	\caption{$\gaussMech{\bar{d}}$; Gaussian Mechanism over Full Domain}
	\begin{algorithmic} 
		\State  \textbf{Input:} Bit $b \in \{0,1\}$, neighboring histograms $\bbh^{(0)}$ and $\bbh^{(1)}$, upper bound  $\bar{d}$, and parameter $\tau$.
		\State  \textbf{Output:} Histogram $\hat{\bbv}$ with labels in $\domainT{\bar{d}}{\bbh^{(0)}} \cap \domainT{\bar{d}}{\bbh^{(1)}}$ and $\{B_\ell : \ell \in [|\domainT{\bar{d}}{\bbh^{(b)}} \setminus \domainT{\bar{d}}{\bbh^{(1-b)}}|] \}$
		\State  Let $\bbv^{(b)} = \emptyset$
		\State  We relabel the labels in both $\bbh^{(0)}$ and $\bbh^{(1)}$ to form the following histogram $\bbv^{(b)}$ 
		\For{$i_{(j)} \in \domainT{\bar{d}}{\bbh^{(0)}} \cap \domainT{\bar{d}}{\bbh^{(1)}}$} \algorithmiccomment{Add common labels}
			\State  $\bbv^{(b)} \gets\bbv^{(b)} \cup \{(h_{i_{(j)}}^{(b)}, i_{(j)})\}$, where $h_{\top_j}^{(b)} = 0$.
		\EndFor
		\State  Initialize set of labels that have a non-dummy label and have a different label in the two datasets
		\[
		\cB = \left\{  \left\{\domainT{\bar{d}}{\bbh^{(1)}} \setminus \domainT{\bar{d}}{\bbh^{(0)}}  \right\} \cup \left\{ \domainT{\bar{d}}{\bbh^{(0)}} \setminus \domainT{\bar{d}}{\bbh^{(1)}}\right\}\right\} \setminus \{\top_i  : i \in [\bar{d}] \}
		\]
		\For{$j \in \domainT{\bar{d}}{\bbh^{(b)}} \setminus \domainT{\bar{d}}{\bbh^{(1-b)}}$} \algorithmiccomment{Add uncommon labels}
			\If{$h_j^{(b)} \geq h_j^{(1-b)}$}
				\State  $\bbv^{(b)} \gets \bbv^{(b)} \cup  \{( h_j^{(b)}, j)\}$.
			\Else
				\State  Select a label from $\cB$, call it $a$
				\State  $\bbv^{(b)} \gets \bbv^{(b)} \cup  \{(0,a)\}$.
				\State  $\cB \gets \cB \setminus \{a\}$
			\EndIf
		\EndFor
		\State  Add $\Normal{0}{\tau^2}$ to each count in $\bbv^{(b)}$ to form the noisy histogram $\hat\bbv$ \algorithmiccomment{Gaussian Mechanism}
		\State  Return $\hat\bbv$		
	\end{algorithmic}\label{algo:GaussMechBase}
\end{algorithm}

We then have the following privacy guarantee of $\gaussMech{\bar{d}}$.

\begin{lemma}\label{lem:GaussMechBase}
For any two neighboring histograms $\bbh^{(0)}$ and $\bbh^{(1)}$ with $\ell_0$-sensitivity $\Delta_0$ and $\ell_\infty$-sensitivity $1$, the procedure $\gaussMech{\bar{d}}(\cdot; \bbh^{(0)}, \bbh^{(1)}, \tau)$ is $\frac{\Delta_0}{2\tau^2}$-zCDP.
\end{lemma}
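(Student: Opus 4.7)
The plan is to mirror the proof of the analogous statement for $\gaussMechBot{\bk}$ (the Lemma stating $\tfrac{\Delta_0}{2\tau^2}$-zCDP for $\gaussMechBot{\bk}$), which reduced the claim to two sensitivity lemmas on the relabeled histogram $\bbv^{(b)}$ followed by a direct invocation of the Gaussian mechanism zCDP bound (Lemma 2.5 of \citet{BunSt16}). The structure of $\gaussMech{\bar{d}}$ is similar, so I would aim to prove that the relabeled $\bbv^{(0)}$ and $\bbv^{(1)}$ have (i) the same label space, (ii) $\ell_\infty$-sensitivity at most $1$, and (iii) $\ell_0$-sensitivity at most $\Delta_0$. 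Granting these, the Gaussian mechanism applied to $\bbv^{(b)}$ with per-coordinate noise $\Normal{0}{\tau^2}$ has $\ell_2$-sensitivity $\sqrt{\Delta_0}$ and hence satisfies $\tfrac{\Delta_0}{2\tau^2}$-zCDP by Lemma 2.5 of \citet{BunSt16}.

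For (i), I would fix a deterministic rule for the step ``select a label from $\cB$'' (for instance, selecting the smallest remaining element under a fixed total order on $\cB$). With this convention, the key observation is that $\cB$ depends only on the pair $(\bbh^{(0)},\bbh^{(1)})$, not on $b$, so the pool of ``bad'' labels available in the two runs is identical; moreover $|D_0|=|D_1|$ since both $\bbv^{(0)}$ and $\bbv^{(1)}$ end up with exactly $|\domainT{\bar{d}}{\bbh^{(0)}}\cap\domainT{\bar{d}}{\bbh^{(1)}}|+|D_b|$ many labels, and the common portion contributes the same labels to both sides. For every $j\in D_b$ with $h_j^{(b)}\geq h_j^{(1-b)}$, the label $j$ is retained in $\bbv^{(b)}$; since $j\in\cB$ (assuming $j$ is not a dummy $\top_i$, in which case $h_j^{(b)}=0\leq h_j^{(1-b)}$ places it in the else-branch anyway), the opposite run draws $j$ from $\cB$ as a relabeling for one of its own $D_{1-b}$ elements. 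Working through this bookkeeping case by case, I would argue the total label multiset of $\bbv^{(0)}$ equals that of $\bbv^{(1)}$.

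For (ii) and (iii), I would adapt Lemmas~\ref{lemma:unkGauss_ell_infinity} and~\ref{lemma:unkGauss_ell_zero}. Common labels $j\in\domainT{\bar{d}}{\bbh^{(0)}}\cap\domainT{\bar{d}}{\bbh^{(1)}}$ satisfy $|h_j^{(0)}-h_j^{(1)}|\leq 1$ directly from the $\ell_\infty$-sensitivity of $\bbh$, and at most $\Delta_0$ of them have unequal counts. For an uncommon label $j\in D_b$ kept with its original count in $\bbv^{(b)}$ but appearing with count $0$ (under a relabeling) in $\bbv^{(1-b)}$, the difference is $h_j^{(b)}\leq h_j^{(1-b)}+1\leq 1$ when $j$ originates from the event-difference (since then $h_j^{(1-b)}=0$), giving $\ell_\infty$ bound $1$; for a label $j\in D_b$ with $h_j^{(b)}<h_j^{(1-b)}$ placed into bin $a\in\cB$ with count $0$ in $\bbv^{(b)}$, the opposite run places $a$ with some count bounded by the same $\ell_\infty$-sensitivity. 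For the $\ell_0$ bound, any bin where $\bbv^{(0)}$ and $\bbv^{(1)}$ disagree must correspond to either a differing common bin of $\bbh$ or an uncommon label arising from a bin of $\bbh$ that changed; since there are at most $\Delta_0$ such bins in total, at most $\Delta_0$ bins of $\bbv$ differ.

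I expect the main obstacle to be step (i): the $\cB$-consumption is stated as ``select a label from $\cB$'' without specifying the order, and making the label sets of $\bbv^{(0)}$ and $\bbv^{(1)}$ coincide requires pinning down a consistent selection rule and then verifying, with a careful case analysis over whether $j\in D_b$ is a dummy $\top_i$ or a real item and whether $h_j^{(b)}\geq h_j^{(1-b)}$, that each label retained in one run appears (perhaps as a relabeling) in the other. The $\ell_0$ and $\ell_\infty$ bounds then parallel Lemmas~\ref{lemma:unkGauss_ell_infinity} and~\ref{lemma:unkGauss_ell_zero} but with the new wrinkle that some uncommon labels carry their true counts rather than being unconditionally zeroed-and-relabeled; this is exactly where the conditional $h_j^{(b)}\geq h_j^{(1-b)}$ branch was designed to preserve the $1$-bounded sensitivity.
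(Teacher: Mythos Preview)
Your proposal is correct and takes essentially the same approach as the paper: the paper's proof is literally the single line ``Follows the same analysis as in Lemmas~\ref{lem:v_ellinfty} and~\ref{lem:v_ell0},'' i.e.\ exactly your plan of verifying $\ell_\infty$- and $\ell_0$-sensitivity of the relabeled $\bbv^{(b)}$ and then invoking the Gaussian mechanism zCDP bound. You have simply fleshed out the label-matching bookkeeping (your step~(i)) that the paper leaves implicit; one small caveat is your parenthetical that dummies fall into the else-branch because $h_j^{(b)}=0\leq h_j^{(1-b)}$---as written the algorithm's test is $\geq$, so equality would technically hit the if-branch, but the intended reading (and the one that makes the label sets coincide) is the one you describe.
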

\begin{proof}
Follows the same analysis as in Lemmas~\ref{lem:v_ellinfty} and~\ref{lem:v_ell0}
\end{proof}

We now show that we can connect $\gaussMech{\bar{d}}$ with our $\unkBaseTop{\bar{d}}$ algorithm on good outcomes, which brings us a step closer to being able to use Lemma~\ref{lem:meta}.  
\begin{lemma}\label{lem:equalAMechanism}
For neighbors $\omega^{(0)}_{1:T}$ and $\omega^{(1)}_{1:T}$ and outcomes $\cG_{\unk}$ that can occur in both $\unkBaseTop{\bar{d}}(\omega_{1:T}^{(b)}; \tau)$ for $b \in \{ 0,1\}$, there exists a mechanism $A(\cdot;\omega_{1:T}^{(0)}, \omega_{1:T}^{(1)}) $ that is $\frac{\Delta_0}{2\tau^2}$-zCDP and for any outcome set $S \subseteq \cG_{\unk}$ we have
$
\Pr[\unkBaseTop{\bar{d}}(\omega^{(b)};\tau) \in S] = \Pr[ A(b; \bbh^{(0)}, \bbh^{(1)}) \in S].
$
\end{lemma}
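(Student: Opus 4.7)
The plan is to define $A(b;\omega^{(0)}_{1:T}, \omega^{(1)}_{1:T})$ by applying the two-sample Gaussian mechanism $\gaussMech{\bar{d}}$ from Algorithm~\ref{algo:GaussMechBase} cell-by-cell to the partial histogram table, and then post-processing by aggregating cells according to $\cI_t(r)$ and applying the threshold $m_\delta$, exactly as $\unkBaseTop{\bar{d}}$ does. Concretely, for each cell $(j,\ell)$ we form the two padded partial histograms $\bbp^{(0)}_{j,\ell}$ and $\bbp^{(1)}_{j,\ell}$ arising from the substreams of $\omega^{(0)}$ and $\omega^{(1)}$ (each padded with dummies $\top^i_{j,\ell}$ to total cardinality $\bar d$), and invoke $\gaussMech{\bar{d}}(b;\bbp^{(0)}_{j,\ell},\bbp^{(1)}_{j,\ell},\sqrt{L_r}\tau)$. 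Since this mechanism only requires the two padded histograms as input, it is well-defined.

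For the privacy bound on $A$, we use the observation made for the Binary Mechanism: neighboring streams produce partial histogram tables that differ in at most $L_r$ cells. In each such differing cell the two padded histograms have $\ell_0$-sensitivity at most $\Delta_0$ and $\ell_\infty$-sensitivity $1$ after the label-relabeling performed inside Algorithm~\ref{algo:GaussMechBase}, so Lemma~\ref{lem:GaussMechBase} gives that each cell is $\tfrac{\Delta_0}{2L_r\tau^2}$-zCDP. Composing across the $L_r$ differing cells by Lemma~\ref{lem:zCDP_Comp}, and noting that the deterministic aggregation and thresholding step is post-processing, we conclude that $A(\cdot;\omega^{(0)},\omega^{(1)})$ is $\tfrac{\Delta_0}{2\tau^2}$-zCDP.

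For the distributional equality on $\cG_{\unk}$, fix any $S \subseteq \cG_{\unk}$. By definition of good outcomes, every label that survives the threshold in any round comes from $\domainT{\bar{d}}{\bbh^{(0)}} \cap \domainT{\bar{d}}{\bbh^{(1)}}$; in particular none of the $\{B_\ell\}$ labels introduced by $\gaussMech{\bar{d}}$ or the $\{\top_i\}$ dummy labels can appear in any outcome of $S$. The noise attached to such surviving labels in $A(b;\cdot)$ is i.i.d. $\Normal{0}{L_r\tau^2}$ in every cell, which matches exactly the noise that $\unkBaseTop{\bar{d}}(\omega^{(b)};\tau)$ would place on those same labels. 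Applying Lemma~\ref{lem:dummy2} cell-by-cell shows that the padding with dummies and subsequent dropping of $\{\top_i\}$-labeled counts has no effect on the joint distribution of the counts that survive the threshold. Hence $\Pr[A(b;\omega^{(0)},\omega^{(1)}) \in S]=\Pr[\unkBaseTop{\bar{d}}(\omega^{(b)};\tau) \in S]$.

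The main obstacle will be making the cell-by-cell matching of noise variables precise: $\unkBaseTop{\bar{d}}$ dynamically relabels dummies as new items appear in the stream, whereas $\gaussMech{\bar{d}}$ pre-assigns a common labeling through the set $\cB$. Because within each cell the noise is i.i.d. across the $\bar d$ slots and the dummy labels are exchangeable, any consistent bijection between the two labelings induces the same joint law on good outcomes; writing this bijection down carefully, and verifying it commutes with the aggregation step over $\cI_t(r)$, is the technical heart of the argument.
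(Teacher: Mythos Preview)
Your proposal is correct and follows essentially the same approach as the paper: construct $A$ by applying $\gaussMech{\bar d}$ to each cell of the partial histogram table with noise scale $\sqrt{L_r}\,\tau$, invoke Lemma~\ref{lem:GaussMechBase} on the at most $L_r$ differing cells, compose via Lemma~\ref{lem:zCDP_Comp}, and then post-process by aggregating and thresholding. Your discussion of the label-bijection subtlety is more explicit than the paper's treatment, but the underlying argument is the same; note also that the paper defers the invocation of Lemma~\ref{lem:dummy2} to the proof of the subsequent theorem rather than using it inside this lemma, though your placement works just as well.
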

\begin{proof}
In $\unkBaseTop{\bar{d}}(\bbh^{(b)}; \tau)$, we are essentially applying the Gaussian mechanism to a histogram in each cell of the partial histogram table.  Consider a cell $(i,j)$ that differs between streams $\omega_{1:T}^{(0)}$ and $\omega_{1:T}^{(1)}$, of which there can be as many as $L_r$ cells.  We apply $\gaussMech{\bar{d}}(b; \bbh^{(0)}, \bbh^{(1)}, L_r\tau)$ to this cell's histogram of counts.  Doing this across all cells that can actually change between the two neighboring streams, we can apply composition and Lemma~\ref{lem:GaussMechBase} to get that releasing the full partial sum table is $\tfrac{\Delta_0}{2\tau^2}$-zCDP.  

We then apply a post-processing function that adds up the corresponding cells of the table to get the aggregate count for each time $t \in [T]$ and removes any item that has count lower than $m$.  Because we are only considering good outcomes, this will ensure that any count with a bad label $\{B_i\}$ is not in the result, hence bad labels have noisy counts less than $m$.  Note that these bad labels are the only terms that could have had different labels than the counts returned in $\unkBaseTop{\bar{d}}(\omega_{1:T}^{(b)}; \tau)$.  Hence, the resulting mechanism is equivalent to $\unkBaseTop{\bar{d}}(\omega_{1:T}^{(b)}; \tau)$ for outcomes in $\cG_{\unk}$.
\end{proof}

We next need to figure out the right threshold $m$ to set that will ensure that bad outcomes occur with negligible probability.  The only way an item $u$ that occurred once in a stream $\omega_{1:T}$ but not in another neighboring stream $\omega'_{1:T}$ can be returned is if there is a noisy count $\hat{h}_t^u > m$ for some $t \in [T]$ and item $u$ that was present in $\omega$ but not in $\omega'$ or vice versa.  
Since the counts are computed as a function of the partial histogram table $\hat{\bbp}$, we need to make sure that all the noisy counts in this table for items that are not common in $\omega$ and $\omega'$ cannot add up to something larger than $m$.  

\begin{lemma}
\label{lem:badOutcomes}
Fix neighbors $\omega_{1:T}^{(0)}$ and $\omega_{1:T}^{(1)}$ with $\ell_0$-sensitivity $\Delta_0$ and define $\cB_{\unk}^{(b)}$ to be the set of outcomes that can occur in $\unkBaseTop{\bar{d}}(\omega_{1:T}^{(b)};\tau; \delta)$ but not in $\unkBaseTop{\bar{d}}(\omega_{1:T}^{(1-b)};\tau,\delta)$.  Then we have
$
\Pr[\unkBaseTop{\bar{d}}(\omega_{1:T}^{(b)};\tau,\delta) \in \cB_{\unk}^{(b)}] \leq \Delta_0\delta.
$
\end{lemma}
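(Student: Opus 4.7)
My plan is to identify precisely what a bad outcome entails, then apply a union bound over the $\Delta_0$ items that can possibly distinguish the two streams, and over the $T$ rounds at which such an item could be released.

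First I would characterize $\cB_{\unk}^{(b)}$. An outcome of $\unkBaseTop{\bar d}$ lists, at each round $t$, the labels whose noisy counts exceed $m_\delta$. For such an outcome to be incompatible with $\omega_{1:T}^{(1-b)}$, it must display at some round $t$ an item $u$ whose true count differs between the two streams (items with identical count trajectories give rise to identical distributions over outputs). Since $\omega_{1:T}^{(0)}$ and $\omega_{1:T}^{(1)}$ are neighbors with $\ell_0$-sensitivity $\Delta_0$, the set $U^{\Delta} \defeq \{u : h_t^{u,(0)} \neq h_t^{u,(1)} \text{ for some } t\}$ has cardinality at most $\Delta_0$, and for any $u \in U^{\Delta}$ the true count $h_t^{u,(b)}$ is at most $1$ at every round $t$ (the differing item is contributed in at most one event).

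Next I would bound, for a fixed differing item $u \in U^{\Delta}$ and fixed round $t$, the probability that $\hat h_t^{u} > m_\delta$. By construction, $\hat h_t^u = h_t^u + \sum_{(j,\ell) \in \cI_t(r)} Z_{j,\ell}^u$ where the $Z_{j,\ell}^u$ are i.i.d.\ $\Normal{0}{L_r \tau^2}$ and $|\cI_t(r)| \leq (r-1)L_r$. Hence $\hat h_t^u \sim \Normal{h_t^u}{(r-1)L_r^2 \tau^2}$ (just as in the utility analysis of Theorem~\ref{thm:BinMechUtility}). Using $h_t^u \leq 1$ and the definition of $m_\delta$ from \eqref{eq:threshold},
\[
\Pr[\hat h_t^u > m_\delta] \;\leq\; \Pr\!\left[\Normal{0}{1} > \frac{m_\delta - 1}{\tau L_r \sqrt{r-1}}\right] \;=\; \Pr\!\left[\Normal{0}{1} > \Phi^{-1}(1-\delta/T)\right] \;=\; \delta/T.
\]

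Finally I would union bound: summing over the $T$ rounds gives that any single item $u \in U^{\Delta}$ appears in the output with probability at most $\delta$, and summing over the at most $\Delta_0$ differing items gives the claimed bound $\Delta_0 \delta$. The main subtlety, and the only nontrivial step, is the characterization of $\cB_{\unk}^{(b)}$: one must argue that the joint distribution of the noisy counts for items with identical trajectories in $\omega^{(0)}$ and $\omega^{(1)}$ is the same (via the common noise realizations stored in $\cZ$ after the relabeling performed in $\unkBaseTop{\bar d}$), so that the only avenue for an outcome to be incompatible with the other neighbor is the appearance of a differing item above the threshold. Given the careful construction of $\gaussMech{\bar d}$ in Algorithm~\ref{algo:GaussMechBase} and the dummy padding handled by Lemma~\ref{lem:dummy2}, this identification is essentially bookkeeping, and the Gaussian tail calculation above then closes the argument.
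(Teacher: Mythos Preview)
Your approach matches the paper's: characterize the bad outcomes, bound them via a Gaussian tail estimate on the noisy count, and union bound over at most $\Delta_0$ items and $T$ rounds. The computation in your middle paragraph is exactly the paper's.

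However, your characterization step contains a slip. You assert that for every $u \in U^{\Delta}$ the true count satisfies $h_t^{u,(b)} \leq 1$ at \emph{every} round $t$, justifying this by ``the differing item is contributed in at most one event.'' That is false: an item $u$ may appear at many events in both streams, with only the event at time $t^*$ differing between neighbors, so its count can be arbitrarily large. If you union-bound over all pairs $(u,t)$ with $u \in U^{\Delta}$ and $t\in[T]$ as written, the tail bound $\Pr[\hat h_t^{u} > m_\delta] \leq \delta/T$ fails for those $(u,t)$ with $h_t^{u,(b)}>1$, and your sum does not close.

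The fix---and this is precisely what the paper argues---is to sharpen the characterization: displaying $u$ at round $t$ is a \emph{bad} outcome only when $u \notin \cU_t^{(1-b)}$, i.e., when $h_t^{u,(1-b)} = 0$. Since neighbors differ in a single event, this forces $h_t^{u,(b)} \leq 1$ at such rounds (this is the paper's ``first time $a$ appeared in the stream'' observation). Restricting your union bound to those $(u,t)$ pairs, the count bound $h_t^{u,(b)}\leq 1$ is legitimate, there are still at most $\Delta_0$ such items and at most $T$ rounds each, and the arithmetic $\Delta_0\cdot T\cdot(\delta/T)=\Delta_0\delta$ goes through. With that correction your argument coincides with the paper's proof.
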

\begin{proof}
We first consider the probability that an item from $\cB_{\unk}^{(b)}$ can be returned at a given time $t$, which must mean that there is a dummy label $\top_t^i$ for one stream and a real label $a$ for the other stream at time $t$.  Let's consider the first time $t$ that the labels do not align in the neighboring streams.  The only way this could happen is if this were the first time $a$ appeared in the stream, since all prior items in both streams are the same.  Hence, the true count of $a$ at time $t$ will be 1 in one stream and 0 in the other.  The additional noise must have caused its count to appear above the threshold $m$.  We then compute the probability that a count can appear above threshold $m$.  This threshold then needs to be set so that all future times will also not have noisy count on $a$ above the threshold until someone else has item $a$ in the stream.  Further, there can be at most $\Delta_0$ many items like $a$, implying that all items that a user contributes at a round $t$ are all the first time they appeared in the stream.

There are multiple ways to do this.  One is to bound the probability that all of the independent Gaussians that are used to compute the count for item $a$ are below $m / \left( (r-1) L_r \right)$, hence any sum of at most $(r-1)L_r$ terms is below $m$.  Another way, is to bound the probability that any sum of these independent Gaussians is below $m$, and take a union bound over all $T$ rounds.  We opt for the latter approach.
\begin{align*}
& \Pr[\unkBaseTop{\bar{d}}(\omega_{1:T}^{(b)};\tau,\delta) \in \cB_{\unk}^{(b)}]  = \Pr[ \exists a \in \cB_{\unk}^{(b)} \text{ s.t. label } a \in \unkBaseTop{\bar{d}}(\omega_{1:T}^{(b)};\tau,\delta) ] \\
& \qquad \leq  \Pr_{\{ Z_{i,j}^u \} \stackrel{i.i.d.}{\sim} \Normal{0}{L_r\tau^2}}\left[ \max_{t \in [T], u \in \cB_{\unk}^{(b)}} \left\{ 1+  \sum_{(j,\ell) \in \cI_t(b)} Z_{i,j}^u \right\} > m_\delta\right] \\
& \qquad \leq \Delta_0 T \cdot \Pr\left[ \Normal{0}{(r-1)L_r^2 \tau^2} > m_\delta - 1\right] \\
& \qquad = \Delta_0 T \cdot \left( 1- \Phi\left(\frac{m_\delta-1}{L_r\tau \sqrt{r-1}}\right) \right).
\end{align*}
The last inequality follows from a union bound.  Setting $m_\delta$ as in \eqref{eq:threshold} gives the bound of $\Delta_0 \delta$.
\end{proof}

We can now state our privacy result, which follows from the privacy of $\unkBaseTop{\bar{d}}$ and recalling that $\unkBase{}$ is a post-processing function of $\unkBaseTop{\bar{d}}$.

\begin{theorem}
$\unkBase{}(\cdot; \tau,\delta)$ is $(\epsilon(\tfrac{\Delta_0}{2\tau^2}, \delta'),\Delta_0\delta+\delta')$-DP for any $\delta' >0$ with $\epsilon(\cdot, \cdot)$ given in \eqref{eq:epsilon}.
\end{theorem}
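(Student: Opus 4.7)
The plan is to assemble the machinery from this section in exactly the same pattern as the proof of Theorem~\ref{thm:main_UnkGauss}: first reduce to the padded variant $\unkBaseTop{\bar{d}}$, then invoke Lemma~\ref{lem:meta} with good-outcome equivalence supplied by Lemma~\ref{lem:equalAMechanism} and the tail bound on bad outcomes supplied by Lemma~\ref{lem:badOutcomes}.

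First I would argue that $\unkBase{}(\cdot;\tau,\delta)$ is equal in distribution to a post-processing function applied to $\unkBaseTop{\bar{d}}(\cdot;\tau,\delta)$, for any upper bound $\bar{d}$ on the number of distinct items that can appear in a length-$T$ stream (e.g.\ $\bar{d}=T\Delta_0$). The post-processing drops all items carrying a $\top$ label and all items whose noisy aggregate count falls below the threshold $m_\delta$ from \eqref{eq:threshold}. Lemma~\ref{lem:dummy2}, applied cell by cell to the partial histogram table, says that adding independent Gaussian noise to zero-count dummy items and then discarding them yields the same joint distribution on surviving counts as never introducing the dummies in the first place. Since post-processing cannot increase the privacy parameters, it suffices to establish the claimed DP guarantee for $\unkBaseTop{\bar{d}}$.

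Next, fix neighboring streams $\omega_{1:T}^{(0)}$ and $\omega_{1:T}^{(1)}$, and let $\cG_{\unk}$ denote the outcomes that can arise under both inputs, with $\cB_{\unk}^{(b)}$ the outcomes arising under $\omega_{1:T}^{(b)}$ but not $\omega_{1:T}^{(1-b)}$, as in Lemma~\ref{lem:badOutcomes}. Lemma~\ref{lem:equalAMechanism} produces a mechanism $A(b;\omega_{1:T}^{(0)},\omega_{1:T}^{(1)})$ that (a) is $\tfrac{\Delta_0}{2\tau^2}$-zCDP in $b$ and (b) agrees with $\unkBaseTop{\bar{d}}(\omega_{1:T}^{(b)};\tau,\delta)$ on every $S\subseteq\cG_{\unk}$. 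Converting zCDP to approximate DP via the standard lemma following Definition~\ref{defn:zCDP}, this $A$ is $(\epsilon(\tfrac{\Delta_0}{2\tau^2},\delta'),\delta')$-DP for any $\delta'>0$. Simultaneously, Lemma~\ref{lem:badOutcomes} shows that the threshold $m_\delta$ was calibrated precisely so that $\Pr[\unkBaseTop{\bar{d}}(\omega_{1:T}^{(b)};\tau,\delta)\in\cB_{\unk}^{(b)}]\leq\Delta_0\delta$.

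Finally, I would plug both ingredients into Lemma~\ref{lem:meta} to conclude that $\unkBaseTop{\bar{d}}(\cdot;\tau,\delta)$ is $(\epsilon(\tfrac{\Delta_0}{2\tau^2},\delta'),\Delta_0\delta+\delta')$-DP, and then transfer the guarantee to $\unkBase{}(\cdot;\tau,\delta)$ via the post-processing reduction from the first step. The main obstacle is genuinely the bookkeeping already absorbed into Lemma~\ref{lem:equalAMechanism}, namely checking that aggregating the per-cell outputs of $\gaussMech{\bar{d}}$ and thresholding at $m_\delta$ reproduces the distribution of $\unkBaseTop{\bar{d}}$ on good outcomes; once that coupling and the threshold calibration in Lemma~\ref{lem:badOutcomes} are in hand, the theorem follows by essentially the same three-line argument that closed the proof of Theorem~\ref{thm:main_UnkGauss}.
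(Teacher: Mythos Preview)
Your proposal is correct and follows essentially the same route as the paper: establish DP for $\unkBaseTop{\bar{d}}$ via Lemma~\ref{lem:meta} combined with Lemmas~\ref{lem:equalAMechanism} and~\ref{lem:badOutcomes}, convert zCDP to approximate DP, and then transfer to $\unkBase{}$ by post-processing through Lemma~\ref{lem:dummy2}. The only nuance the paper spells out that you leave implicit is that in $\unkBaseTop{\bar{d}}$ the noise assigned to a dummy $\top$ label in a cell may later be relabeled to a real item when that item first appears, so the ``drop the dummies'' reduction must also argue that drawing this noise early versus at the moment of relabeling gives the same distribution; this is straightforward but worth stating.
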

\begin{proof}
We first show that $\unkBaseTop{\bar{d}}$ is DP.  This follows by applying Lemma~\ref{lem:meta} with Lemmas~\ref{lem:equalAMechanism} and ~\ref{lem:badOutcomes}.  Now we use Lemma~\ref{lem:dummy2} to show that at any round $t$, dropping the dummy labels $\top_t$ is equivalent to never adding noise to them.  However, we may use the noise allocated for a dummy item in some cells at later rounds. In particular, we replace the dummy label when a new item appears at a later point in $\unkBaseTop{\bar{d}}$.  Whether this noise was drawn earlier for that cell or at the time that it is actually used, both give the same distribution.  Hence, the post-processing function of dropping dummy labels at each round of $\unkBaseTop{\bar{d}}(\omega_{1:T}; \tau,\delta)$ is equivalent to running $\unkBase{}(\omega_{1:T}; \tau,\delta)$.
\end{proof}

\subsection{Utility Analysis}
We then turn to analyzing the utility of $\unkBase{}$.  First we consider the probability that a particular item will appear in the result at time $t$.

\begin{lemma}
Let $\cD_t$ be the discovered set at round $t$ in $\unkBase{}(\omega_{1:T}; \tau,\delta)$.  Let $h_t^u = \sum_{\ell = 1}^t \1{u \in \omega_\ell}$ be the true count for item $u$ in the stream up to round $t$ and assume that it is larger than the threshold $h_t^u = m_\delta + c \cdot \tau$ for some $c > 0$.  We can then bound the probability that $u$ is part of the discovered set at time $t$, 
\[
\Pr[ u \in \cD_t] \geq \Phi\left( \frac{c}{\sqrt{r - 1} L_r} \right).
\]
\end{lemma}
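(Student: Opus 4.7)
The plan is to unpack the definition of the discovered set and reduce the statement to a one-line Gaussian tail computation. By construction in $\unkBase{}$, the event $u \in \cD_t$ is equivalent to the event $\hat h_t^u > m_\delta$, where
\[
\hat h_t^u \;=\; h_t^u + \sum_{(j,\ell) \in \cI_t(r)} Z_{j,\ell}^u,
\qquad Z_{j,\ell}^u \stackrel{i.i.d.}{\sim}\Normal{0}{L_r\tau^2}.
\]
So the first step is simply to substitute the hypothesis $h_t^u = m_\delta + c\tau$ and rewrite the probability as $\Pr[W > -c\tau]$, where $W \defeq \sum_{(j,\ell) \in \cI_t(r)} Z_{j,\ell}^u$ is a sum of independent centered Gaussians.

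Next I need to control $|\cI_t(r)|$. By the same digit-by-digit reconstruction used in $\BinMech$, each cell index contributed to the count at time $t$ corresponds to one of the $s_j(t;r) \in \{0,\dots,r-1\}$ partial sums pulled from row $j$ of the partial sum table, for $j \in \{0,\dots,L_r-1\}$. Hence $|\cI_t(r)| \leq (r-1)L_r$, which is precisely the worst-case bound already exploited in Theorem~\ref{thm:BinMechUtility}. Therefore $W \sim \Normal{0}{\sigma^2}$ with $\sigma^2 \leq (r-1)L_r^2 \tau^2$.

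Finally I would conclude using symmetry and monotonicity of the standard normal CDF:
\[
\Pr[u \in \cD_t] \;=\; \Pr[W > -c\tau] \;=\; \Pr[W < c\tau] \;=\; \Phi\!\left(\frac{c\tau}{\sigma}\right) \;\geq\; \Phi\!\left(\frac{c}{\sqrt{r-1}\,L_r}\right),
\]
where the last inequality uses $\sigma \leq \sqrt{r-1}\,L_r\tau$ together with the fact that $\Phi$ is increasing and $c > 0$.

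There is essentially no hard step here: the only things to be careful about are (i) correctly identifying $|\cI_t(r)| \leq (r-1)L_r$ (which matches the worst-case count used for the Binary Mechanism's utility bound), and (ii) ensuring the inequality direction when replacing the actual variance by its upper bound — since we want a \emph{lower} bound on $\Pr[u \in \cD_t]$ and $c > 0$, plugging in the largest possible $\sigma$ gives the smallest, hence valid, lower bound on $\Phi(c\tau/\sigma)$.
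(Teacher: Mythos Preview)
Your proposal is correct and follows essentially the same approach as the paper: both reduce the event $u\in\cD_t$ to a Gaussian tail bound on $\hat h_t^u - h_t^u$, use that the aggregated noise at time $t$ is $\Normal{0}{|\cI_t(r)|L_r\tau^2}$ with $|\cI_t(r)|\leq (r-1)L_r$, and then apply monotonicity of $\Phi$ with $c>0$ to get the stated lower bound. Your write-up is in fact slightly more careful than the paper's in justifying the inequality direction when passing to the worst-case variance.
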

\begin{proof}
We will write $\cI_r(t)$ as the set of indices in the partial histogram table that gets used to compute the counts at time $t$.  Recall that we will add noise $\Normal{0}{|\cI_r(t)| L_r\tau^2}$ to the true count $h_t^u = m_\delta + c\cdot \tau$ at time $t$.  We then need to ensure that the noisy count will be above the threshold $m_\delta$ given in \eqref{eq:threshold}.  Hence, we have
\begin{align*}
\Pr[ u \in \cD_t]  & = \Pr[\Normal{h_t^u}{|\cI_r(t)| L_r\tau^2} > m_\delta] = 1 - \Phi\left( \frac{-c}{\sqrt{|\cI_r(t)|L_r}} \right) \\
& = \Phi\left(\frac{c}{\sqrt{|\cI_r(t)| L_r}} \right) \geq \Phi\left( \frac{c}{\sqrt{r - 1} L_r} \right)
\end{align*}
\end{proof}
Additionally, for those items $u \in \cU$ that the algorithm releases, we provide bounds on the difference between their true count $h_t^u$ and their noisy count $\hat{h}_t^u$ by noticing that $\hat{h}_t^u$ is the original count $h_t^u$ plus Gaussian noise truncated at threshold $m_\delta$.
\begin{lemma}
Given a stream $\omega_{1:T}$ and an item $u \in \cD_t$ that is part of the discovered set from $(\hat{h}_t^u, u) \in \unkBase{}(\omega_{1:T};\tau,\delta)$ at time $t$, we can then bound the error on its true count $h_t^u= m_\delta + c \cdot \tau$ at time $t$ for $0< c<\eta$ with high probability,
\begin{align*}
& \Pr[ |h_t^u - \hat{h}_t^u| \geq \eta \tau \mid u \in \cD_t] \\
& \qquad \leq  \frac{1 - \Phi\left( \frac{\eta}{\sqrt{(r-1)}L_r} \right) }{ \Phi\left( \frac{c}{\sqrt{(r-1)} L_r} \right)}.
\end{align*}
\end{lemma}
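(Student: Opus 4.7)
The plan is to reduce the statement to a simple Gaussian tail calculation by isolating the noise term. Writing the noisy count as $\hat h_t^u = h_t^u + Z$, where $Z \sim \Normal{0}{|\cI_r(t)| L_r \tau^2}$ is the aggregate Gaussian noise added across the at most $|\cI_r(t)| \leq (r-1) L_r$ partial-histogram cells used to reconstruct the count at time $t$. Since $h_t^u = m_\delta + c\tau$, the event $\{u \in \cD_t\}$ is exactly $\{\hat h_t^u > m_\delta\} = \{Z > -c\tau\}$, and the target event is $\{|Z| \geq \eta \tau\}$. So I am reduced to bounding $\Pr[|Z| \geq \eta\tau \mid Z > -c\tau]$ for a centered Gaussian $Z$.

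Next I would use the hypothesis $0 < c < \eta$ to simplify the absolute value. Because $-\eta \tau < -c\tau$, the event $\{Z \leq -\eta\tau\}$ is disjoint from the conditioning event $\{Z > -c\tau\}$, so the lower tail drops out. Meanwhile $\{Z \geq \eta\tau\} \subseteq \{Z > -c\tau\}$, so the conditional probability collapses to the ratio
\[
\frac{\Pr[Z \geq \eta\tau]}{\Pr[Z > -c\tau]} = \frac{1 - \Phi\!\left(\eta / \sqrt{|\cI_r(t)| L_r}\right)}{\Phi\!\left(c / \sqrt{|\cI_r(t)| L_r}\right)},
\]
which follows from the standard expression of Gaussian tails through the CDF $\Phi$ after normalizing by the standard deviation.

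Finally I would replace the data-dependent quantity $|\cI_r(t)|$ by its worst-case upper bound $|\cI_r(t)| L_r \leq (r-1) L_r^2$, exactly as in the previous utility lemma. Monotonicity of $\Phi$ then gives $1 - \Phi(\eta / \sqrt{|\cI_r(t)| L_r}) \leq 1 - \Phi(\eta / (\sqrt{r-1} L_r))$ in the numerator and $\Phi(c / \sqrt{|\cI_r(t)| L_r}) \geq \Phi(c/(\sqrt{r-1} L_r))$ in the denominator, yielding the claimed bound.

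There is no real obstacle here; the whole argument is a one-dimensional Gaussian calculation conditioned on a half-line. The only subtlety worth stating carefully is the direction in which the bound on $|\cI_r(t)|$ gets applied — it has to relax both the numerator and the denominator simultaneously, which works because $c>0$ makes $\Phi(c/\sqrt{\cdot})$ a monotonically decreasing function of the standard deviation, and $\eta>0$ makes $1 - \Phi(\eta/\sqrt{\cdot})$ monotonically increasing in the standard deviation. Both go the right way when we upper bound $\sqrt{|\cI_r(t)| L_r}$ by $\sqrt{r-1}\, L_r$.
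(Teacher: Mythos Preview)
Your proposal is correct and follows essentially the same approach as the paper: write $\hat h_t^u = h_t^u + Z$ with $Z$ centered Gaussian of variance $|\cI_r(t)|L_r\tau^2$, rewrite both the conditioning event and the error event in terms of $Z$, compute the resulting conditional Gaussian tail, and then replace $|\cI_r(t)|$ by its worst-case bound $(r-1)L_r$ using monotonicity of $\Phi$. Your treatment is in fact slightly cleaner than the paper's, since you invoke $c<\eta$ at the outset to discard the lower tail $\{Z\le -\eta\tau\}$ immediately, whereas the paper carries both tails through an intermediate expression before arriving at the same ratio.
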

\begin{proof}
Note that $\hat{h}_t^u$ is the original count $h_t^u$ plus Gaussian noise truncated at the threshold $m_\delta$.  More specifically, let $\cI_r(t)$ be the set of indices in the partial histogram table that get used to compute the counts at time $t$.  We then have $\hat{h}_t^u$ is distributed as a truncated (at $m_\delta$) Gaussian with mean $h_t^u$ and variance $|\cI_r(t)| L_r \tau^2$. Using the fact that $|\cI_r(t)| \leq (r-1)L_r$ we have
\begin{align*}
\Pr[ |h_t^u - \hat{h}_t^u| \geq \eta \tau \mid u \in \cD_t] &= \Pr_{Z \sim \Normal{h_t^u}{\cI_{r}(t) L_r \tau^2}}\left[ |Z - h_t^u| \geq \eta\tau \mid Z > m_\delta\right] \\
& \leq \Pr[Z < h_t^u - \eta\tau \mid Z > m_\delta] + \Pr[  Z > h_t^u + \eta\tau \mid Z > m_\delta] \\
& = \frac{2\left(1 - \Phi\left( \frac{\eta}{\sqrt{\cI_r(t) L_r}} \right) \right) - \Phi\left( \frac{-c}{\sqrt{\cI_r(t) L_r}} \right)}{1 - \Phi\left( \frac{-c}{\sqrt{\cI_r(t) L_r}} \right)} \\
& =  \frac{1 - 2\Phi\left( \frac{\eta}{\sqrt{\cI_r(t) L_r}}  \right) + \Phi\left( \frac{c}{\sqrt{\cI_r(t) L_r}} \right)}{\Phi\left( \frac{c}{\sqrt{\cI_r(t) L_r}} \right)} \\
& \leq \frac{1 - \Phi\left( \frac{\eta}{\sqrt{\cI_r(t) L_r}}  \right) }{\Phi\left( \frac{c}{\sqrt{\cI_r(t) L_r}} \right)} \\
& \leq  \frac{1 - \Phi\left( \frac{\eta}{\sqrt{r-1} L_r}  \right) }{\Phi\left( \frac{c}{\sqrt{r-1} L_r} \right)}
\end{align*}
\end{proof}

\section{Conclusion \label{sect:conclusion}} 
We have revisited the problem of releasing differentially private histograms in the continual observation model, introduced by \cite{DworkNaPiRo10} and \cite{ChanShSo11}.  We considered event level privacy, where events in a stream can consist of multiple elements, such as a purchase from a pharmacy would be an event yet a customer can purchase multiple drugs at that single event.  We then considered the various DP algorithms for the restricted/unrestricted $\ell_0$-sensitivity with known/unknown domain settings.  These various settings of releasing privatized histograms was originally introduced in \cite{DurfeeRo19} and \cite{RogersSuPeDuLeKaSaAh20}, but not for continual release.  We showed that we can use these existing DP algorithms for continual observation, but it required running the DP algorithms on various subsequences of the event streams, which might be prohibitively expensive in run time for many applications.  We then presented more practical DP algorithms that take the aggregated counts at each round to return a noisy histogram continually for the unrestricted $\ell_0$-sensitivity with unknown domain setting along with the unrestricted $\ell_0$-sensitivity with known domain setting.  There are multiple open research directions here, such as providing practical DP algorithms for the unrestricted $\ell_0$-sensitivity with unknown domain setting.  Further, are there optimal ways to set the thresholds in $\sparseGumb{s,k}$ and what are the various utility results with changing $s$ for each when compared to running $\knownBase{}$, which will have noise that depends on $d$?

\section{Acknowledgments}
We would like the thank the following people for helpful comments throughout this research project:  Parvez Ahammad, David Durfee, Souvik Ghosh, Koray Mancuhan, and Diana Negoescu.

\clearpage

\bibliography{bib}
\bibliographystyle{abbrvnat}

\clearpage

\appendix

\section{Algorithms from Table~\ref{table:tasks} \label{appendix:oldAlgos}}

\begin{algorithm}[h!]
	\caption{$\knownGauss{}$; Gaussian mechanism over known domain $\cU$ with $\ell_0$ sensitivity $\Delta_0$} \label{algo:kGauss}
	\begin{algorithmic} 
		\State \textbf{Input:} Histogram $\bbh = \{ (h_u,u) : u \in \cU \}$, along with noise scale $\tau$.
		\State \textbf{Output:} Noisy result.
            	\State Return $\{(\Normal{h_u}{\tau^2}, u): u \in \cU \}$
	\end{algorithmic}
\end{algorithm}

\begin{theorem}[\citet{BunSt16}] \label{lem:kGaussPriv}
Assume that $||\bbh - \bbh'||_\infty \leq 1$ and $|| \bbh - \bbh'||_0 \leq \Delta_0$ for any neighbors $\bbh,\bbh'$. Then, the algorithm $\knownGauss{}(\cdot; \tau)$ is $\tfrac{\Delta_0}{2\tau^2}$-zCDP and hence $\left(\tfrac{\Delta_0}{2\tau^2} + \tfrac{1}{\tau} \sqrt{2\Delta_0 \ln(1/\delta')}, \delta' \right)$-DP for any $\delta' > 0$. 
\end{theorem}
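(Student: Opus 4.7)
The plan is to reduce this to the standard zCDP analysis of the Gaussian mechanism, which has already been invoked elsewhere in the paper (for instance in the proof that $\BinMech{}$ is $\tfrac{1}{2\tau^2}$-zCDP). The key observation is that the two sensitivity assumptions combine multiplicatively: since $\bbh$ and $\bbh'$ differ in at most $\Delta_0$ coordinates, and on each such coordinate the counts differ by at most $1$, the $\ell_2$ distance satisfies
\[
\|\bbh - \bbh'\|_2 \leq \sqrt{\Delta_0 \cdot 1^2} = \sqrt{\Delta_0}.
\]

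From here I would invoke the standard fact (Lemma 2.5 in \citet{BunSt16}) that the Gaussian mechanism that adds i.i.d.\ $\Normal{0}{\tau^2}$ noise to a function with $\ell_2$-sensitivity $s$ is $\tfrac{s^2}{2\tau^2}$-zCDP. Since $\knownGauss{}(\bbh; \tau)$ releases $\{(h_u + Z_u, u) : u \in \cU\}$ with $Z_u \stackrel{\text{i.i.d.}}{\sim} \Normal{0}{\tau^2}$, this is exactly the Gaussian mechanism applied to the vector-valued query $\bbh \mapsto (h_u)_{u \in \cU}$, whose $\ell_2$-sensitivity we just bounded by $\sqrt{\Delta_0}$. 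Plugging $s = \sqrt{\Delta_0}$ into the zCDP bound gives $\tfrac{\Delta_0}{2\tau^2}$-zCDP, which is the first claim.

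For the DP conversion, I would apply the zCDP-to-DP translation stated earlier in the paper: if $M$ is $\rho$-zCDP, then $M$ is $(\rho + 2\sqrt{\rho \ln(1/\delta')}, \delta')$-DP for every $\delta' > 0$. Setting $\rho = \tfrac{\Delta_0}{2\tau^2}$ yields the stated parameters
\[
\left(\tfrac{\Delta_0}{2\tau^2} + 2\sqrt{\tfrac{\Delta_0}{2\tau^2} \ln(1/\delta')},\, \delta'\right) = \left(\tfrac{\Delta_0}{2\tau^2} + \tfrac{1}{\tau}\sqrt{2\Delta_0 \ln(1/\delta')},\, \delta'\right).
\]
There is no real obstacle here: the only substantive step is the $\ell_2$-sensitivity calculation, and the rest is a direct citation of \citet{BunSt16}. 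If anything, the mildly delicate point is that the labeled output $\{(v_u, u)\}$ is treated as post-processing of the count vector, so releasing the labels themselves does not affect privacy because $\cU$ is public (known domain); this is the place where the known-domain assumption enters the argument and should be stated explicitly for clarity.
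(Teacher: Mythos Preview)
Your proposal is correct. The paper does not supply its own proof of this statement; it simply attributes the result to \citet{BunSt16} and states it without argument in the appendix, so your reconstruction---bounding the $\ell_2$-sensitivity by $\sqrt{\Delta_0}$, invoking Lemma~2.5 of \citet{BunSt16}, and then applying the $\rho$-zCDP to $(\epsilon,\delta)$-DP conversion from \eqref{eq:epsilon}---is exactly the standard derivation the citation points to.
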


\begin{algorithm}[h!]
	\caption{$\knownGumb{k }$; Exponential Mechanism over known domain $\cU$} \label{algo:kGumb}
	\begin{algorithmic} 
		\State \textbf{Input:} Histogram $\bbh = \{ (h_u,u) : u \in \cU \}$, number of outcomes $k$, and noise scale $\tau$.
		\State \textbf{Output:} Ordered set of $k$ indices and counts.
		\State Set $S = \emptyset$
		\For{$u \in \cU$}
			\State Set $v_u = h_u + \gum(\tau/2)$
			\State Update $S \gets S \cup \{ (v_u,u) \}$
		\EndFor
		\State Sort $S$ where $v_{u_{1}} \geq \cdots \geq v_{u_{\vert \cU \vert}}$
		\State Return $\left\{ (\Normal{h_{u_{1}}}{ \tau^2}, u_{1} ),....,(\Normal{h_{u_{k}}}{\tau^2}, u_{k} )\right\}$	
	\end{algorithmic}
\end{algorithm}

\begin{theorem}[\citet{CesarRo20}]
Assume that $||\bbh - \bbh'||_\infty \leq 1$ and $|| \bbh - \bbh'||_0$ is unrestricted for any neighbors $\bbh,\bbh'$. Then, $\knownGumb{k }(\cdot; \tau)$ is $\tfrac{k}{\tau^2}$-zCDP and hence $\left(\tfrac{k}{\tau^2} + \tfrac{2}{\tau} \sqrt{k \ln(1/\delta')}, \delta' \right)$-DP for any $\delta' > 0$. 
\end{theorem}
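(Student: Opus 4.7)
The plan is to decompose $\knownGumb{k}$ into two adaptive stages and bound the zCDP of each separately, then compose via Lemma~\ref{lem:zCDP_Comp}. The first stage adds i.i.d.\ Gumbel noise with scale $\tau/2$ to every count and reads off the top-$k$ labels; the second stage adds independent $\Normal{0}{\tau^2}$ noise to the $k$ selected counts. Because the second stage is adaptive on the output of the first (the selected indices), the composition lemma is the appropriate tool, and once the indices are fixed the Gaussian noise is applied to a well-defined count vector.

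For the selection stage I would use the standard Gumbel-to-Exponential-Mechanism equivalence: adding Gumbel noise with scale $b$ and taking arg-max samples an index with probability proportional to $\exp(h_u/b)$. Since counts form a monotonic utility function with $\ell_\infty$-sensitivity $1$, a direct calculation of the log-ratio of densities shows that a single arg-max step is $\epsilon_{\mathrm{BR}}$-bounded range with $\epsilon_{\mathrm{BR}} = 1/b = 2/\tau$. Invoking the Cesar-Rogers bound cited in the text, which converts $\epsilon_{\mathrm{BR}}$-bounded range into $\tfrac18 \epsilon_{\mathrm{BR}}^2$-zCDP, each selection is $\tfrac{1}{2\tau^2}$-zCDP. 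The top-$k$ Gumbel arg-max is well known (the ``peeling'' interpretation) to be equivalent in distribution to $k$ sequentially applied Exponential Mechanisms on progressively smaller supports, so zCDP composition of these $k$ selections yields $\tfrac{k}{2\tau^2}$-zCDP for the entire selection stage.

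For the counting stage, condition on the selected index set $\{u_1,\dots,u_k\}$. Because the histogram has $\ell_\infty$-sensitivity $1$, the vector $(h_{u_1},\dots,h_{u_k})$ has $\ell_2$-sensitivity at most $\sqrt{k}$ (even in the unrestricted $\ell_0$ case, since each coordinate can change by at most $1$). Adding independent $\Normal{0}{\tau^2}$ noise coordinate-wise is therefore a Gaussian mechanism with $\ell_2$-sensitivity $\sqrt{k}$ and noise scale $\tau$, which Bun-Steinke's Gaussian mechanism analysis makes $\tfrac{k}{2\tau^2}$-zCDP. Applying Lemma~\ref{lem:zCDP_Comp} to the two adaptive stages gives $\tfrac{k}{2\tau^2}+\tfrac{k}{2\tau^2}=\tfrac{k}{\tau^2}$-zCDP. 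The $(\epsilon,\delta')$-DP statement follows immediately by plugging $\rho = k/\tau^2$ into $\epsilon(\rho,\delta') = \rho + 2\sqrt{\rho\ln(1/\delta')}$ from \eqref{eq:epsilon}.

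The main obstacle is getting the Gumbel-scale to bounded-range constant right: with the Gumbel scale $\tau/2$ used in the algorithm, the naive classical Exponential-Mechanism DP parameter would be $4/\tau$, but the tighter bounded-range constant of $2/\tau$ (using monotonicity of counts) is exactly what produces the claimed factor. Beyond that, the proof is routine bookkeeping: peeling equivalence for top-$k$ Gumbel, Gaussian $\ell_2$-sensitivity, zCDP composition, and the standard zCDP-to-DP conversion.
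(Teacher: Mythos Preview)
The paper does not give its own proof of this statement---it simply attributes the result to \cite{CesarRo20}---but your argument is correct and matches exactly the reasoning the paper invokes implicitly elsewhere (see the remark in Section~\ref{sect:meta} that the Exponential Mechanism is $\tfrac18\epsilon^2$-zCDP via bounded range, and the privacy proof for $\sparseGumb{s,k}$, which treats $\knownGumb{k}$ without counts as $\tfrac{k}{2\tau^2}$-zCDP and then composes with the Gaussian counts). Your identification of the one-sided (add/remove) neighboring relation as the source of the factor-of-two tightening in the bounded-range parameter (yielding $2/\tau$ rather than $4/\tau$) is precisely the step needed to recover the stated constant.
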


\begin{algorithm}[h!]
	\caption{$\unkGumb{k,\bar{k} }$; Unknown domain mechanism with access to $\bar{k}+1 \geq k$ elements} \label{algo:uGumb}
	\begin{algorithmic}
		\State \textbf{Input:} Histogram $\bbh$; outcomes $k$, cut off at $\bk+1$, and noise scale $\tau$.
		\State \textbf{Output:} Ordered set of indices and counts.
		\State Sort $h_{(1)} \geq h_{(2)} \geq \cdots \geq h_{(\bar{k} + 1)}$, with respective labels $i_{(1)}, \cdots, i_{(\bk+1)}$.
		\State Set $h_{\bot} = h_{(\bk+1)} + 1 + \tau \ln(1/\delta)$, with label $\bot$.
		\State Set $v_{\bot} = h_{\bot} +  \gum(\tau/2)$, with label $\bot$.
		\State Set $S = \emptyset$
		\For {$j \leq \bk$}
				\If{ $h_{(j)} > h_{(\bar{k} + 1)}$ }
        					\State Add $S \gets S \cup \{ (v_{(j)} = h_{(j)} + \gum(\tau/2), i_{(j)}) \}$.
				\EndIf
		\EndFor
        		\State Sort $S$ by its counts.
        		\State Let $v_{1 },....,v_{j},v_{\bot}$ be the descending list of counts up until $v_{\bot}$, with respective labels $u_1, \cdots, u_j$.
		\If{$j < k$}
			\State Return $\{( \Normal{h_{u_{1}}}{\tau^2}, u_{1} ),...,(  \Normal{h_{u_{j}}}{ \tau^2}, u_{j} ),\bot\}$
		\Else
			\State Return $\{(\Normal{h_{u_{1}}}{\tau^2}, u_{1} ),...,( \Normal{h_{u_{k}}}{\tau^2}, u_{k} )\}$
		\EndIf
	\end{algorithmic}
\end{algorithm}

\begin{theorem}[\citet{DurfeeRo19}] \label{thm:UnkGumbelPrivacy}
Assume $||\bbh - \bbh'||_\infty \leq 1$ for any neighbors $\bbh,\bbh'$. For any $\delta > 0$, $\unkGumb{k,\bar{k} }(\cdot; \tau,\delta)$ is $\left(\frac{k}{\tau^2} + \frac{2}{\tau} \sqrt{k \ln\left( 1/\delta'\right)},\bk\delta + \delta'\right)$-DP.
\end{theorem}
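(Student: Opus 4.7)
My plan is to adapt the framework developed in Section~\ref{sect:revisit} for $\unkGauss{\bk}$ (Theorem~\ref{thm:main_UnkGauss}) to this Gumbel/Exponential-Mechanism setting. The overall strategy is to invoke Lemma~\ref{lem:meta}: fix two neighboring histograms $\bbh^{(0)}, \bbh^{(1)}$, split the outcome space into \emph{good} outcomes (producible under both) and \emph{bad} outcomes (producible under only one), bound the bad outcomes by $\bar{k}\delta$, and exhibit an auxiliary mechanism $A(\cdot; \bbh^{(0)},\bbh^{(1)})$ that matches $\unkGumb{k,\bar{k}}$ on good outcomes and is $(\epsilon, \delta')$-DP with $\epsilon = \frac{k}{\tau^2} + \frac{2}{\tau}\sqrt{k\ln(1/\delta')}$.

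First I would bound the bad outcomes. A bad outcome for $\bbh^{(b)}$ must contain at least one label $i$ that lies in the top-$\bar{k}$ of $\bbh^{(b)}$ but not in the top-$\bar{k}$ of $\bbh^{(1-b)}$; by the $\ell_\infty$-sensitivity assumption this forces $h_i \leq h_{(\bar{k}+1)} + 1$. Paralleling Lemma~\ref{lem:delta_bound}, I would show that under the threshold $h_\bot = h_{(\bar{k}+1)} + 1 + \tau\ln(1/\delta)$, such a label exceeds the noisy threshold with probability at most $\delta$, using the fact that the difference of two independent $\gum(\tau/2)$ variables is Logistic with scale $\tau/2$ (so the upper tail at $\tau\ln(1/\delta)$ is $1/(1+e^{2\ln(1/\delta)}) \leq \delta$). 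Analogously to Lemma~\ref{lem:v_ell0}, the number of uncommon labels between the two top-$\bar{k}$ sets is at most $\bar{k}$, so a union bound gives the total bad-event probability $\bar{k}\delta$.

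Next I would construct the auxiliary mechanism. In the spirit of $\gaussMechBot{\bk}$ (Algorithm~\ref{algo:GaussMech}), the uncommon labels in each dataset are relabeled with a shared set of ``bad'' labels $\{B_\ell\}$, yielding histograms $\bbv^{(0)}, \bbv^{(1)}$ that live on a common label set with $\ell_\infty$-sensitivity $1$ (by an analogue of Lemma~\ref{lem:v_ellinfty}). On $\bbv^{(b)}$, the auxiliary mechanism runs the same Gumbel selection and Gaussian count-release pipeline. Because the selection step is the Exponential Mechanism with Gumbel noise of scale $\tau/2$, bounded-range \cite{DurfeeRo19,CesarRo20} shows each selection contributes $\frac{1}{2\tau^2}$-zCDP, and over $k$ selections this gives $\frac{k}{2\tau^2}$-zCDP. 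Adding $\Normal{0}{\tau^2}$ to each of the $k$ released counts contributes another $\frac{k}{2\tau^2}$-zCDP by Gaussian mechanism and zCDP composition. Summing gives $\frac{k}{\tau^2}$-zCDP for the auxiliary mechanism, which converts to the stated DP bound via \eqref{eq:epsilon}.

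The main obstacle will be verifying that the auxiliary mechanism agrees with $\unkGumb{k,\bar{k}}$ on good outcomes, especially at the boundary where the truncation at $\bot$ cuts the returned list short (fewer than $k$ items exceed the noisy threshold). Unlike the Gaussian case, the Gumbel/Exponential-Mechanism selection interacts subtly with thresholding because the Gumbel-argmax viewpoint counts selections uniformly. I expect to handle this as in the proof of Theorem~\ref{thm:main_UnkGauss}: first handle zero-count/dummy padding using an analogue of Lemma~\ref{lem:dummy}, then argue that truncating at $\bot$ and dropping the tail is a post-processing of the full ranked list, which cannot increase the privacy loss. Once this is established, Lemma~\ref{lem:meta} delivers the theorem.
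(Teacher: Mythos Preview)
The paper does not supply its own proof of this theorem: it is stated in Appendix~\ref{appendix:oldAlgos} as a result cited from \citet{DurfeeRo19}, with no argument given. There is therefore nothing in the paper to compare your proposal against.

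That said, your plan is sound. You are instantiating the very framework the paper develops in Section~\ref{sect:revisit} for $\unkGauss{\bk}$---Lemma~\ref{lem:meta}, the relabeling device of Algorithm~\ref{algo:GaussMech}, and the threshold/tail-bound step of Lemma~\ref{lem:delta_bound}---in the Gumbel setting. The pieces line up: the Logistic tail at $\tau\ln(1/\delta)$ is indeed at most $\delta$; the analogue of Lemma~\ref{lem:v_ellinfty} for the relabeled histogram goes through under the two-sided $\|\cdot\|_\infty\leq 1$ assumption (the one-sided ``WLOG'' in the paper's proof is convenience only, and the symmetric argument works); Gumbel peeling views the truncated ranked list as at most $k$ Exponential-Mechanism calls, each $\tfrac{1}{2\tau^2}$-zCDP by bounded range; and composing with the at-most-$k$ Gaussian count releases yields $\tfrac{k}{\tau^2}$-zCDP, which converts via \eqref{eq:epsilon} to the stated $\epsilon$. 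The union bound over up to $\bar{k}$ uncommon labels gives the $\bar{k}\delta$ term. One minor correction: this last step is \emph{not} an analogue of Lemma~\ref{lem:v_ell0}, which bounds the number of uncommon labels by $\Delta_0$ and is unavailable in the unrestricted-$\ell_0$ setting; you are (correctly) using the trivial bound $\bar{k}$ instead.
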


\end{document}